\def\vec#1{\mathchoice{\mbox{\boldmath$\displaystyle#1$}}
{\mbox{\boldmath$\textstyle#1$}}
{\mbox{\boldmath$\scriptstyle#1$}}
{\mbox{\boldmath$\scriptscriptstyle#1$}}}
\newcommand\dd{{\mathrm d}}
\newcommand\G{\vec G}
\DeclareMathOperator{\Erw}{\mathbb E}
\DeclareMathOperator{\pr}{\mathbb P}
\newcommand\SIGMA{\vec\sigma}
\newtheorem{definition}{Definition}[section]
\newtheorem{claim}[definition]{Claim}
\newtheorem{theorem}[definition]{Theorem}
\newtheorem{lemma}[definition]{Lemma}
\newtheorem{proposition}[definition]{Proposition}
\newtheorem{fact}[definition]{Fact}
\newcommand\cA{\mathcal{A}}
\newcommand\cB{\mathcal{B}}
\newcommand\cC{\mathcal{C}}
\newcommand\cG{\mathcal{G}}
\newcommand\cU{\mathcal{U}}
\newcommand\cH{\mathcal{H}}
\newcommand\cL{\mathcal{L}}
\newcommand\cP{\mathcal{P}}
\newcommand\cY{\mathcal{Y}}
\newcommand\eps{\varepsilon}
\newcommand\ZZ{\mathbf{Z}}
\newcommand{\N}{\mathbb{N}}
\newcommand\FF{\mathbb{F}}
\newcommand{\vecone}{\vec{1}}
\newcommand{\Po}{{\rm Po}}
\newcommand{\Be}{{\rm Be}}
\newcommand\RR{\mathbb{R}}
\newcommand{\whp}{w.h.p.}
\newcommand{\E}{\mathbb{E}}
\renewcommand{\P}{\mathcal{P}}
\newcommand{\rG}{{\vec G}}
\newcommand{\rDelta}{\vec \delta}
\newcommand{\sm}{s_{\max}}
\newcommand{\PSI}{\vec\psi}
\newcommand{\MU}{\vec\mu}
\newcommand{\PP}{\cP^2_*(\Omega)}
\newcommand{\Erdos}{Erd\H{o}s}
\newcommand{\Renyi}{R\'enyi}
\begin{document}

\title{The Mutual information of LDGM codes}

\author{Jan van den Brand}
\author{Nor Jaafari}
\thanks{This work has been submitted to the IEEE for possible publication. Copyright may be transferred without notice, 
after which this version may no longer be accessible.}
\address{Jan van den Brand, {\tt janvdb@kth.se}, KTH Royal Institute of Technology, Stockholm, Sweden.}
\address{Nor Jaafari, {\tt jaafari@math.uni-frankfurt.de}, Goethe University, Mathematics Institute, Frankfurt, Germany.}
\begin{abstract}
We provide matching upper and lower bounds on the mutual information in noisy reconstruction of parity check codes and thereby prove a long-standing conjecture by Montanari [IEEE Transactions on Information Theory 2005]. Besides extending a prior concentration result of Abbe and Montanari [Theory of Computing 2015] to the case of odd check degrees, we precisely determine the conjectured formula for code ensembles of arbitrary degree distribution, thus capturing a broad class of capacity approaching codes.
\end{abstract}

\maketitle

\section{Introduction}
Sparse random binary matrices provide a natural way of encoding messages without exhausting the transmission rate. Let $m$ be a number larger than the blocklength of a message $\xi\in\FF_2^n$. By choosing a random generator matrix $\vec A$ over the field $\FF_2^{m\times n}$,  we obtain a codeword $\vec x$ by simple matrix multiplication $\vec x = \vec A \xi$. Given $\vec A$ and a noisy observation $\vec {\tilde x}$ obtained  from the \emph{binary memoryless symmetric} (BMS) channel, we can most likely recover $\xi$ by solving the system of linear equations. That is, provided $m$ is sufficiently large and the matrix $\vec A$ imposes adequate redundancy averaged over the bits of $\xi$. Properly structured \emph{sparse} random matrices $\vec A$ induce the class of \emph{Low Density Generator Matrix} (LDGM) codes. LDGM codes have been known for many decades, and although we can describe them in very few lines, to date the status of research has hardly advanced, partly due to the fact that while simplifying the encoding and decoding process, sparsification also severely exacerbates the analysis. This is unsurprising but remarkable, as we know that at least for some code constructions these codes perform remarkably well. This paper aims to break the first ground by proving a precise formula for the mutual information of LDGM codes which was previously conjectured by Montanari \cite{montanari}.

\subsection{Optimal codes on graphs}
As may be expected, structured  sparse code ensembles are readily constructed from bipartite graphs. Also known as \emph{tanner graphs} (\emph{factor graphs}), they consist of \emph{variable nodes} representing bits of a signal on the left-hand side and \emph{check nodes} (\emph{factor nodes}) representing parity check equations on the right-hand side. In a more general context factor graphs are used to model \emph{constraint satisfaction problems} (CSPs) by having factor nodes impose constraints on participating variable nodes. In LDGM codes each variable node participates in at least one parity equation via its neighboring check nodes: If a check node constrained to $x_j \in \FF_2$  is adjacent to $k$ bits $\xi_1,\ldots,\xi_k$, then codewords need to satisfy $\xi_1 + \ldots + \xi_k =  x_j$ in $\FF_2$. The intersection of codewords satisfying all parity check equations modeled by the graph form a linear code. 

The crux is that while local interactions are constructed in a simple fashion, the global structure of the code has to perform in a complex interplay, as to ensure efficient coding and decoding. Importantly, the placement and number of check nodes can be chosen in a sophisticated way as to keep encoding and decoding complexity low, while maintaining a sufficient amount of redundancy. In a nutshell these so-called \emph{standard code ensembles} achieve their performance by imposing a specific degree distribution on the nodes of the factor graph. For example, we constrain check nodes to perform a parity check on a constant number $k$ of variable nodes with degree distribution $D$. The mutual information of $(D,k)$-code ensembles has been rigorously studied in previous work. In a pioneering paper \cite{montanari} Montanari derives an upper bound on the mutual information for even $k$ and conjectures the bound to be tight. In subsequent work \cite{CRF} Abbe and Montanari were able to prove the existence of a limit for the mutual information in the same scenario. Our result comprehensively determines this limit for the mutual information of $(D,k)$-code ensembles for both even and odd $k$.

\subsection{The mutual information} In the last decade there has been a critical endeavor to analyze such \emph{standard code ensembles} with respect to error-free decodeability. The mutual information $I(X,Y)$ captures how much information the output of the channel $Y$ contains about the input of the channel $X$ and thus provides the essential measure to quantify information-theoretic limits of decodeability. As it artlessly entails bounds on error probabilities the mutual information proves to be key in many related areas of decoding noisy signals \cite{BDMK,BDMKLZ,BM,LelargeMiolane}, most importantly in the analyses of random linear codes \cite{thebook}. In the setting of BMS channels with noise $\eta\in(0,1/2)$ each bit is independently correctly transmitted with probability $1-\eta$ and flipped with probability $\eta$. Computing the exact mutual information in this configuration is a highly non-trivial task and despite a substantial amount of research \cite{GMU, KM1, KM2, montanari}, one is usually merely able to provide bounds and occasionally tailor formulas to individual scenarios. In a general case analysis for LDGM codes with given variable degrees Montanari \cite{montanari} derives an upper bound on the mutual information, subject to the condition that check degrees satisfy a convexity assumption. He conjectures the bound to be sharp, as it matches explicit formulas from auspicious but non-rigorous calculations in the spin-glass theory of statistical mechanics.

In this paper, building upon the indispensable groundwork by Montanari, we establish the Aizenman-Sims-Starr \cite{ASS} cavity computation for standard code ensembles to derive a matching lower bound and with that prove the conjectured bound to be tight. Furthermore, we introduce a new technique to drop the assumptions on the variable degree distribution and extend the results a comprehensive class of standard code ensembles.

\subsection{Results}
The following theorem proves Montanaris long-standing conjecture from \cite{montanari}. Particularly, it is the first result to precisely determine the mutual information in random LDGM codes with given variable degrees without imposing restrictions on the degree distribution or the magnitude of noise.
\medskip

The predicted formula comes in the form of a stochastic fixed-point equation. To state our Theorem denote by $\cP_0([-1,1])$ the set of probability distributions on $[-1,1]$ with mean zero and let $\pi\in\cP_0([-1,1])$. Fix a degree distribution $D$, a number $k>0$ and let $(\vec\theta_{i,j})_{i\ge 0,1\le j\le k}$ be independently identically distributed samples from $\pi$, and $\vec \gamma$ be chosen from $D$. Write $\Be(\eta)$ for the outcome of a random Bernoulli experiment with parameter $\eta$. Further, let $\Lambda(x)=x\ln x$ for $x\in(0,\infty)$ and $\Lambda(0)=0$. With a sequence $\vec J_1,\vec J_2,\ldots$ of independent copies of $\vec J=1-2\Be(\eta)$, we let
\begin{align*}
\textstyle
\cL(k,D,\eta)=\frac{1}{2}
\Erw \left[ \Lambda\left(
\sum_{\sigma \in \{\pm1\}} \prod_{a=1}^{\vec\gamma}
\left(1 +  \vec J_a \sigma \prod_{j = 1}^{k-1} \vec\theta_{a,j}\right)
\right)\right]
- \frac{k-1}{k}\Erw[\vec\gamma]
\Erw\left[\Lambda \left(
1 + \mathbf{J} \prod_{j=1}^k \pmb{\theta}_{0,j}
\right)
\right].
\end{align*}

\begin{theorem}
\label{thm:Codes}
Let $\cC_n$ be a random LDGM-code with blocklength $n$, variable degree distribution $D$ and check degree $k$. Let $\vec X$ be a message chosen uniformly at random from $\cC_n$. If $\vec Y$ is the message obtained by passing $\vec X$ through a memoryless binary symmetric channel with error probability $\eta>0$ then
\begin{align*}
\lim_{n\to\infty}\frac1n I(\vec X, \vec Y) = -\sup_{\cP_0([-1,1])}\cL(k,D,\eta) +\frac{\Erw[\vec\gamma]}{2k}\left((1 - \eta) \ln (1 - \eta) + \ln 2 + \eta \ln (\eta)\right) + \ln 2.
\end{align*}
\end{theorem}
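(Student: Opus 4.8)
The strategy is to relate the mutual information to the free energy of the posterior Gibbs measure on a random Tanner graph, to compute that free energy by the cavity method, and then to read off the claimed formula. Writing $I(\vec X,\vec Y)=H(\vec Y)-H(\vec Y\mid\vec X)$, using that the channel is memoryless and symmetric (so $H(\vec Y\mid\vec X)$ is the number of channel uses times $-(1-\eta)\ln(1-\eta)-\eta\ln\eta$), and expanding $H(\vec Y)$ through the posterior, one obtains an identity of the form
\[
\tfrac1n I(\vec X,\vec Y)=c(k,D,\eta)+o(1)-\tfrac1n\Erw[\ln Z_n],\qquad Z_n=\sum_{\vec\xi}\ \prod_{a}\psi_a(\vec\xi_{\partial a}\mid\vec Y_a),
\]
with $c(k,D,\eta)$ an explicit constant, $\psi_a$ the normalised channel kernel attached to check $a$, and $Z_n$ the partition function of the posterior on the Tanner graph $\rG_n$ of $\cC_n$. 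Since the node and edge densities of $\rG_n$ are fixed by $D$ and $k$, the theorem reduces to proving $\tfrac1n\Erw[\ln Z_n]\to\sup_{\pi\in\cP_0([-1,1])}\mathcal B(\pi)$, where $\mathcal B(\pi)$ is $\cL(k,D,\eta)$ evaluated with the messages $\vec\theta_{i,j}$ drawn from the trial law $\pi$, up to an $n$-independent additive constant. Indeed $\mathcal B$ is precisely the Bethe free energy density of $Z_n$: the first summand of $\cL$ is the contribution of a degree-$\vec\gamma$ variable node, the second the check/edge correction, the coefficients $\tfrac12$ and $\tfrac{k-1}{k}\Erw[\vec\gamma]$ are dictated by the numbers of nodes and half-edges, and the identities $\Erw[\sum_\sigma\prod_a(1+\vec J_a\sigma\prod_j\vec\theta_{a,j})]=2$ and $\Erw[1+\vec J\prod_j\vec\theta_{0,j}]=1$, valid because $\pi$ has mean zero, together with the appearance of $\Lambda(x)=x\ln x$ in place of $\ln x$, are exactly the planted/Nishimori normalisation; the restriction to mean-zero $\pi$ is in turn forced by the $\ZZ_2$-symmetry of the channel, which makes the empirical law of the posterior magnetisations asymptotically centred. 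Throughout one works with the Bayes-optimal teacher--student model, on which the Nishimori identities hold.

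For the bound $\tfrac1n\Erw[\ln Z_n]\ge\mathcal B(\pi)$ valid for every $\pi\in\cP_0([-1,1])$ --- hence $\ge\sup_\pi\mathcal B(\pi)$, which is Montanari's upper bound on $I$ --- the plan is the Guerra--Franz--Leone interpolation: interpolate between $\rG_n$ and a fully decoupled system built from the trial law $\pi$ (each variable collects independent pseudo-messages from $\pi$, plus a Poisson number of detached checks supplying the edge term) and show the interpolating free energy is monotone in the path parameter. This is in essence Montanari's computation from \cite{montanari}, whose control of the sign of the interpolation derivative exploits the positivity of even powers of overlap-type quantities and therefore requires $k$ to be even. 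For odd $k$ the plan is to replace this crude positivity with a Nishimori-type monotonicity --- on the Bayes-optimal line the relevant two-point correlations are genuinely nonnegative --- or, equivalently, to interpolate first through an auxiliary ensemble that couples the odd-degree model to an even-degree one; this is the step at which the parity of $k$ must be eliminated.

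For the matching bound $\tfrac1n\Erw[\ln Z_n]\le\sup_\pi\mathcal B(\pi)+o(1)$, which is the new lower bound on $I$, the plan is the Aizenman--Sims--Starr scheme \cite{ASS}: write $\tfrac1n\Erw[\ln Z_n]$ as a Ces\`aro average of the increments $\Erw[\ln Z_{n+1}]-\Erw[\ln Z_n]$ and estimate each increment through the usual two-step coupling --- add one fresh variable together with its incident checks in one model, remove the matching $\approx\Erw[\vec\gamma]/k$ checks in the other. Feeding in the local weak limit of $\rG_n$ (a two-type Galton--Watson tree with variable offspring law $D$ and check arity $k$) and the fact that, under the posterior, the pair of effective fields seen across a uniformly random edge converges to a distribution $\pi_\star\in\cP_0([-1,1])$ solving the Belief Propagation fixed-point equation, the increment is bounded above by $\mathcal B(\pi_\star)+o(1)\le\sup_\pi\mathcal B(\pi)+o(1)$. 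The ingredients required here are the standard Bayes-optimal toolkit: Nishimori identities, concentration of $\tfrac1n\ln Z_n$ about its mean --- this is where the Abbe--Montanari concentration result \cite{CRF} must be carried over to odd $k$ --- and enough control on the overlap to guarantee that the limiting edge law is genuinely centred and Belief-Propagation-stationary. Combining the two bounds yields the limit, and with the reduction above, the theorem.

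Two further points complete the programme. To drop Montanari's regularity hypothesis on the variable degree distribution, the plan is to prove the identity first for $D$ of bounded support and then pass to an arbitrary $D$ by truncation, using that both $\tfrac1n\Erw[\ln Z_n]$ and $\sup_\pi\mathcal B(\pi)$ depend continuously on $D$ for the Wasserstein metric --- a second-moment bound on $\vec\gamma$ supplying the uniform integrability --- together with an explicit coupling of the Tanner graphs of $D$ and its truncations. The main obstacle is that, unlike the even case, there is no sign available for free: the matching of the interpolation and cavity bounds rests entirely on extracting enough replica symmetry (concentration of the overlap, effective uniqueness of the physical Belief Propagation fixed point inside $\cP_0([-1,1])$) from the Bayes-optimality of the channel, and on pushing the Abbe--Montanari concentration machinery through the odd-degree regime; securing these structural facts is expected to be the technical heart of the proof.
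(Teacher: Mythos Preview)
Your high-level architecture is the same as the paper's: reduce $I$ to the free energy via Nishimori, upper-bound the free energy by an Aizenman--Sims--Starr cavity increment, lower-bound it by a Guerra--Toninelli interpolation, and identify both ends with the Bethe functional $\cL$. Where you diverge from the paper is precisely at the two points you yourself flag as the technical heart, and at both you are missing the idea that actually makes the argument close.

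For the ASS step you propose to feed in the local weak limit and the assertion that the edge-marginal law converges to a BP fixed point $\pi_\star$. The paper does not prove, and does not need, any such convergence; indeed, establishing that the posterior marginals converge to a BP fixed point is essentially the replica-symmetric conjecture itself and is not available as an input. What the paper does instead is \emph{pin}: perturb the Gibbs measure by conditioning $O(1)$ randomly chosen coordinates to their values under an independent posterior sample. The Pinning Lemma (from \cite{CKPZ}) guarantees that the perturbed measure is $(\eps,\ell)$-symmetric with probability $1-\eps$, i.e.\ that joint marginals of random $\ell$-tuples approximately factorise. This is exactly what is required to evaluate the cavity increment: once the $k$ cavity fields are approximately independent, the increment is $\cB(D,\pi_{\G^0})+o_T(1)$ for the \emph{empirical} marginal law $\pi_{\G^0}$, which one then bounds by $\sup_\pi \cB(D,\pi)$ without ever identifying a limit. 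Your route would need a genuine decorrelation statement that is not known.

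For the interpolation at odd $k$ you suggest either a ``Nishimori-type monotonicity'' or coupling to an auxiliary even-degree ensemble. Neither is what the paper does, and neither is made concrete. The paper instead verifies a structural inequality \textbf{POS} on the weight functions (valid for all $k$: for odd moments the random sign $\vec s$ kills the term, for even moments everything is nonnegative), and then uses pinning \emph{again}, this time inside the interpolation, to make the derivative $\partial_t\Phi$ match the \textbf{POS} expression up to $o_T(1)$. Without pinning the Gibbs samples entering the derivative are correlated and the sign is not controlled; this is exactly why Montanari's original argument stalled at odd $k$. So the missing ingredient in both halves of your plan is the same: the pinning perturbation that manufactures approximate product structure and thereby replaces both your appeal to a BP limit and your unspecified odd-$k$ monotonicity.

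A minor point: for handling the degree distribution the paper does not truncate and pass to a Wasserstein limit; it works with a multi-Poisson approximation $\G^*_{\alpha,\beta,D}$ of the configuration model (layers of $\Po(\beta)$ checks) and couples it to $\G^*_D$ with $O_{\alpha,\beta}(1)$ rewirings uniformly in $n$. This is closer in spirit to Montanari's multi-Poisson ensembles than to a truncation argument.
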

The argument we develop to prove the mutual information in standard graph ensembles is reasonably general. We expect it to extend to LDPC code ensembles and similar problems of conditional random fields. Possibly our approach may facilitate simpler proofs to the analyses of spatially coupled codes. It is important to mention that while spatial coupling was invented to engineer error-correcting codes, the technique is now applied beyond the context of coding theory \cite{AHMU,HMU}. 

In Section \ref{sec:proofstrat}, we will therefore actually prove a general version of the main theorem to include a broader class of factor graph ensembles encompassing many problems related to conditional random fields. Theorem \ref{thm:Codes} deals with its most natural application being the problem of noisy reconstruction in standard LDGM code ensembles.

\subsection{Background and related work}
Following Shannon's work, early codes based on algebraic constructions were analyzed, only to realize that these codes do not saturate the capacity limit by a far margin. Progress stalled in the subsequent decades until the early 90s. The introduction of Turbo codes \cite{BGT} reignited interest, as Turbo codes were able to deliver performance close to the Shannon limit. Ensuing generalizations uncovered the power of codes based on graphs as Low Density Parity Check (LDPC) codes, originally put forward in Robert Gallagher's PhD thesis in 1962, were rediscovered. Having been neglected for quite some time,  
LDPC codes reemerged to a broad audience, when a series of papers by Luby, Mitzenmacher, Shokrollahi, Spielman and Steman \cite{LMSS1,LMSS2,LMSS3,LMSSS} followed up by work of Richardson, Shokrollahi and Urbanke \cite{RSU, RU} proved that Gallagher's parity check codes perform at rates close to Shannon capacity and can be designed such that efficient decoding is possible. 
Crucially, in constructing codes that approach capacity one has to keep in mind the performance of the decoder. Recently, parity check codes have become widely used and successfully implemented in the context of satellite communication, WiFi transmission and data protocols. At the same time significant progress has been made with regard to the design of optimal codes, that not only achieve capacity but also allow for efficient coding and decoding. This is usually achieved by prescribing a degree distribution on the variable nodes. 
Driven by the success of spatial coupling, the analysis of standard code ensembles has gathered tremendous momentum. Nonetheless, the construction and analysis of spatially coupled codes in particular is notoriously complicated.

In the analysis of linear codes, Sourlas' work \cite{Sourlas1, Sourlas2, Sourlas3} dating back to the early 90s provided a crucial link to the physics theory of spin systems that would betoken the path to numerous fruitful results. Many methods therefore base on developments in the rigorous theory of mean field spin glasses. So far the most promising analyses of standard LDGM code ensembles utilize Guerra and Toninelli's interpolation method \cite{GT} to provide general bounds on the inference threshold for graphical models. Subject to the condition that the generating function of the left degree distribution is convex, Abbe and Montanari \cite{CRF} show that the entropy of the transmitted message conditional to the received one concentrates around a well defined deterministic limit. In a previous work by Montanari \cite{montanari} the interpolation method was employed to lower bound the entropy by an asymptotic expression derived in \cite{FranzLeoneMontanariRicci}, which was obtained using heuristic statistical mechanics calculations. In a recent paper by Coja-Oghlan, Krkazala, Perkins and Zdeborov\'a \cite{CKPZ} the entropy was derived for \Erdos-\Renyi~ type LDGM graph ensembles. However, these are merely fragmentarily applicable for error-correcting codes, because the corresponding adjacency matrices are unstructured and may exhibit empty columns. 

 
\section{Proof Outline}
In this section we outline the pillars of our proof. To derive the mutual information in noisy observations for a broad class of random factor graph models, we make slightly more general assumptions than needed for codes. As such our result includes a number of problems related to conditional random fields. The \emph{teacher-student model} can be viewed as a natural generalization of the retrieval problem in BMS-distorted LDGM codes. Instead of $\{\pm 1\}$ we consider an arbitrary finite set $\Omega$ of possible bit values and generalize the parity check constraints to a finite set $\Psi$ of positive weight functions $\psi:\Omega^k\to(0,2)$ for some fixed $k>1$. At the basis of $\Psi$ is a probability space $(\Psi, p)$ with a prior distribution $p$ on the weight functions. We write $\PSI$ for a random choice from $p$. Further, we specify unweighted factor graphs $G=(V,F,(\partial a)_{a\in F})$ by their bipartition $V, F$ into variable nodes $V$ and check nodes $F$ as well as their neighborhood structure $(\partial a)_{a\in F}$, $\partial a\in V^k$. A (weighted) factor graph additionally carries weight functions $\psi_a:\Omega^{\partial a}\to(0,2)$ on each factor node $a\in F$ that locally evaluate signals (\emph{assignments}) $\sigma\in\Omega^V$ on the variable nodes of the graph $G=(V,F,(\partial a)_{a\in F},(\psi_a)_{a\in F})$. For a set of variable nodes of size $n$ we write $V=V_n$ and omit the index if it is apparent from the context. Commonly, if $\ell$ is an integer, we write $[\ell]=\{1,\ldots,\ell\}$ and identify $V_n$ with the set $[n]$. If $a$ is a check node we denote by $\partial a$ its neighborhood in $V^k$ and by $(\partial_1 a,\ldots,\partial_k a)$ the vector of its $k$ neighbors in ascending order. To exert supplementary methods and results from \cite{CKPZ}, throughout the paper we require $p$ to satisfy two assumptions \textbf{SYM} and \textbf{POS} that are on the one hand easily verified for the class of LDGM codes and a number of related applications and on the other hand directly imply the assumptions from \cite{CKPZ}. To this end let $\cP(\Omega)$ denote the set of probability distributions on $\Omega$. Further, let $\cP^2_*(\Omega)$ denote the set of all probability measures on $\cP(\Omega)$ whose mean corresponds to the uniform distribution.
\begin{description}
\item[SYM] Let $\xi=|\Omega|^{-k}\sum_{\tau\in\Omega^k}\Erw[\PSI(\tau)]$. For all $\sigma\in\Omega^k$ we have $\Erw[\PSI(\sigma)]=\xi$.
\item[POS] For all $\pi,\pi'\in\cP_*^2(\Omega)$ the following is true. With $\MU_1,\MU_2,\ldots$ chosen from $\pi$,
			$\MU_1',\MU_2',\ldots$ chosen from $\pi'$ and $\PSI\in\Psi$ chosen from $p$, all mutually independent,  we have
	\begin{align*}
	&\textstyle\Erw\left[\Lambda\left(\sum_{\tau\in\Omega^k}\PSI(\tau)\prod_{i=1}^ k\MU_i(\tau_i)\right)+(k-1)\Lambda\left(\sum_{\tau\in\Omega^k}\PSI(\tau)\prod_{i=1}^k \MU_i'(\tau_i)\right)\right.\\
	&\textstyle\qquad\qquad\qquad\qquad\qquad\qquad\qquad\qquad\quad\left.-k\Lambda\left(\sum_{\tau\in\Omega^k}\PSI(\tau)\MU_1(\tau_1)\prod_{i=2}^k\MU_i'(\tau_i)\right)\right]\ge 0.
	\end{align*}
\end{description}
Linear codes satisfy both \textbf{SYM} and \textbf{POS} as we will see in Section \ref{sec:proofstrat}. The general story now goes as follows. The teacher chooses a ground truth $\SIGMA^*\in\Omega^V$ uniformly at random that he finds himself unable to directly convey to his students. He uses the teacher-student model in which he may set up a random $(D,k)$-graph $\G^*=\G^*(n,p,\SIGMA^*)$ on $n$ variable nodes and $M=M(D,k,n)$ check nodes, where parity checks are chosen proportionally to the local evaluation, that is 
\begin{align*}
\pr[\psi_a = \psi]\propto p(\psi)\psi(\SIGMA^*(x_1),\ldots,\SIGMA^*(x_k))\quad \text{ for } x_1,\ldots,x_k\in\partial a \subset V , \psi\in \Psi.
\end{align*}
The students get to see the random graph $\G^*$ but not the ground truth and stand before the task of deciphering as much information about $\SIGMA^*$ as possible. Thus the limit on the amount of deductable information is quantified by the mutual information $I(\SIGMA^*,\G^*)$. 

\begin{theorem}
\label{thm:mainInfThm}
With $\vec\gamma$ chosen from $D$, $\vec\psi_1,\vec\psi_2,\ldots\in\Psi$ chosen from $p$, $\vec\mu_1^{(\pi)},\vec\mu_2^{(\pi)},\ldots$ chosen from $\pi\in\cP_*^2(\Omega)$ and $\vec h_1,\vec h_2,\ldots\in[k]$ chosen uniformly at random, all mutually independent let
\begin{align*}
&\textstyle\cB(D,\pi)=\frac{1}{|\Omega|}\Erw\left[\xi^{-\vec \gamma}\Lambda\left(\sum_{\sigma\in\Omega}\prod_{i=1}^{\vec\gamma}\sum_{\tau\in\Omega^k}\vecone\{\tau_{\vec h_i}=\sigma\}\PSI_i(\tau)\prod_{j\neq \vec h_i}\mu^{(\pi)}_{ki+j}(\tau_j)\right)\right]\\
&\textstyle\qquad\qquad\quad-\frac{k-1}{k\xi}\Erw[\vec\gamma]\Erw\left[\Lambda\left(\sum_{\tau\in\Omega^k}\vec\psi(\tau)\prod_{j=1}^k\vec\mu_j^{(\pi)}(\tau_j)\right)\right].\nonumber
\end{align*}
Let $(\SIGMA^*,\G_D^*)$ be an assignment/factor graph pair from the teacher-student model. Then
\begin{align*}
\lim_{n\to\infty}\frac{1}{n}I(\SIGMA^*,\G_D^*) = - \sup_{\pi\in\PP}\cB(D,\pi) + \ln |\Omega| + \frac{\Erw[\vec{\gamma}]}{k\xi|\Omega|^k}\sum_{\tau\in\Omega^k}\Erw[\Lambda(\PSI(\tau))].
\end{align*}
\end{theorem}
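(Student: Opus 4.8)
The plan is to express the mutual information $I(\SIGMA^*,\G_D^*)$ through entropy, reducing it to a computation of the free energy of the teacher-student (planted) model. Since $\SIGMA^*$ is uniform on $\Omega^V$ and $\G_D^*$ is obtained by the tilted construction, the standard Nishimori/Bayes-optimality identity gives $I(\SIGMA^*,\G_D^*) = H(\SIGMA^*) - H(\SIGMA^*\mid\G_D^*) = n\ln|\Omega| - \Erw[\ln Z(\G_D^*)] + (\text{normalization terms})$, where $Z(G)=\sum_{\sigma\in\Omega^V}\prod_{a\in F}\psi_a(\sigma(\partial a))$ is the partition function and the additive correction comes from the ratio between the planted and the null (uniform) factor-graph distributions. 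A short computation, using \textbf{SYM} to evaluate $\Erw[\PSI(\tau)]=\xi$ uniformly, rewrites this correction as $\tfrac{\Erw[\vec\gamma]}{k\xi|\Omega|^k}\sum_{\tau}\Erw[\Lambda(\PSI(\tau))] + o(n)$ after dividing by $n$; the number of checks concentrates at $M\sim\tfrac{\Erw[\vec\gamma]}{k}n$. Thus the theorem is equivalent to the free-energy formula $\lim_{n\to\infty}\tfrac1n\Erw[\ln Z(\G_D^*)] = \sup_{\pi\in\PP}\cB(D,\pi)$.

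The core of the argument is then the two-sided bound on the free energy. For the upper bound I would invoke the Guerra--Toninelli interpolation method: interpolate between the random factor graph and a decoupled model governed by a distribution $\pi\in\PP$ of local marginals (a "Bethe state"), show the derivative along the interpolation path has a sign controlled by assumption \textbf{POS} (this is precisely the convexity inequality that \textbf{POS} packages), and conclude $\tfrac1n\Erw[\ln Z]\le \sup_\pi\cB(D,\pi)+o(1)$. For the matching lower bound I would run the Aizenman--Sims--Starr scheme: write $\tfrac1n\Erw[\ln Z_n]$ as a telescoping sum of increments $\Erw[\ln Z_{n+1}]-\Erw[\ln Z_n]$, couple the size-$n$ and size-$(n+1)$ instances by adding one variable node of degree $\vec\gamma$ together with the $\vec\gamma/k$ extra check nodes it induces, and evaluate each increment as the difference of two expectations: the "variable" term (the new variable node seeing $\vec\gamma$ check messages) and the "check" term (the newly added checks). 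Passing to the limit one shows these increments converge to $\cB(D,\pi_n)$ evaluated at the empirical marginal distribution $\pi_n$ of the model, which lies in $\PP$ by \textbf{SYM}; hence $\liminf\tfrac1n\Erw[\ln Z_n]\ge\inf_n\cB(D,\pi_n)\ge \sup_\pi\cB(D,\pi)$ would be the wrong direction, so instead one argues that the limit of the increments is at least $\sup_\pi\cB(D,\pi)$ using the fact that the Bethe functional is stationary at the true marginals. The key tool enabling the ASS computation to close is that, by Bayes optimality, the planted model satisfies the Nishimori identities, so that the joint law of a bounded number of coordinates under the Gibbs measure is described by a product of random marginals with distribution in $\cP_*^2(\Omega)$; this is exactly where the supremum over $\PP$ comes from and why \textbf{SYM} is needed.

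The main obstacle I anticipate is twofold. First, the combinatorial bookkeeping for general degree distribution $D$ (rather than a fixed or Poisson left degree) in the ASS increment: adding a single variable node changes the number of check nodes by a random $\vec\gamma/k$ amount, so one must handle the induced fluctuations and the fact that $\vec\gamma/k$ need not be an integer by an appropriate averaging/coupling of the graph ensembles at sizes $n$ and $n+1$; this is the "new technique to drop the assumptions on the variable degree distribution" advertised in the introduction, and making the two ensembles genuinely comparable up to $o(n)$ error is delicate. Second, establishing that the Gibbs marginals of the planted model are asymptotically distributed according to some $\pi\in\PP$ and that the empirical "profile" is essentially deterministic — i.e. the requisite concentration of $\ln Z$ around its mean and the convergence of local marginal statistics — which is where we lean on \cite{CKPZ}: assumptions \textbf{SYM} and \textbf{POS} were tailored precisely so that the hypotheses of the relevant results there (concentration, the contiguity between planted and null models, and the variational characterization of the Bethe free energy) are met, so much of this step is citation plus verification that the general factor-graph setup of Section \ref{sec:proofstrat} falls within their framework. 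Once both bounds are in place at the level of $\tfrac1n\Erw[\ln Z]$, self-averaging (Azuma on the vertex-exposure martingale, again justified by \textbf{SYM}/\textbf{POS}) upgrades convergence in expectation to the stated limit, and translating back through the first paragraph's identity yields Theorem \ref{thm:mainInfThm}.
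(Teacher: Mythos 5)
Your first paragraph is the same reduction the paper uses: by the Nishimori identity, $I(\SIGMA^*,\G_D^*)=n\ln|\Omega|-\Erw[H(\mu_{\G^*})]$, the conditional entropy equals $\Erw[\ln Z(\G_D^*)]$ minus the expected log-weight of the planted assignment, and \textbf{SYM} turns the latter into the stated correction term; so the theorem indeed reduces to $\lim_n\frac1n\Erw[\ln Z(\G_D^*)]=\sup_{\pi\in\PP}\cB(D,\pi)$. The gap is in your second paragraph: you have assigned the two methods to the wrong directions, and the patch you propose for the resulting mismatch does not work. The Guerra--Toninelli interpolation with \textbf{POS} produces, for each \emph{fixed} $\pi\in\PP$, the inequality $\frac1n\Erw[\ln Z]\ge\cB(D,\pi)+o(1)$ (the derivative along the path is nonnegative and the $t=0$ endpoint has free energy $n\cB(D,\pi)+o(n)$); taking the supremum over $\pi$ gives the \emph{lower} bound on the free energy, i.e.\ the \emph{upper} bound on the mutual information. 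It cannot give $\frac1n\Erw[\ln Z]\le\sup_\pi\cB$: an inequality valid for every $\pi$ individually could only yield $\le\inf_\pi\cB$. Conversely, the Aizenman--Sims--Starr increment is evaluated at the one particular $\pi$ given by the pinned, regularized empirical Gibbs marginals $\pi_{\G^0}$, so it is bounded \emph{above} by $\cB(D,\pi_{\G^0})+o_T(1)\le\sup_{\pi\in\PP}\cB(D,\pi)+o_T(1)$; telescoping gives the \emph{upper} bound on the free energy, i.e.\ the \emph{lower} bound on the mutual information. You notice the tension yourself (``would be the wrong direction'') but then assert that the increments are at least $\sup_\pi\cB$ ``using the fact that the Bethe functional is stationary at the true marginals.'' No such argument is available: showing that the empirical marginals asymptotically attain the supremum is tantamount to the statement being proved, and the paper never argues this way. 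With the assignments swapped back, your architecture coincides with the paper's (ASS with pinning and a coupling of the $n$- and $(n+1)$-vertex configuration models through a common subgraph $\G^0$ yields Proposition \ref{prop:strategy_lowerbound}; the layered interpolation yields Proposition \ref{prop:strategy_upperbound}). A minor further point: no Azuma/self-averaging step is needed at the end, since $\frac1n I(\SIGMA^*,\G_D^*)$ is already deterministic and both bounds are proved directly for $\Erw[\ln Z]$.
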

The mutual information in the teacher-student model has been previously derived for \Erdos-\Renyi~ type LDGM graph ensembles in \cite{CKPZ}. We include standard $(D,k)$-graph ensembles $\G^*_D$ by a delicate extension of their techniques to a modified Poisson approximation inspired by Montanaris \emph{Multi-Poisson ensembles} \cite{montanari}. The technical proofs involved will be carried out for arbitrarily precise approximations $\G^*_{\alpha,\beta,D}$ of random $(D,k)$-graphs $\G^*_D$ and then bridged to the original model by way of a coupling. The following Proposition establishes the precision of our approximative model.
\begin{proposition}
\label{prop:goodcoupling}
Let $\rG_n = \rG^*_{\alpha,\beta,D}$ and $\rG'_n = \G^*_{D}$. There is a bounded function $f:[0,1)\times (0,\infty)\to\RR$ and a coupling of $\rG_n$ and $\rG'_n$, such that if we add the missing $O(\alpha n)$ check nodes in $\G_n$ the expected number of check nodes that have a different neighborhood in $\G_n'$ is $O(f(\alpha,\beta))$.
\end{proposition}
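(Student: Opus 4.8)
\emph{Proof plan.} Given a neighbourhood $\partial a$, the weight $\psi_a$ is drawn, in both ensembles, from the same conditional law $\psi\mapsto p(\psi)\psi(\SIGMA^*|_{\partial a})/Z_{\partial a}$; so once a check node carries the same neighbourhood in $\rG_n$ and $\rG'_n$ we may draw its weight jointly from this common law and the two checks agree completely. Hence it suffices to couple the (possibly $\SIGMA^*$-reweighted) unweighted neighbourhood structures of the multi-Poisson model $\G^*_{\alpha,\beta,D}$ and the exact $(D,k)$-model $\G^*_D$, and the statement becomes a purely combinatorial comparison.

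The plan is to build both graphs on one probability space in two rounds: first couple the variable-degree data, then, conditionally on it, couple the pairings that produce the check neighbourhoods. In the multi-Poisson construction the degree of a variable is assembled from independent Poisson ``layers'', with $\beta$ governing their granularity and $\alpha$ a truncation level above which attached checks are discarded; these discarded checks are exactly the $O(\alpha n)$ ``missing'' ones. Using the common uniform-variable / inverse-CDF presentation of the two degree laws I would couple the i.i.d.\ $D$-sequence $(\gamma_v)_v$ of $\G^*_D$ with the multi-Poisson sequence $(\gamma'_v)_v$ maximally per coordinate, so that $\gamma'_v=\gamma_v$ unless $\gamma_v$ falls in the truncated tail or a layer of $v$ misses its integer target. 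The first source of discrepancy is \emph{undone} by the instruction to re-insert the $O(\alpha n)$ omitted checks: afterwards those variables again have degree $\gamma_v$. The second source, the layer fluctuations, is controlled by a standard coupling of a Poisson variable with its parameter, summed over the layers, and affects only boundedly many variables in expectation, with a bound depending on $\beta$.

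In the second round, conditionally on the degree data, the two pairing (configuration-type) models have identical law on the event that the degree sequences coincide, so I use the identity coupling there. Off that event I expose the two pairings greedily, matching first all half-edges incident to variables with $\gamma'_v=\gamma_v$, and re-inserting the $O(\alpha n)$ checks against the as-yet-unmatched half-edges of $\G^*_D$ (these are exchangeable, so this creates no new long-range mismatch). A check node then receives different neighbourhoods in the two graphs only if it is incident to one of the few variables whose degrees genuinely differ. Consequently the expected number of discrepant check nodes is at most $\Erw[\sum_{v:\gamma'_v\ne\gamma_v}(\gamma_v+\gamma'_v)]$, which by independence and the per-coordinate estimates of the first round is $O(f(\alpha,\beta))$ for a suitable bounded $f$.

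\emph{Main obstacle.} The crux is to show this last expectation is bounded \emph{uniformly in $n$}, i.e.\ that after undoing the $\alpha$-truncation only boundedly many vertices carry a wrong degree. This forces one to analyse precisely how the layer decomposition reproduces $D$ and to exploit the thinning/superposition structure of the Poisson layers, so that the residual per-vertex mismatch is governed by the layer granularity $\beta$ rather than leaving an $\Theta(n)$-fraction of the checks wrong. The remaining work — estimating the layer-fluctuation term in terms of $\beta$, justifying the bound $O(\alpha n)$ on the number of omitted checks via $\Erw[\vec\gamma\,\vecone\{\vec\gamma>T(\alpha)\}]\le\alpha$, and checking that the greedy re-insertion keeps the coupling intact — is routine.
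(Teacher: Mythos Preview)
Your proposal rests on a misreading of the multi-Poisson construction. In Definition~\ref{def:abApprox} the parameter $\alpha$ does \emph{not} truncate the tail of the degree distribution $D$; both $\G^*_D$ and $\G^*_{\alpha,\beta,D}$ start from the \emph{same} target degree sequence $d$ (a random $D$-partition of $[n]$). What $\alpha$ controls is the number of layers $s_{\max}=\lfloor(1-\alpha)\beta^{-1}k^{-1}\sum_x d(x)\rfloor$, i.e.\ only a $(1-\alpha)$-fraction of the check nodes are ever generated; this is what keeps $\sum_x\vec\delta_s(x)\ge\Omega(\alpha n)$ throughout. Likewise, the degree of a variable is not ``assembled from independent Poisson layers'': $\beta$ governs how many \emph{check nodes} appear per layer, not per-vertex degrees. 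So your Round~1 (coupling degree sequences coordinate-wise, interpreting the $O(\alpha n)$ missing checks as a degree-tail truncation) is aimed at the wrong object. More fatally, your Round~2 hinges on the claim that, conditional on equal realised degree sequences, the two pairing laws coincide. They do not: the exact procedure of Definition~\ref{exactProcedure} updates $\vec\delta$ after every socket draw, whereas the approximate procedure draws all $k\vec m_s$ sockets of layer $s$ i.i.d.\ from the same $\vec\nu_s$ and only then updates. Even when the final degrees match, the induced pairings have different laws, so the ``identity coupling'' is invalid.

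The paper's argument is instead a sequential coupling of the socket draws. If at some stage the two processes disagree on $C$ check nodes, the residual vectors $\vec\delta_s$ differ in at most $Ck$ entries, and Fact~\ref{deltaTVlemma} gives $\|\vec\nu_s-\vec\nu'_s\|_{TV}=O(C/(\alpha n))$. By the coupling lemma the next socket can be drawn identically except with probability $O(C/(\alpha n))$, so the run length until the $(i{+}1)$st discrepancy is geometric with parameter $p_i=O(i/(\alpha n))$. Since the total number of sockets is $O(n)$, rescaling by $\alpha n$ turns these into exponentials with parameters $O(i)$ that must sum to $\Omega((1-\alpha)/\alpha)$; hence the final number $\vec C_F$ of discrepant checks has a law depending only on $\alpha,\beta$ and not on $n$. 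This TV-driven inductive coupling of the generation process, rather than a static degree-then-pairing decomposition, is the mechanism you need.
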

In other words, we can obtain a random $(D,k)$-graph $\G^*_D$ by first generating a Multi-Poisson approximation $\G^*_{\alpha,\beta,D}$ and then rewiring a small number of edges. We prove Proposition \ref{prop:goodcoupling} in Section \ref{sec:proofstrat} by coupling $\G^*_{\alpha,\beta,D}$ to the exact instance $\G^*_D$ chosen from the configuration model. We track the incremental influence of small perturbations during the construction process and prove that the approximation error is negligible. The coupling is similar to the procedure from \cite{montanari}, yet by introducing the additional approximation parameter $\beta$ we achieve an error that is asymptotically independent of $n$.

To outline the proof of Theorem \ref{thm:mainInfThm} suppose that $(\SIGMA^*,\G^*_D)$ is a jointly distributed tuple from the teacher-student model, where the underlying factor graph is a random $(D,k)$-graph. As the mutual information can be expressed by the marginal entropies, we can write 
\begin{align*}
I(\SIGMA^*,\G_D^*)=H(\SIGMA^*)-H(\SIGMA^*|\G_D^*).
\end{align*}
While $H(\SIGMA^*)$ is easily calculated the derivation of $\cH=H(\SIGMA^*|\G_D^*)$ requires a more sophisticated approach, making use of several methods from the physicist's toolbox for sparse mean-field spin-glasses. Apart from inherent properties of the teacher-student model we will prove and make use of the so called \emph{Nishimori property} from physics. This property allows us to identify the reweighed a posteriori measure induced by the graph, commonly named the \emph{Gibbs measure} $\mu_{\G_D^*}$, with samples from the the ground truth. Results from \cite{CKPZ} imply that given the Nishimori property we can perform slight perturbations on $\mu_{\G_D^*}$ that without significant impact on the mutual information allow us to factorize its marginals. This novel \emph{pinning technique} is the crucial new ingredient that facilitates the proof of Montanaris conjecture.

Therewith we have layed a foundation to derive an upper bound on the conditional entropy, i.e., a lower bound on the mutual information, by means of the so-called \emph{Aizenman-Sims-Starr} scheme. The proof is carried out in Section \ref{sec:lowerbound}. In essence the procedure boils down to estimating the change in entropy if we go from a model with $n$ variable nodes to one with $n+1$ variable nodes (as indicated in Figure \ref{fig:ass}). 

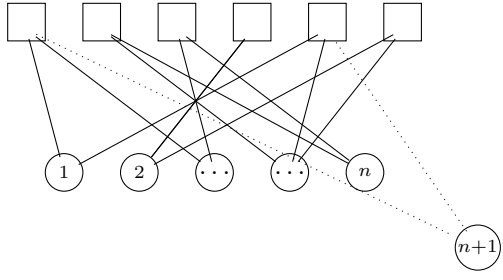
\begin{figure}[ht]
\begin{tikzpicture}

\draw (0.25,2.75) rectangle (0.75,3.25) (0.5,2.9) node (factor1) {};
\draw (1.25,2.75) rectangle (1.75,3.25) (1.5,2.9) node (factor2) {};
\draw (2.25,2.75) rectangle (2.75,3.25) (2.5,2.9) node (factor3) {};
\draw (3.25,2.75) rectangle (3.75,3.25) (3.5,2.9) node (factor4) {};
\draw (4.25,2.75) rectangle (4.75,3.25) (4.5,2.9) node (factor5) {};
\draw (5.25,2.75) rectangle (5.75,3.25) (5.5,2.9) node (factor6) {};

\draw (1,1) circle (0.25) node (var1) {$\scriptstyle 1$};
\draw (2,1) circle (0.25) node (var2) {$\scriptstyle 2$};
\draw (3,1) circle (0.25) node (var3) {$\ldots$};
\draw (4,1) circle (0.25) node (var4) {$\ldots$};
\draw (5,1) circle (0.25) node (var5) {$\scriptstyle n$};
\draw (6.5,0) circle (0.3) node (var6) {$\scriptstyle n+1$};

\draw (var2) -- (factor6);
\draw (var4) -- (factor6);

\draw (var1) -- (factor5);
\draw (var4) -- (factor5);

\draw (var2) -- (factor4);
\draw (var2) -- (factor4);

\draw (var3) -- (factor3);
\draw (var5) -- (factor3);

\draw (var4) -- (factor2);
\draw (var5) -- (factor2);

\draw (var1) -- (factor1);
\draw (var3) -- (factor1);

\draw [dotted] (var6) -- (factor5);
\draw [dotted] (var6) -- (factor1);
\end{tikzpicture}
    \caption{A change of entropy is induced by introducing another variable into the random factor graph model.}
    \label{fig:ass}
\end{figure}

Denote by $\Delta_{\cH}(n)$ the change in conditional entropy when going from $n$ to $n+1$.The representation $
\frac1n\cH_n = \frac1n\sum_{j=0}^{n-1}\left(\cH_{n+1}-\cH_n\right)=\frac1n\sum_{j=0}^{n-1}\Delta_{\cH}(n)
$ clearly implies that if $\Delta_{\cH}(n)$ converges, then its limit is also the limit of $\cH_n$. Unfortunately, we will not be able to compute the limit directly, but we can settle by using the representation to obtain an upper bound on the entropy in the limit
\begin{align*}
\limsup_{n\to\infty}\cH_n \le \limsup_{n\to\infty} \Delta_{\cH}(n).
\end{align*}

This is achieved by generating our graphs from a random graph process and rigorously coupling the underlying models. The result of an explicit calculation carried out in Section \ref{sec:lowerbound} is that $\Delta_{\cH}(n)$ can itself be upper bounded by the conjectured formula. This translates into the following lower bound on the mutual information.

\begin{proposition}
\label{prop:strategy_lowerbound}
We have
\begin{align*}
\liminf_{n\to\infty}\frac{1}{n}I(\SIGMA^*,\G_D^*) \ge - \sup_{\pi\in\PP}\cB(D,\pi) + \ln |\Omega| + \frac{\Erw[\vec{\gamma}]}{k\xi|\Omega|^k}\sum_{\tau\in\Omega^k}\Erw[\Lambda(\PSI(\tau))].
\end{align*}
\end{proposition}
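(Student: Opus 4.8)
The plan is to run the Aizenman--Sims--Starr interpolation at the level of the conditional entropy $\cH_n = H(\SIGMA^*\mid\G^*_D)$, using the telescoping identity $\frac1n\cH_n=\frac1n\sum_{j<n}\Delta_{\cH}(j)$ and the resulting bound $\limsup_n\cH_n\le\limsup_n\Delta_{\cH}(n)$. Since $I(\SIGMA^*,\G^*_D)=H(\SIGMA^*)-\cH_n$ with $H(\SIGMA^*)=n\ln|\Omega|$, an \emph{upper} bound on $\limsup_n\Delta_{\cH}(n)$ by $\sup_\pi\cB(D,\pi)-\Erw[\vec\gamma](k\xi|\Omega|^k)^{-1}\sum_\tau\Erw[\Lambda(\PSI(\tau))]$ is exactly what yields the claimed \emph{lower} bound on $\liminf_n\frac1n I$. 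So the whole task reduces to estimating $\Delta_{\cH}(n)=\cH_{n+1}-\cH_n$ from above by the conjectured expression.

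To compute $\Delta_{\cH}(n)$ I would couple the $(n{+}1)$-variable model and the $n$-variable model to a common ``base'' graph: starting from $\G^*_D$ on $n$ variables one adds a new variable $v_{n+1}$ of degree $\vec\gamma\sim D$, attaching it to $\vec\gamma$ fresh check nodes each drawn from $p$ (reweighted à la the teacher-student model by $\SIGMA^*$), while simultaneously removing a Poisson number of old checks to keep the check-to-variable ratio consistent; here the Multi-Poisson approximation $\G^*_{\alpha,\beta,D}$ and Proposition \ref{prop:goodcoupling} are what make the two increments expressible against the same Gibbs measure up to $O(f(\alpha,\beta))$ error. The difference of the two entropy changes then splits, as usual, into a ``check'' contribution (the checks incident to $v_{n+1}$, contributing the first term of $\cB$ via $\Lambda$ applied to $\sum_{\sigma}\prod_{i\le\vec\gamma}(\cdots)$) and a ``variable'' contribution (the removed checks, contributing the $-\frac{k-1}{k}\Erw[\vec\gamma]$ term); the additive $\ln|\Omega|$ and the $\frac{\Erw[\vec\gamma]}{k\xi|\Omega|^k}\sum_\tau\Erw[\Lambda(\PSI(\tau))]$ pieces come from normalizing constants and from the Nishimori identity relating the reweighted measure to $\SIGMA^*$. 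The pinning/perturbation step from \cite{CKPZ} is invoked here: after pinning a random bounded set of coordinates, the Gibbs marginals on the (few) variable nodes touched by the added or deleted checks asymptotically factorize, so the joint law of the relevant marginal vectors converges to a product of i.i.d.\ samples $\vec\mu^{(\pi)}_i$ from some $\pi\in\cP^2_*(\Omega)$ — the empirical distribution of pinned marginals — which is precisely the object over which $\cB(D,\pi)$ is defined. Taking $\limsup$ and then the supremum over the attainable $\pi$ gives $\limsup_n\Delta_{\cH}(n)\le\sup_{\pi\in\PP}\cB(D,\pi)+\cdots$, completing the argument after letting $\alpha\to0$, $\beta\to\infty$.

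The main obstacle I anticipate is twofold. First, the bookkeeping of the coupling between the $n$- and $(n{+}1)$-models for a \emph{prescribed} degree distribution $D$: unlike the \Erdos--\Renyi\ case in \cite{CKPZ}, adding one variable perturbs the degree sequence globally, and one must argue via the Multi-Poisson surrogate $\G^*_{\alpha,\beta,D}$ (Proposition \ref{prop:goodcoupling}) that this perturbation costs only $O(f(\alpha,\beta))$ in entropy, then send the approximation parameters to their limits \emph{after} taking $n\to\infty$ — the order of limits and the uniformity in $n$ are the delicate points, and are exactly where the extra parameter $\beta$ earns its keep. Second, establishing that the limiting marginal law is genuinely a member of $\cP^2_*(\Omega)$ (mean equal to uniform) rather than an arbitrary element of $\cP^2(\Omega)$ requires the Nishimori property together with the symmetry assumption \textbf{SYM}; without this the bound would be against the wrong variational class. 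Everything else — the $\Lambda$-bookkeeping, concentration of $\Delta_{\cH}(n)$, and the passage from the coupled increment to $\cB(D,\pi)$ — is, modulo care, routine once these two ingredients are in place.
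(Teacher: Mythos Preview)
Your outline is correct and matches the paper's proof: the paper also runs the Aizenman--Sims--Starr scheme (on the pinned free energy $\Erw[\ln Z(\G^*_{T,n})]$ rather than directly on $\cH_n$, but the two differ only by the easily computed $\Lambda$-term), coupling $\G^*_{T,n}$ and $\G^*_{T,n+1}$ through a common subgraph $\G^0$, invoking pinning for marginal factorization, and using Nishimori together with \textbf{SYM} to force the empirical marginal law into $\cP^2_*(\Omega)$. One slip: the Multi-Poisson approximation is tightened by sending $\beta\to 0$ (finer layers, each with $\Po(\beta)$ checks), not $\beta\to\infty$.
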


In Section \ref{sec:upper_bound}, using the Guerra-Toninelli \emph{interpolation method} we derive a matching lower bound on the conditional entropy. We interpolate between the original model and a much simpler graph model, where the Gibbs measure is more easily understood and we can effortlessly verify that the entropy coincides with the conjectured formula. The random graph approximation is assembled in a sequence of layers, such that depending on the approximation parameter $\beta>0$ each layer contributes $\Po(\beta)$ many check nodes. We refine the interpolation argument by segmentation into layers, as portrayed by Figure \ref{fig:interpolation}.

\begin{figure}[ht]
\begin{tikzpicture}

\draw (0.25,2.75) rectangle (0.75,3.25) (0.5,2.9) node (factor1) {};
\draw (1.25,2.75) rectangle (1.75,3.25) (1.5,2.9) node (factor2) {};
\draw (2.25,2.75) rectangle (2.75,3.25) (2.5,2.9) node (factor3) {};
\draw (3.25,2.75) rectangle (3.75,3.25) (3.5,2.9) node (factor4) {};

\draw (4.5,3.5) -- (4.5,0.5);

\draw (5.25-0.5,2.75) rectangle (5.75-0.5,3.25) (5.5-0.5,2.9) node (factor5) {};
\draw (6-0.5,2.75) rectangle (6.5-0.5,3.25) (6.25-0.5,2.9) node (factor6) {};
\draw (6.75-0.5,2.75) rectangle (7.25-0.5,3.25) (7-0.5,2.9) node (factor7) {};

\draw (6.5+0.5,2.75) rectangle (7+0.5,3.25) (6.75+0.5,2.9) node (factor8) {};

\draw (7.75,3.5) -- (7.75,0.5);

\draw (7.5+0.5,2.75) rectangle (8+0.5,3.25) (7.75+0.5,2.9) node (factor9) {};
\draw (8.75,2.75) rectangle (9.25,3.25) (9,2.9) node (factor10) {};

\draw [dotted] (9.75,1.5) -- (10.25,1.5);

\draw (10.25,2.75) rectangle (10.75,3.25) (10.5,2.9) node (factor11) {};

\draw (0,1) circle (0.25) node (var1) {};
\draw (1,1) circle (0.25) node (var2) {};
\draw (2,1) circle (0.25) node (var3) {};
\draw (3,1) circle (0.25) node (var4) {};
\draw (4,1) circle (0.25) node (var5) {};
\draw (5.5-0.5,1) circle (0.25) node (var6) {};
\draw (6.25-0.5,1) circle (0.25) node (var7) {};
\draw (7-0.5,1) circle (0.25) node (var8) {};
\draw (7.25,1) circle (0.25) node (var9) {};
\draw (8.25,1) circle (0.25) node (var10) {};
\draw (9,1) circle (0.25) node (var11) {};
\draw (10.5,1) circle (0.25) node (var12) {};

\draw (var1) -- (factor4);
\draw (var2) -- (factor4);
\draw (var6) -- (factor4);

\draw (var3) -- (factor3);
\draw (var5) -- (factor3);
\draw (var4) -- (factor3);

\draw (var4) -- (factor2);
\draw (var5) -- (factor2);
\draw (var2) -- (factor2);

\draw (var1) -- (factor1);
\draw (var3) -- (factor1);
\draw (var5) -- (factor1);

\draw [dashed] (var5) -- (factor5);
\draw [dashed] (var7) -- (factor6);
\draw [dashed] (var8) -- (factor7);

\draw [dashed] (var4) -- (factor8);
\draw [dashed] (var6) -- (factor8);
\draw [dashed] (var9) -- (factor8);

\draw (var10) -- (factor9);
\draw (var11) -- (factor10);
\draw (var12) -- (factor11);
\node [label={[xshift=4.75cm, yshift=0cm]$s$}] {};
\node [label={[xshift=8.25cm, yshift=0cm]$\scriptstyle s+1$}] {};
\end{tikzpicture}
    \caption{The interpolation scheme for $k=3$ during $t\in(0,1)$ in layer $s$, originally consisting of $\Po(\beta)=2$ factor nodes (dashed neighborhoods). One of two $k$-ary factor nodes in layer $s$ is split into $k$ unary nodes.}
    \label{fig:interpolation}
\end{figure}
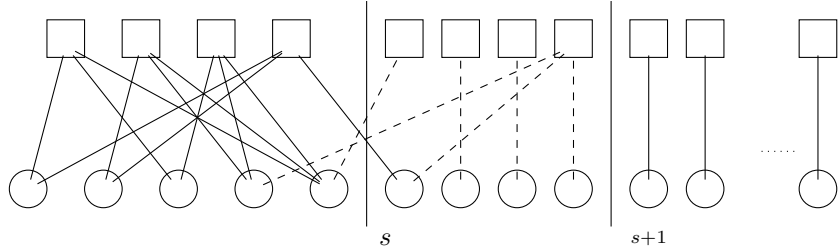

Beginning with the surface layer of $\G_1=\G_D$ we split its $k$-ary factor nodes one at a time, replacing them by unary nodes with weight functions simulating the complex underlying structure. We thus peel apart the intricate composition of $\G_D$ into a forest $\G_0$ of unary variable components. By independently splitting each $k$-ary factor node of layer $s$ with probability $t\in[0,1]$, sequentially for each layer $s$, we ensure a continuous interpolation between $\G_1$ and $\G_0$. Ultimately, we aim to show that the entropy in the latter model does indeed upper bound the original conditional entropy. This amounts to controlling $\frac{\partial}{\partial t} \cH_n(\G_t)$ within the interpolation interval and involves yet another clever coupling argument, carried out in full detail in section \ref{sec:upper_bound}. Taken as a whole, because the derivative with respect to $t$ is positive during the entire interpolation, we obtain the desired lower bound on the conditional entropy and as such the following upper bound on the mutual information.
\begin{proposition}
\label{prop:strategy_upperbound}
\begin{align*}
\limsup_{n\to\infty}\frac{1}{n}I(\SIGMA^*,\G_D^*) \le - \sup_{\pi\in\PP}\cB(D,\pi) + \ln |\Omega| + \frac{\Erw[\vec{\gamma}]}{k\xi|\Omega|^k}\sum_{\tau\in\Omega^k}\Erw[\Lambda(\PSI(\tau))].
\end{align*}
\end{proposition}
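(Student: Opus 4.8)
The plan is to prove Proposition~\ref{prop:strategy_upperbound} by the Guerra--Toninelli interpolation method. Since $H(\SIGMA^*)=n\ln|\Omega|$ and $I(\SIGMA^*,\G_D^*)=H(\SIGMA^*)-\cH_n$ with $\cH_n=H(\SIGMA^*\mid\G_D^*)$, the assertion is equivalent to the lower bound $\liminf_{n\to\infty}\tfrac1n\cH_n\ge\sup_{\pi\in\PP}\cB(D,\pi)-\tfrac{\Erw[\vec\gamma]}{k\xi|\Omega|^k}\sum_{\tau\in\Omega^k}\Erw[\Lambda(\PSI(\tau))]$, and since the supremum ranges over a fixed set it is enough to fix an arbitrary $\pi\in\PP$ and prove the bound with the supremum replaced by $\cB(D,\pi)$. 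By Proposition~\ref{prop:goodcoupling} we run the whole argument on the Multi-Poisson approximation $\G^*_{\alpha,\beta,D}$ and only transfer to $\G_D^*$ at the end. The interpolation runs between $\G_1:=\G^*_{\alpha,\beta,D}$ and a decoupled model $\G_0$: realizing $\G^*_{\alpha,\beta,D}$ as a stack of layers carrying $\Po(\beta)$ check nodes each, for $t\in[0,1]$ the model $\G_t$ is obtained by sweeping the layers in order and, in the layer currently processed, \emph{splitting} each $k$-ary check independently with a probability interpolating continuously so that $\G_1$ has no check split and $\G_0$ has all of them split. Splitting a check $a$ with neighbourhood $(v_1,\dots,v_k)$ and weight $\psi_a$ means deleting $a$ and attaching to each $v_j$ a pendant unary check with weight function
\begin{align*}
\sigma\ \longmapsto\ \sum_{\tau\in\Omega^k}\vecone\{\tau_j=\sigma\}\,\psi_a(\tau)\prod_{i\in[k]\setminus\{j\}}\mu_i(\tau_i),\qquad\mu_1,\dots,\mu_k\ \text{i.i.d.}\ \sim\pi,
\end{align*}
everything carried out so as to respect the teacher--student reweighting, so that $\G_t$ has the Nishimori property for all $t$ and $\G_0$ is a forest in which variable node $v$ carries exactly $\vec\gamma_v$ pendant unary checks.

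The first thing to check is the boundary value. In $\G_0$ the Gibbs measure factorizes over the variable-node stars, so $\cH_n(\G_0)$ is a sum over variable nodes of single-star conditional entropies together with the teacher--student normalization of the $M=\tfrac{\Erw[\vec\gamma]}{k}n+o(n)$ deleted $k$-ary checks. Rewriting each conditional entropy as an expectation of $\Lambda$ of Gibbs marginals via the Nishimori identity and applying the law of large numbers to the degree sequence, we expect $\tfrac1n\cH_n(\G_0)\to\cB(D,\pi)-\tfrac{\Erw[\vec\gamma]}{k\xi|\Omega|^k}\sum_{\tau}\Erw[\Lambda(\PSI(\tau))]$: the first term of $\cB(D,\pi)$ matches the star contributions (a degree-$\vec\gamma$ variable sees $\vec\gamma$ pendants of the above form with independent uniform slots $\vec h_i$ and $k-1$ samples from $\pi$), and the $-\tfrac{k-1}{k\xi}\Erw[\vec\gamma]\Erw[\Lambda(\cdot)]$ term of $\cB$ together with the stated correction term account for the deleted checks and their weighting.

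The heart of the argument is monotonicity: we want $\cH_n(\G_1)\ge\cH_n(\G_0)-o(n)$, which by $\cH_n(\G_1)-\cH_n(\G_0)=\int_0^1\tfrac{\partial}{\partial t}\cH_n(\G_t)\,\dd t$ follows once $\tfrac{\partial}{\partial t}\cH_n(\G_t)\ge-o(n)$ uniformly in $t$. Differentiating isolates the expected change of conditional entropy caused by splitting one more $k$-ary check of the layer being processed. Deleting that check first exposes the cavity marginals $\mu_{v_1},\dots,\mu_{v_k}$ of its $k$ neighbours in $\G_t$, which do not depend on whether the check is re-inserted intact or split; re-inserting it intact contributes a $\Lambda\!\big(\sum_\tau\psi(\tau)\prod_j\mu_{v_j}(\tau_j)\big)$-type term while re-inserting it split contributes the $k$ pendant terms built from fresh $\pi$-samples. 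Collecting these contributions together with the change of the teacher--student normalization, the derivative should reduce, up to $o(n)$, to a nonnegative multiple of an expectation of precisely the form in assumption \textbf{POS}, with the two families of marginals distributed as the interpolation law $\pi$ and as the (random) law of the cavity marginals. The difficulty is that these cavity marginals are a priori correlated, whereas \textbf{POS} concerns independent samples from two measures in $\PP$. Here the pinning technique of \cite{CKPZ} enters: because $\G_t$ is Nishimori-compatible, pinning a random but bounded number of coordinates of $\SIGMA^*$ — which perturbs $\cH_n$ by $o(n)$ — makes the Gibbs measure of $\G_t$ on average within $\eps$ of a product measure, so the $k$ cavity marginals become within $\eps$ of independent, and by the Nishimori property each of their laws has barycentre the uniform distribution and hence lies in $\PP$. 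Applying \textbf{POS} with $\pi'=\pi$ then gives $\tfrac{\partial}{\partial t}\cH_n(\G_t)\ge-O(\eps)n-o(n)$, and letting $\eps\downarrow0$ yields $\tfrac{\partial}{\partial t}\cH_n(\G_t)\ge-o(n)$; finiteness of $\Psi$ and the bound $\psi\in(0,2)$ keep all $\Lambda$-values and all Taylor remainders bounded, so the errors are uniform in $t$.

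Putting it together, $\tfrac1n\cH_n(\G^*_{\alpha,\beta,D})\ge\cB(D,\pi)-\tfrac{\Erw[\vec\gamma]}{k\xi|\Omega|^k}\sum_\tau\Erw[\Lambda(\PSI(\tau))]-o_n(1)$. Since each weight function lies in $(0,2)$ and $\Psi$ is finite, adding or rewiring one check node perturbs $\cH_n$ by $O(1)$, so by Proposition~\ref{prop:goodcoupling} one has $|\cH_n(\G_D^*)-\cH_n(\G^*_{\alpha,\beta,D})|=O(\alpha n)+O(f(\alpha,\beta))$; dividing by $n$, letting $n\to\infty$ and then $\alpha\downarrow0$, $\beta\to\infty$ kills the error. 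Finally the supremum over $\pi\in\PP$ and the translation back to $I$ give the proposition. We expect the main obstacle to be the monotonicity step — both writing the $t$-derivative as a clean single-check contribution with fixed cavity marginals, and, above all, deploying the pinning technique (which relies on the Nishimori property) to replace the correlated neighbour-marginals by genuine independent samples from a measure in $\PP$ so that \textbf{POS} becomes applicable, all while keeping the accumulated error $o(n)$ uniformly in $t$, $\alpha$ and $\beta$.
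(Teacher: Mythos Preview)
Your overall strategy---Guerra--Toninelli interpolation on the Multi-Poisson approximation, pinning to decorrelate marginals, then invoking \textbf{POS}---is exactly what the paper does. But the execution has a real gap in the monotonicity step, and it propagates to your boundary computation.

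You claim that $\tfrac{\partial}{\partial t}\cH_n(\G_t)$, after collecting the intact/split contributions and ``the change of the teacher--student normalization'', reduces to an expression of the \textbf{POS} form. It does not. Differentiating $\Erw[\ln Z(\G_t)]$ produces two pieces: the ``all-cavity'' term (coefficient $1$) from inserting the $k$-ary check, and the ``one-cavity, $k-1$ from $\pi$'' term (coefficient $-k$) from inserting the $k$ unary pendants. The \textbf{POS} inequality, however, requires a \emph{third} piece---the ``all-$\pi$'' term with coefficient $k-1$---and nothing in $\partial_t\ln Z$ or in $\partial_t\Erw[\ln\psi_{\G_t}(\SIGMA^*)]$ supplies it. The paper handles this by adding an explicit deterministic correction
\[
\Gamma_{s,t}=\frac{(s+t-1)\beta(k-1)}{\xi}\,\Erw\!\left[\Lambda\!\left(\sum_{\tau\in\Omega^k}\PSI(\tau)\prod_{j=1}^k\vec\mu_j^{(\pi)}(\tau_j)\right)\right]
\]
and proving that $\Phi_{T,s}(t)=\tfrac1n\bigl(\Erw[\ln Z(\G^*_{T,s,t})]+\Gamma_{s,t}\bigr)$ has derivative $\ge o_T(1)/n$. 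The $\partial_t\Gamma$ term is precisely the missing ``all-$\pi$'' piece (the paper's $\Delta''_{s,t}$), and only with it do the three contributions assemble into the quantity $\Xi_{s,t,l}$ that \textbf{POS} controls.

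This also fixes your boundary-value claim. In $\G_0$ every check has already been replaced by unary pendants; there are no ``deleted checks'' left to account for, so $\tfrac1n\Erw[\ln Z(\G_0)]$ yields only the \emph{first} summand of $\cB(D,\pi)$. The second summand $-\tfrac{k-1}{k\xi}\Erw[\vec\gamma]\Erw[\Lambda(\cdot)]$ comes not from the boundary but from $\Gamma_{s_{\max},1}$, which is moved to the other side after integrating $\Phi'$. So both your boundary identity and your monotonicity inequality are off by the same missing object, and the argument only closes once $\Gamma$ is introduced. (Minor point: the final approximation limit should be $\beta\downarrow 0$, not $\beta\to\infty$.)
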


Together, Propositions \ref{prop:strategy_lowerbound} and \ref{prop:strategy_upperbound} immediately imply the assertion for any planted CSP ensemble with given variable degree distributions. Theorem \ref{thm:Codes} follows by explicitly calculating the quantities in the case of LDGM codes, and verifying that \textbf{SYM} and \textbf{POS} hold. These simple but technical computations are put off to Section \ref{sec:proofstrat}.

\section{The Teacher-Student model and Symmetry}
\label{sec:proofstrat}
In this section we set up the tools utilized in our proof. We formalize the notion of retrieval by constructively defining the teacher-student model for $(D,k)$-graph ensembles as well as their Poissonian approximation. To this end we verify a number of expedient properties brought with the model in our setting. We analyze the approximative random graph model in terms of its \emph{free energy}. This quantity is closely related to the mutual information and the main object of interest in many physics models of spin-glasses. Moreover, to derive tight bounds from the expressions we obtain for the free energy, we employ the \emph{Pinning Lemma} from \cite{CKPZ}. Finally we show how bounds on the free energy imply our general main result.  
Following this section, the lower and upper bound are derived in Sections \ref{sec:lowerbound} and \ref{sec:upper_bound} respectively.

\subsection{Preliminaries and Notation}
With a factor graph $G=(V,F,(\partial a)_{a\in F},(\psi_a)_{a\in F})$ we associate its \emph{partition function} $
Z(G)=\sum_{\sigma\in\Omega^n}\psi_G(\sigma)$, where $\psi_G(\sigma)=\prod_{a\in F}\psi_a(\sigma(\partial_1 a,\ldots,\partial_k a))
$. This gives rise to the \emph{Gibbs distribution} $\mu_G(\sigma) = Z(G)^{-1}\psi_G(\sigma)$. If $X:\Omega^n\to\RR$ is a functional, we write $\langle X(\SIGMA) \rangle_{G}=\sum_{\sigma\in\Omega^n}\mu_G(\sigma)X(\sigma)$, when referring to its average with regard to the Gibbs distribution. Additionally, we define the empirical distribution on the graph as the average Gibbs marginal $
	\pi_G=|V|^{-1}\sum_{x\in V}\delta_{\mu_{G,x}}.
$
If $\mu$ is a probability measure, we write $\mu^{\otimes \ell}$ to denote its $\ell$-fold product measure. Moreover, we write $\vec\mu^{(\pi)}$ for a random sample from $\pi\in\cP^2_*(\Omega)$.

The standard graph ensembles follow a predefined left-degree distribution. Considering degree sequences chosen from those distributions will be an expedient tool for the construction of random graphs from a given ensemble. We call a degree sequence $d$ finite, if $\sup_x d(x)<\infty$. If not explicitly stated otherwise, we assume the number of variable nodes in a given graph to be $n$ and degree sequences to be finite with maximal degree $\Delta$. For a real number $x$ we let $(x)_+=\max\{0,x\}$. 

Before we formally specify the random $(D,k)$-graph construction process, let us define a process that samples graphs with specific left-degree sequences. This is commonly known as the \emph{configuration model}.
\begin{definition}
\label{exactProcedure} 
Let $d:[n]\to\N$ be a degree sequence. By $\bar\G_{n,d}$ we denote the random  unweighted factor graph obtained from the following process.
\begin{description}
\item[EX1] Let $F=\emptyset$, $s=1$. Initiate a vector $\vec \delta_s$ by setting $\vec \delta_s(x) = d(x)$ for $x=1,\ldots, n$. 
\item[EX2] While $\vec\delta_s \neq 0$:
Choose a random $\vec x_s\in[n]$ from the measure $\vec\nu_s$ defined by
$$\vec\nu_s(x)=\frac{\vec \delta_s(x)}{\sum_{y\in[n]}\vec \delta_s(y)}.$$
Update $\vec \delta_{s+1}=\vec \delta_s-(\vecone\{\vec x_s\})_{x\in [n]}$ and set $s=s+1$.
If $s-1$ is a multiple of $k$, then add a new check node $\vec a$ to $F$ and set its neighborhood to be $\{\vec x_{s-k-1},\ldots,\vec x_{s-1}\}$.
\end{description}
The resulting graph is $\bar\G_{n,d}=([n],F,(\partial \vec a)_{\vec a\in F})$ and satisfies the degree sequence as long as $n$ divides $\sum_{x\in[n]}d(x)$. We also write $\bar\G_{d}$ if the number of variable nodes $n$ is apparent from the context.

\end{definition}

\begin{definition}
If $[n]$ allows for a partition $(V_l)_{l}$ with $|V_l|= nD(l)$, a random unweighted factor graph from the $(D,k)$-ensemble can be obtained by choosing such a partition $(\vec V_l)_{l}$ uniformly at random, setting $\vec d(x)=l$ for $x\in \vec V_l$, $l=1,2,\ldots$. We then write $\G_D$ for a random choice from $\bar\G_{\vec d}$.
\end{definition}

At times it is beneficial to generate the random graph in a sequence of batches of factor nodes according to a specified structure, rather than one node-socket at a time. This is facilitated by means of the following model that we will then use to lay down the Poissonian approximation. 

\begin{definition}
	\label{def:approx}
Given a degree sequence $d$ as well as a vector $m=(m_1,\ldots,m_{\sm})$ with non-negative integer entries, we define $\G_{n,m,d}$ to be the random graph obtained from the following experiment. For the sake of clarity we omit the subscript $n$ if it is unambiguous.
\begin{description}
\item[AP1] Let $F=\emptyset$, $s=1$. Initiate a vector $\vec \delta_s$ by setting $\vec \delta_s(x) = d(x)$ for $x=1,\ldots, n$. 
\item[AP2] In layers $s=1,\ldots,\sm$ choose $m_s$ random neighborhoods $\partial \vec a_{s,i}$ from $\vec\nu_s^{\otimes k}$ 
and add factor nodes $\vec a_{s,1},\ldots, \vec a_{s,m_s}$ with respective neighborhoods to the graph. Denote by $\nabla_s(x)$ the number of times the variable node $x$ has been chosen as a neighbor in round $s$ and update $\vec \delta_{s+1}=(\vec\delta_s - \nabla_s)_+$, $s=s+1$.
\end{description}
\end{definition}

Given a degree distribution $D$ we say that a degree sequence $\vec d$ is a random $D$-partition of $[n]$ if for a uniformly random variable $\vec x\in[n]$ and any $\ell\ge 0$ the degrees satisfy $\pr[\vec d(\vec x)=\ell]=D(\ell)$.

\begin{definition}
Given a degree distribution $D$ and a vector $m=(m_1,\ldots,m_{\sm})$ with non-negative integer entries, we let $\vec d$ be a random $D$-partition of $[n]$ and thereby define $\G_{m,D}=\G_{m,\vec d}$.
\end{definition}

\subsection{The approximation scheme}
To obtain an approximation of $\G_D$ we introduce two additional parameters $\alpha\in[0,1)$, $\beta>0$ that aid in laying down an appropriate approximation-step vector $m$. 
\begin{definition}
\label{def:abApprox}
When $\alpha,\beta$ and a degree sequence $d$ are fixed let $\sm =\lfloor (1-\alpha)\beta^{-1}k^{-1}\sum_{x\in [n]}d(x)\rfloor$. Further, let $\vec m_1, \vec m_2,\ldots$ be a sequence of independent $\Po(\beta)$ distributed numbers. We then obtain an $(\alpha,\beta)$-approximation of $\bar\G_d$ by setting $\vec m = (\vec m_1,\ldots,\vec m_{\sm})$ and letting $\G_{\alpha,\beta,d}=\G_{\vec m,d}$. Analogously we define $\G_{\alpha,\beta,D}=\G_{\vec m, D}$, where we first choose a random $D$-partition $\vec d$ and generate $\vec m$ thereafter.
\end{definition}

Note that the graph-generating procedure as stated in Definition \ref{def:approx} will fail if $\sum_{x \in V} \rDelta_s(x) = 0$ for some $1 \le s \le s_{\max}$. However, it will succeed \whp~ as $n$ tends to infinity because the probability of adding more than $k^{-1}(1-\alpha)\sum_{x \in V} d(x)$ factor nodes during the construction tends towards zero. Particularly, as $\alpha,\beta$ tend to 0 this becomes an arbitrarily close approximation of $\bar\G_d$, in the sense that we obtain $\bar\G_{d}$ from $\G_{\alpha,\beta,d}$ by rewiring $O(1)$ node sockets. This is enabled by the following observation.

\begin{fact}\label{deltaTVlemma}
With $\alpha,\beta$ fixed and $s\in[\sm]$ let $(\vec \delta_s(v))_{v\in[n]}$ be the degree sequence from Definitions \ref{exactProcedure} and \ref{def:approx}. Let $c:[n]\to\ZZ$ be a bounded integer valued function on the variable nodes and let $\rDelta'_s(v) = (\rDelta_s(v)-c(v))_+$. Consider distributions $\vec \nu_s\propto\vec\delta_s$ and $\vec\nu_s'\propto\vec\delta'_s$. Then
\begin{align*}
\|\vec\nu_s-\vec\nu_s'\|_{TV}
= \frac{\sum_{x\in V} [c(x)]_+}{O(\alpha n)-\sum_{x\in V} c(x)}.
\end{align*}
\end{fact}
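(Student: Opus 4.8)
\emph{Proof proposal.} The plan is to read the asserted identity as a purely deterministic estimate on the vectors $\vec\delta_s,\vec\delta_s'$: the distribution of $\vec\delta_s$ is irrelevant, and the symbol $O(\alpha n)$ in the denominator merely stands for the normalising constant $N:=\sum_{x\in V}\vec\delta_s(x)$, which at the layers $s$ where the Fact is invoked is $\Omega(\alpha n)$ \whp{} (it is at least the terminal value $\sum_x\vec\delta_{\sm}(x)=\Theta(\alpha n)$). So I would fix an arbitrary realisation, write $N'=\sum_{x}\vec\delta_s'(x)$ so that $\vec\nu_s(x)=\vec\delta_s(x)/N$ and $\vec\nu_s'(x)=\vec\delta_s'(x)/N'$, and set $e(x)=\vec\delta_s(x)-\vec\delta_s'(x)=\vec\delta_s(x)-(\vec\delta_s(x)-c(x))_+$.

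The first step is three elementary observations obtained from a one-line case split on the sign of $c(x)$: (i) $0\le e(x)\le[c(x)]_+$ --- here using that in the application $c$ just counts node-sockets that get removed, so $c\ge0$ and $e(x)=\min(\vec\delta_s(x),c(x))$; (ii) $\sum_x e(x)=N-N'\ge0$; and (iii) $N'=\sum_x(\vec\delta_s(x)-c(x))_+\ge\sum_x(\vec\delta_s(x)-c(x))=N-\sum_x c(x)$.

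The second step is the computation proper. From $\frac1N-\frac1{N'}=-\frac{N-N'}{NN'}$ one gets the pointwise identity $\vec\nu_s(x)-\vec\nu_s'(x)=\frac1{N'}\big(e(x)-\frac{\vec\delta_s(x)}{N}(N-N')\big)$, i.e.\ $\vec\nu_s-\vec\nu_s'$ is $\tfrac1{N'}$ times the difference of the two \emph{non-negative} vectors $(e(x))_x$ and $\big(\tfrac{N-N'}{N}\vec\delta_s(x)\big)_x$, each of $\ell^1$-mass exactly $N-N'$. Since two non-negative vectors of common mass $T$ lie at $\ell^1$-distance at most $2T$, this yields $\|\vec\nu_s-\vec\nu_s'\|_{TV}=\tfrac12\|\vec\nu_s-\vec\nu_s'\|_1\le\frac{N-N'}{N'}$; plugging in (i)--(iii) gives
\[
\|\vec\nu_s-\vec\nu_s'\|_{TV}\ \le\ \frac{\sum_{x\in V}[c(x)]_+}{N-\sum_{x\in V}c(x)},
\]
which is exactly the claimed estimate with $N=\sum_x\vec\delta_s(x)=O(\alpha n)$.

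I do not foresee a real obstacle; the only thing to be careful about is the bookkeeping around the truncation $(\cdot)_+$ and the use of $c\ge0$ to guarantee $e\ge0$ and $N\ge N'$ in the second step. If one insists on genuinely signed $c\colon[n]\to\ZZ$, the two vectors above no longer share the same $\ell^1$-mass, so instead I would write $\|\vec\nu_s-\vec\nu_s'\|_{TV}=\sum_x(\vec\nu_s'(x)-\vec\nu_s(x))_+$ and bound the positive and negative parts of $e$ separately; this reproduces the same bound (with $\sum_x|c(x)|$ in the numerator in the worst case), which is still all that is needed downstream. It is also worth recording that the estimate is only meaningful when $\sum_x c(x)<N$, so that $\vec\nu_s'$ is well defined --- automatic in the application, where $\sum_x c(x)=O(1)$ while $N=\Theta(\alpha n)$.
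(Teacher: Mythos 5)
Your proof is correct and follows essentially the same route as the paper's: a direct elementary estimate of the total variation distance between the two renormalised degree vectors, arriving at the same bound $\sum_{x}[c(x)]_+/\bigl(\sum_x\vec\delta_s(x)-\sum_x c(x)\bigr)$ with $\sum_x\vec\delta_s(x)\ge\sum_x\vec\delta_{\sm}(x)=\Omega(\alpha n)$. If anything you are more careful than the paper, which silently identifies $\vec\delta_s'(v)=(\vec\delta_s(v)-c(v))_+$ with $\vec\delta_s(v)-c(v)$ and, like your main argument, implicitly treats only the case $c\ge 0$ relevant to the application.
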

\begin{proof}
We simply write out the total variation distance and bound $\sum_x\vec\delta_s(x)\ge\sum_x\vec\delta_{\sm}(x)=\Omega(\alpha n)$ to obtain
\begin{align*}
&\|\vec\nu_s-\vec\nu_s'\|_{TV}=\sum_{v\in [n]}\left(\vec\nu_s(v)-\vec\nu'_s(v)\right)_+ = \sum_{v\in [n]} \left(\frac{\vec\delta_s(v)}{\sum_{x\in [n]}\vec\delta_s(x)}-\frac{\vec\delta_s(v)-c(v)}{\sum_{x\in [n]}\vec\delta_s(x)-\sum_{x\in [n]}\vec\delta_c(x)}\right)_+\\
=&\sum_{v\in [n]}\left(\frac{c(v)\sum_{x\in[n]}\vec\delta_s(x)-\vec\delta_s(v)\sum_{x\in[n]}c(x)}{\sum_{x\in[n]}\vec\delta_s(x)-\sum_{x\in[n]}c(x)}\right)_+\le \sum_{v\in[n]}\left(\frac{c(v)}{\sum_{x\in [n]}(\vec\delta_s(x)-c(x))}\right)_+.
\end{align*}
\end{proof}

In the light of Lemma \ref{deltaTVlemma} consider generating an approximative graph $\G_{\alpha,\beta,D}$. If we independently generate the missing $O(\alpha n)$ factor nodes in the same fashion as the exact procedure from Definition \ref{exactProcedure}, we can verify that the distribution of the resulting factor graph matches the distribution of ${\G}_D$ after we rewire a finite number of factor nodes. Importantly, this number is asymptotically independent of $n$.

\begin{proof}[Proof of Proposition \ref{prop:goodcoupling}]
If $\beta$ is sufficiently small, most layers will contain at most one check node, and very few will contain a constant number of check nodes. For the sake of simplicity suppose that each layer consists of at most one check node. The argument extends to any constant number of check nodes. 

First observe that during the layer-wise creation of $\G_n$ in any step of layer $s$, if the approximative graph matches the original graph except for $C>0$ factor nodes, there are at most $Ck$ different entries in the degree sequences. That is  $\|\vec\nu_s-\vec\nu_s'\|_{TV}=O(\frac{Ck}{\alpha n})$ by Fact \ref{deltaTVlemma}. As the degree sequence in the exact model updates immediately, the approximative degree sequences adds up to $k-1$ incremental deviations during the creation of one $k$-ary check node. This results in a total variation distance of $kO(\frac{(C+1)k}{\alpha n-k})=O(\frac{C}{\alpha n})$ for the choice of the next neighbor. We will now couple the configuration model process in both graphs inductively. Suppose that during the process, we have an optimal coupling such that both graphs differ in exactly $i-1\ge0$ check nodes. By the coupling Lemma both graphs coincide with probability $O(\frac{i}{\alpha n})$, thus we can choose the same neighborhood for the next $\vec X_i$ check nodes, where $\vec X_i$ is a geometrically distributed random variable with parameter $p_i=O(\frac{i}{\alpha n})$. Hence, once the process is complete the total number $\vec C_F$ of different check nodes is the minimum number of independent random variables $\vec X_1,\vec X_2,\ldots$ such that $\vec X_i$ is chosen from the geometric distribution with $p_i=O(\frac{i}{\alpha n})$ and $\sum_{i=1}^{\vec C_F}\vec X_i\ge \Omega((1-\alpha)n)$. As $n$ tends to infinity, the random variables $\frac{\vec X_i}{\alpha n}$ can be arbitrarily well approximated by random variables $\vec Y_i$ chosen from the exponential distribution with parameter $\hat p_i = O(i)$, such that $\sum_{i=1}^{\vec C_F} \vec Y_i \ge \Omega(\frac{1-\alpha}{\alpha})$. As a consequence $\vec C_F$ can be modeled as a function of $\alpha$ and is asymptotically independent of $n$.

We have assumed that each layer $s=1,\ldots, \sm$ consists of at most one check node. The argument generalizes to any sequence of layers with independent $\Po(\beta)$ many check nodes. Because the tails of the Poisson distribution show sub-exponential decay, we can guarantee that $\|\nu_s-\nu_s'\|=O(\frac{C+\beta}{\alpha n})$, if in layer $s-1$ the coupling differs in at most $C$ check nodes. By induction we can write $\vec C_F$ as a function of $\alpha$ and $\beta$.
\end{proof}

\subsection{The Teacher-Student Model}
To derive the mutual information between the transmitted message and the received one, we analyze the free energy density within the $(D,k)$-teacher-student model. To this end, we generalize the scheme from \cite{CKPZ} to arbitrary degree distributions. We define a random factor graph $\G^*$ via the following construction.

\begin{description}
\item[TCH1] Choose $\SIGMA^*:[n]\to\Omega$ uniformly at random.
\item[TCH2] Choose a random graph $\G=([n],F,(\partial a)_{a\in F})$ from $\G_{\alpha,\beta,D}$.
\item[TCH3] For each check node $a\in F$ choose a random $\vec\psi_a$ in $\Psi$ from the distribution
\begin{align*}
	\pr[\vec\psi_a =\psi]=\frac{p(\psi)\psi(\SIGMA^*(\partial_1 a),\ldots,\SIGMA^*(\partial_k a))}{\sum_{\psi'\in\Psi}\psi'(\SIGMA^*(\partial_1 a),\ldots,\SIGMA^*(\partial_k a))}
\end{align*}
We let $\G^*=\G^\ast(\SIGMA^*)=([n],F,(\partial a)_{a\in F},(\vec\psi_a )_{a\in F})$ denote the resulting (weighted) random factor graph and omit the parameters $\alpha,\beta$ and $n$ for the sake of readability. We write $\G^*_D$ for the limit $\alpha,\beta\to 0$.
\end{description}

\subsection{Pinning and Symmetry}
Another indispensable tool that we make use of is \emph{pinning}. As we derive bounds on the free energy $\Erw[\ln Z(\G^*)]$ we are confronted with drawing $k$-tuples of bit-assignments chosen from the Gibbs measure. Due to the confinement by weight functions certain configurations of  tuples are favored within the Gibbs measure. Hence, in most cases variable node assignments sampled from the Gibbs measure, where the nodes are connected through a common check node will be highly correlated by means of the weight function and thus far from independent. This correlation ought to persist for any nodes within a finite distance in the graph. On the other hand if we choose two variable nodes at random they are typically far apart, which is why we can hope that their Gibbs measure is almost a product measure. This is captured by the notion of $\eps$-symmetry, or more generally $(\delta,\ell)$-symmetry.

\begin{definition}
	Let $\mu_G$ be a probability measure on $\Omega^n$. For $\ell\ge 2$ and $\{x_1,\ldots,x_\ell\}\subset [n]$ let $\mu_{G,x_1,...x_\ell}$ denote the marginal distribution of an $\ell$-tuple chosen from $\mu_G$. 
	We say that $\mu_G$ is $(\delta,\ell)$-symmetric, if 
	\begin{align*}
	\frac{1}{n^\ell} \sum_{x_1,...,x_l\ell\in [n]} \| \mu_{G,x_1,...,x_\ell} - \mu_{G,x_1} \otimes \cdots \otimes \mu_{G,x_\ell} \|_{TV} < \delta.
	\end{align*}
	If $\mu_G$ is $(\eps,2)$-symmetric, we simply speak of $\eps$-symmetry.
\end{definition}

Fortunately, the following \emph{Pinning Lemma} from \cite{CKPZ} guarantees a degree of $\eps$-symmetry in exchange for small modifications in the original measure. 

\begin{lemma}{\cite{CKPZ}}
\label{lem:PinningLemma}
For any $\eps>0$ there is $T=T(\eps,\Omega)>0$ such that for every $n>T$ and every probability measure $\mu\in\cP(\Omega^n)$ the following is true. Obtain a random probability measure 
$\vec{\check\mu} \in\cP(\Omega^n)$ as follows: 
Draw a sample $\check\SIGMA\in\Omega^n$ from $\mu$, independently choose a number $\vec\theta\in(0,T)$ uniformly at random and obtain a random set $\vec U\subset [n]$ by including each $i\in [n]$ with probability $\vec\theta/n$ independently. The measure $\vec{\check\mu}$ defined by
\begin{align*}
\vec{\check\mu}(\sigma)=\frac{\mu(\sigma)\vecone\{\forall i\in \vec U:\sigma_i =\check\SIGMA_i\}}{\mu(\{\tau\in\Omega^n :\forall i\in \vec U:\tau_i =\check\SIGMA_i\})},\qquad\sigma\in\Omega^n
\end{align*}
is $\eps$-symmetric with probability at least $1-\eps$.
\end{lemma}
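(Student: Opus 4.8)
The statement is the Pinning Lemma of \cite{CKPZ}, and I would recall its proof rather than invent a new one. The guiding principle is that $\eps$-symmetry of $\vec{\check\mu}$ follows from an upper bound on the average conditional mutual information of pairs of coordinates of $\mu$, because pinning at a pattern $\check\SIGMA$ that is itself drawn from $\mu$ turns the marginals of $\vec{\check\mu}$ into the honest conditional laws $\mu(\,\cdot\mid\SIGMA_i=\check\SIGMA_i\text{ for }i\in\vec U)$. Concretely, writing $\vec{\check\mu}_{x,y}$ and $\vec{\check\mu}_x$ for the marginals of $\vec{\check\mu}$ on $\{x,y\}$ and $\{x\}$, for each realisation of $\vec U$ the measure $\vec{\check\mu}_{x,y}$ is the joint law of $(\SIGMA_x,\SIGMA_y)$ conditioned on $\SIGMA_{\vec U}$, so Pinsker's inequality bounds $\TV{\vec{\check\mu}_{x,y}-\vec{\check\mu}_x\otimes\vec{\check\mu}_y}^2$ by one half of a Kullback--Leibler divergence whose average over the pinning pattern equals $I(\SIGMA_x;\SIGMA_y\mid\SIGMA_{\vec U})$. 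Averaging over $x,y$ and over $\vec U$ and using concavity of the square root,
\begin{align*}
\Erw\left[\frac1{n^2}\sum_{x,y\in[n]}\TV{\vec{\check\mu}_{x,y}-\vec{\check\mu}_x\otimes\vec{\check\mu}_y}\right]\ \le\ \left(\frac12\,\Erw\left[\frac1{n^2}\sum_{x,y\in[n]}I(\SIGMA_x;\SIGMA_y\mid\SIGMA_{\vec U})\right]\right)^{1/2},
\end{align*}
so it suffices to push the expected average conditional mutual information on the right below $2\eps^4$, after which Markov's inequality yields $\eps$-symmetry with probability at least $1-\eps$.

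To control that quantity I would use an entropy telescoping argument. For a uniformly random permutation $\pi$ of $[n]$ set $f(t)=\Erw[H(\SIGMA_{\pi(t+1)}\mid\SIGMA_{\pi(1)},\dots,\SIGMA_{\pi(t)})]$. Exchangeability identifies $f(t)$ with $\Erw[H(\SIGMA_x\mid\SIGMA_S)]$ for a uniform $t$-subset $S$ and a uniform $x\notin S$, and $f(t+1)$ with $\Erw[H(\SIGMA_x\mid\SIGMA_S,\SIGMA_y)]$ for a further uniform $y\notin S\cup\{x\}$, so that $g(t):=f(t)-f(t+1)=\Erw[I(\SIGMA_x;\SIGMA_y\mid\SIGMA_S)]\ge 0$. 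Since $\sum_{t=0}^{n-2}g(t)=f(0)-f(n-1)\le\ln|\Omega|$, the family $(g(t))_t$ carries a total budget at most $\ln|\Omega|$, uniformly in $n$ and in $\mu$.

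It then remains to connect this to the $\vec\theta$-pinning. I would realise $\vec U$ through a nested family $(\vec U_\theta)_{0<\theta<T}$ with inclusion probability $\theta/n$; conditioned on $|\vec U_\theta|=t$ the set is a uniform $t$-subset, so the inner expectation at parameter $\theta$ equals $\Erw[g(|\vec U_\theta|)]$ with $|\vec U_\theta|\sim\Bin(n,\theta/n)$. A short incomplete-Beta estimate gives $\tfrac1T\int_0^T\pr[\,|\vec U_\theta|=t\,]\,\dd\theta\le 1/T$ for every $t$ provided $T\le n$, and hence $\Erw[\tfrac1{n^2}\sum_{x,y}I(\SIGMA_x;\SIGMA_y\mid\SIGMA_{\vec U})]\le\tfrac1T\sum_{t\ge0}g(t)\le\ln|\Omega|/T$. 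Choosing $T=T(\eps,\Omega)$ larger than $\ln|\Omega|/\eps^4$ (and $n>T$, as in the statement) makes the right-hand side of the displayed inequality above smaller than $\eps$, which completes the proof.

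The routine ingredients are Pinsker's and Jensen's inequalities, the chain rule for entropy, and the exchangeability furnished by the random permutation. The hard part will be the last step: matching the independent-inclusion model with a \emph{random} parameter $\vec\theta$ to the clean fixed-size telescoping — handling the random size of $\vec U$ and the small corrections when $x$ or $y$ happens to land inside $\vec U$ — while keeping in mind throughout that the pinning pattern is resampled from $\mu$ rather than chosen adversarially, which is precisely what makes the conditional mutual informations appearing above the relevant ones.
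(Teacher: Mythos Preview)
The paper does not prove this lemma at all; it is quoted verbatim as a black box from \cite{CKPZ}, so there is no in-paper argument to compare your proposal against. Your sketch is a faithful recollection of the CKPZ proof (Pinsker to pass from total variation to conditional mutual information, entropy telescoping via a random ordering to get the $\ln|\Omega|$ budget, then the Beta-integral averaging over $\vec\theta$ to spread that budget over $T$), and the caveats you flag about coordinates landing in $\vec U$ and the random size of $\vec U$ are exactly the bookkeeping points that need to be tidied up; nothing is missing.
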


When applying the procedure from Lemma \ref{lem:PinningLemma} to the Gibbs measure of a graph the perturbation equates to pinning the color of each variable node $i$ in $\vec U$ to its coloring under $\check\SIGMA$, i.e. adding a constraint $\vec\psi_i(\sigma)=\vecone\{\sigma = \check\SIGMA_i\}$ to those variable nodes.

\begin{definition}
\label{def:pinning}
Given a factor graph $G$, a subset $U\subset [n]$ of its variable nodes and an assignment $\check\sigma\in\Omega^n$ obtain $G_{U,\check\sigma}$ from $G$ by adding unary check nodes $a_x$ with weights $\psi_{a_x}(\tau)=\vecone\{\check\sigma(x)=\tau\}$ to each variable node $x\in U$. Moreover, for $T\ge 0$ let $\vec U = \vec U(T)\subset V$ be a random subset of vertices generated by first choosing $\vec \theta\in[0,T]$ uniformly at random and then including each variable node $v\in[n]$ in $\vec U$ with probability $\vec \theta/n$. If $\SIGMA\in\Omega^n$ is chosen independently and uniformly at random from $\Omega$ for each variable node, we let $G_T=G_{\vec U,\SIGMA}$.
\end{definition}

The following Lemma establishes, that the perturbations performed in the pinning procedure yield an $\eps$-symmetric measure with probability at least $1-\eps$.

\begin{lemma}[{\cite[Lemmata 2.8 and 3.5]{CKPZ}}]
\label{lem:epsSymmetry}
Let $\mu_G$ be a Gibbs measure on $\Omega^n$.
\begin{enumerate}[1.]
\item
For any $\eps>0$ there is a $T_0=T_0(\Omega,\eps)$ such that for all $T>T_0$ and sufficiently large $n$ the meausre $\mu_{G_T}$ is $\eps$-symmetric with probability at least $1-\eps$.
\item Moreover, for any $l\ge 3$, $\delta>0$ there is $\eps>0$ such that if $\mu_{G_T}$ is $\eps$-symmetric, then $\mu$ is $(\delta,l)$-symmetric.
\end{enumerate}
\end{lemma}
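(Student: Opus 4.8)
The statement to prove is Lemma~\ref{lem:epsSymmetry}, which has two parts: (1) applying the pinning procedure of Definition~\ref{def:pinning} to a Gibbs measure yields an $\eps$-symmetric measure with high probability for $T$ large, and (2) $\eps$-symmetry upgrades to $(\delta,\ell)$-symmetry for any $\ell\ge3$. Since the paper cites \cite[Lemmata 2.8 and 3.5]{CKPZ} for both parts, the plan is to explain how the abstract Pinning Lemma~\ref{lem:PinningLemma} specializes to the factor-graph Gibbs setting and how the second part follows by a routine averaging/telescoping argument.

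For part (1), the plan is to apply Lemma~\ref{lem:PinningLemma} directly with $\mu=\mu_G$. The random set $\vec U$ and the reference assignment $\check\SIGMA$ in that lemma are exactly the ingredients in Definition~\ref{def:pinning}, except that Definition~\ref{def:pinning} draws $\SIGMA$ uniformly rather than from $\mu_G$; one first observes that conditioning on the pinned coordinates agreeing with a uniformly random $\SIGMA$ is absolutely continuous with respect to conditioning on agreement with a $\mu_G$-sample (indeed, by the Nishimori property of the teacher-student model the Gibbs sample and the uniform ground truth are interchangeable in the relevant averaged sense), so the conclusion transfers. The key structural point is that $\vec{\check\mu}$ in Lemma~\ref{lem:PinningLemma}, when $\mu=\mu_G$, is itself the Gibbs measure of the pinned factor graph $G_{\vec U,\check\SIGMA}$: pinning coordinate $i$ to $\check\SIGMA_i$ is precisely multiplying the weight $\psi_G$ by $\vecone\{\sigma_i=\check\SIGMA_i\}$, i.e.\ adding the unary check node $a_i$ with weight function $\vecone\{\check\sigma(i)=\cdot\}$. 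Hence $\mu_{G_T}=\vec{\check\mu}$ in distribution, and Lemma~\ref{lem:PinningLemma} gives $\eps$-symmetry with probability $\ge1-\eps$ once $T>T_0(\Omega,\eps)=T(\eps,\Omega)$ and $n$ is large. I would spell this identification out carefully since it is the substantive content.

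For part (2), the plan is a standard induction on $\ell$. Assuming $\mu_{G_T}$ is $\eps$-symmetric, one wants to bound $\frac1{n^\ell}\sum_{x_1,\dots,x_\ell}\|\mu_{x_1,\dots,x_\ell}-\bigotimes_j\mu_{x_j}\|_{TV}$. The trick is to write the difference as a telescoping sum: replace the joint law of $(\sigma_{x_1},\dots,\sigma_{x_\ell})$ by the product one coordinate-block at a time, each step costing a pairwise-type discrepancy that, after averaging over the $x_i$'s and using that a typical $\ell$-tuple is pairwise ``spread out'', is controlled by the $\eps$-symmetry hypothesis (possibly after an additional pinning so that conditioning on one coordinate does not destroy symmetry — this is where one invokes that pinning a bounded number of extra coordinates keeps the measure close to $\eps$-symmetric, again via Lemma~\ref{lem:PinningLemma} applied with a slightly larger $T$). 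Choosing $\eps$ small enough as a function of $\delta$ and $\ell$ then yields the bound $<\delta$.

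The main obstacle I expect is the identification in part (1) that the randomly pinned Gibbs measure of Definition~\ref{def:pinning} literally is the measure $\vec{\check\mu}$ produced by Lemma~\ref{lem:PinningLemma}, together with the switch between a uniform reference assignment and a Gibbs-sampled one; this requires the Nishimori property (which the paper establishes separately) to be in place, and care that the pinning probabilities and the range $(0,T)$ of $\vec\theta$ match. The telescoping argument in part (2) is conceptually routine but bookkeeping-heavy; the genuinely delicate point there is ensuring that conditioning on already-revealed coordinates preserves enough symmetry, which is handled by absorbing those coordinates into an enlarged pinning set and re-invoking the Pinning Lemma. Both parts are, as indicated, imports from \cite{CKPZ}, so the exposition will mostly be a matter of checking that the hypotheses of those lemmata are met in the present $(D,k)$-ensemble setting.
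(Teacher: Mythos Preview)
The paper does not give its own proof of this lemma; it is simply imported from \cite[Lemmata~2.8 and~3.5]{CKPZ}. So there is no in-paper argument to compare against, and your plan of checking that the cited results apply in the present setting is the right one.

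Your treatment of part~(1) is essentially correct. You rightly spot the discrepancy between Definition~\ref{def:pinning} (the reference assignment $\SIGMA$ is uniform) and Lemma~\ref{lem:PinningLemma} (the reference $\check\SIGMA$ is drawn from $\mu$). Note, however, that Lemma~\ref{lem:epsSymmetry} is stated for an \emph{arbitrary} Gibbs measure $\mu_G$, so invoking the Nishimori property to bridge this gap is not available in general---Nishimori is specific to the teacher-student construction. In the paper this causes no harm because $G_T$ is only ever instantiated with $G=\G^*$ (or the interpolating graphs), where Lemmata~\ref{lem:nishimori} and~\ref{lem:nishimoripinned} do apply; but for part~(1) to hold as written one should read Definition~\ref{def:pinning} with $\SIGMA$ drawn from $\mu_G$, after which $\mu_{G_T}$ is literally the measure $\vec{\check\mu}$ of Lemma~\ref{lem:PinningLemma} and the conclusion is immediate.

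Your sketch for part~(2) contains a genuine confusion. The assertion is that \emph{any} $\eps$-symmetric probability measure on $\Omega^n$ is already $(\delta,\ell)$-symmetric once $\eps$ is small enough in terms of $\delta,\ell,|\Omega|$; no further pinning enters, and there is no occasion to ``re-invoke the Pinning Lemma with a slightly larger $T$''. The argument in \cite{CKPZ} is a self-contained computation on the fixed measure: one telescopes $\mu_{x_1,\dots,x_\ell}-\bigotimes_j\mu_{x_j}$ and bounds each increment $\|\mu_{x_1,\dots,x_j}-\mu_{x_1,\dots,x_{j-1}}\otimes\mu_{x_j}\|_{TV}$ by conditioning on the first $j-1$ coordinates and using that, for most $(x_1,\dots,x_{j-1})$ and most values, the conditional marginal at $x_j$ is close to the unconditional one---which is exactly what iterated $\eps$-symmetry provides. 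That conditioning is elementary probability, not an application of Lemma~\ref{lem:PinningLemma}. Your proposed route would make the implication depend on the pinning construction and on $T$, whereas the actual implication is purely structural and holds for every measure on $\Omega^n$.
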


Also note that the pinning procedure has merely constant additive impact on the free energy. In the factor graph model all weight functions on check nodes are positive and all variable degrees are at most $\Delta$. Such being the case, pinning a single variable node changes the free energy of the graph by at most $O(1)$. Because the expected number of pinned variable nodes is at most $T$, we have the following Lemma.
\begin{lemma}
	If $\G$ is a random factor graph from the $(D,k)$-ensemble, then
	\begin{align*}
	\Erw[\ln Z(\G_T)]=\Erw[\ln Z(\G)]+O(1).
	\end{align*}
\end{lemma}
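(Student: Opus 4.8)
The plan is to control the effect on the free energy of adding the unary pinning constraints one at a time, using the fact that each such constraint multiplies the partition function by a bounded (and bounded away from $0$) factor, and that the expected number of added constraints is at most $T=O(1)$.

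\medskip

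First I would fix a factor graph $G$ on $n$ variable nodes from the $(D,k)$-ensemble, a subset $U\subset[n]$, and an assignment $\check\sigma\in\Omega^n$, and compare $Z(G_{U,\check\sigma})$ with $Z(G)$. Writing $U=\{x_1,\dots,x_r\}$ and letting $G^{(0)}=G$, $G^{(j)}=G^{(j-1)}$ with the unary constraint $\psi_{a_{x_j}}(\tau)=\vecone\{\check\sigma(x_j)=\tau\}$ appended, one has
\begin{align*}
Z(G^{(j)}) = \sum_{\sigma\in\Omega^n}\psi_{G^{(j-1)}}(\sigma)\,\vecone\{\sigma(x_j)=\check\sigma(x_j)\} = Z(G^{(j-1)})\,\mu_{G^{(j-1)}}\bigl(\{\sigma:\sigma(x_j)=\check\sigma(x_j)\}\bigr).
\end{align*}
The key point is to bound the marginal probability $q_j:=\mu_{G^{(j-1)}}(\sigma(x_j)=\check\sigma(x_j))$ from below by a constant depending only on $\Omega$ and $\Delta$. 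Since every weight function takes values in $(0,2)$ and $x_j$ participates in at most $\Delta$ many $k$-ary check nodes (plus at most one unary pinning node, whose weight is $1$ on the relevant color), flipping the value of $x_j$ in any assignment changes $\psi_G(\sigma)$ by a multiplicative factor in a fixed compact subinterval of $(0,\infty)$; hence $q_j\ge c_0$ for some $c_0=c_0(\Omega,\Delta)>0$, and trivially $q_j\le 1$. Therefore $\ln Z(G^{(j)})-\ln Z(G^{(j-1)})\in[\ln c_0,0]$, so $|\ln Z(G_{U,\check\sigma})-\ln Z(G)|\le |U|\cdot|\ln c_0|$.

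\medskip

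Next I would take expectations. Conditioning on $\G$ and on the pinning randomness, the bound above gives $|\ln Z(\G_T)-\ln Z(\G)|\le |\vec U|\cdot|\ln c_0|$ pointwise. Taking expectations and using that $\Erw|\vec U|=\Erw[\vec\theta]\le T$ (since each variable node is included with probability $\vec\theta/n$ and $\vec\theta\le T$), we obtain $|\Erw[\ln Z(\G_T)]-\Erw[\ln Z(\G)]|\le T|\ln c_0|=O(1)$, which is exactly the claim. One should also remark that $\SIGMA$ in Definition \ref{def:pinning} is chosen independently of $\G$, so no subtlety arises from correlations; the bound is deterministic given the realizations.

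\medskip

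The main obstacle — really the only non-routine point — is establishing the uniform lower bound $q_j\ge c_0$ on the pinned marginal, i.e.\ arguing that conditioning a Gibbs sample to take a prescribed value at one vertex cannot cost more than a constant factor in the partition function. This rests entirely on the boundedness of the weight functions (values in $(0,2)$, in particular bounded away from $0$ on their support after normalization) together with the bounded variable degree $\Delta$; concretely, for any assignment $\sigma$ and any target color $\omega\in\Omega$, replacing $\sigma(x_j)$ by $\omega$ multiplies $\psi_G(\sigma)$ by $\prod_{a\ni x_j}\psi_a(\sigma')/\psi_a(\sigma)$, a product of at most $\Delta$ ratios each lying in a fixed interval $[\rho,\rho^{-1}]\subset(0,\infty)$, whence $q_j\ge \rho^{\Delta}/|\Omega|=:c_0$. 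Everything else is a telescoping sum and a one-line expectation bound.
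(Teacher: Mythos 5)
Your proposal is correct and is essentially the paper's own argument, which simply notes that since all weight functions are positive (and $\Psi$, $\Omega$ are finite) and all variable degrees are at most $\Delta$, pinning a single variable changes $\ln Z$ by $O(1)$, while the expected number of pinned variables is at most $T$. Your telescoping decomposition and the explicit lower bound $q_j\ge\rho^{\Delta}/|\Omega|$ just fill in the details the paper leaves implicit.
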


Lemma \ref{lem:epsSymmetry} together with the so-called Nishimori property will be key tools utilized in the proof. The Nishimori property stands testament to the fact that if the prior $p$ satisfies a certain symmetry condition, the Gibbs measure in the teacher student model $\mu_{\G^*}$ does indeed resemble the a posteriori distribution of our planted assignment $\SIGMA^*$ given the graph outcome $\G^*$. We will use this argument in the same way as done in the proofs of \cite{CKPZ}. Nevertheless, the symmetry condition from \cite{CKPZ} imposes a much weaker restriction on the prior $p$. While being a direct implication of \textbf{SYM}, their condition merely yields mutual contiguity of both measures. In the class $(\Psi,p)$ of models that covers error-correcting codes transmitted over binary memoryless channels, we obtain the following stronger formulation.

\begin{lemma}
\label{lem:nishimori}
Suppose that $p$ satisfies \textbf{SYM}. Then
\begin{align}\label{eq:Nishi1}
\pr[\SIGMA^*=\sigma, \G^*(\SIGMA^*)=G]=\pr[\G^*= G]\mu_{G}(\sigma).
\end{align}
In particular, the distribution of $\SIGMA^*$ coincides with the Gibbs measure $\mu_{\G^*}$.
\begin{proof}
If $\G$ is a random graph model and the event $|F(\G)|=M$ has positive probability, we write $\G_M$ to be the conditional random graph on $M$ factor nodes. Observe that for any choice of $M, \sigma$ and any event $\cA$
\begin{align}
\pr[\G^*_M(\SIGMA^*)\in\cA|\SIGMA^*=\sigma]=\frac{\Erw[\psi_{\G_M}(\sigma)\vecone\{\G_M\in\cA\}]}{\Erw[\psi_{\G_M}(\sigma)]}.\label{eq:reweigh}
\end{align}
To see \eqref{eq:reweigh} it suffices to write out the weights for any event $$\cA=\{(V,F,(\partial a)_{a\in F},(\psi_a)_{a\in F}:(\psi_a)_{a\in F}\in W\subset \Psi^F\}.$$
Doing so, we get
\begin{align*}
&\pr[\G^*_M(\SIGMA^*)\in\cA|\SIGMA^*=\sigma]=\pr[\G_M=(V,F,(\partial a)_{a\in F})]\pr[(\vec \psi_a)_{a\in F}\in W|\SIGMA^*=\sigma]\\
=&\pr[\G_M=(V,F,(\partial a)_{a\in F})]\xi^{-M}\Erw[\psi_{(V,F,(\partial a)_{a\in F})}(\sigma)\vecone\{(\vec \psi_a)_{a\in F}\in W\}]\\
=&\xi^{-M}\Erw[\psi_{\G_M}(\sigma)\vecone\{\G_M\in\cA\}]
\end{align*}
and are left to verify that summing over all graphs $G=(V,F,(\partial a)_{a\in F})$ on $M$ factor nodes gives
\begin{align*}
&\Erw[\psi_{\G_M}(\sigma)]=\sum_{G}\pr[\G_M = G]\Erw[\psi_{\G(\sigma)}|\G = G] \\=& \sum_{G} \pr[\G_M=(V,F,(\partial a)_{a\in F})]\prod_{a\in F}\Erw[\PSI(\sigma(\partial a)]=\xi^M.
\end{align*}
Having established \eqref{eq:reweigh} let us now derive \eqref{eq:Nishi1}. If we write $\vec M$ for the distribution of $F(\G^*)$, then $\G^*_{\vec M}$ and $\G^*$ have the same distribution. Moreover, for any $M$ we have
\begin{align*}
\pr[\G^*_M\in\cA]=&\sum_{\sigma\in\Omega^n}\pr[\SIGMA^*=\sigma]\pr[\G^*_M(\SIGMA^*)\in\cA\in \cA|\SIGMA^*=\sigma]=\sum_{\sigma\in\Omega^n}\frac{\xi}{\xi}\frac{1}{|\Omega|^n}\frac{\Erw[\psi_{\G_M}(\sigma)\vecone\{\G_M\in\cA\}]}{\Erw[\vec\psi_{\G_M}(\sigma)]}\\
=&\sum_{\sigma\in\Omega^n}\frac{\Erw[\psi_{\G_M}(\sigma)]}{\Erw[Z(\G_M)]}\frac{\Erw[\psi_{\G_M}(\sigma)\vecone\{\G_M\in \cA\}]}{\Erw[\psi_{\G_M}(\sigma)]}=\frac{\Erw[Z(\G_M)\vecone\{\G_M\in \cA\}]}{\Erw[Z(\G_M)]}.
\end{align*}
Consequently, if we write $\SIGMA$ for a sample from the Gibbs measure $\mu_{\G^*}$, the joint distribution $(\SIGMA,\G^*)$ satisfies
\begin{align*}
\pr[\SIGMA=\sigma,\G_M^*(\SIGMA)=\cA]&=\frac{1}{\Erw[Z(\G_M)]}\Erw\left[\frac{\psi_{\G_M}(\sigma)}{Z(\G_M)}Z(\G_M)\vecone\{\G_M\in \cA\}\right]
=\pr[\SIGMA^*=\sigma]\pr[\G^*_M|\SIGMA^*=\sigma]
\end{align*}
by multiplying the denominator and numerator with $\Erw[\psi_{\G_M}(\sigma)]$ and applying \textbf{SYM} in the final step.
\end{proof}
\end{lemma}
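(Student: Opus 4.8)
The plan is to prove \eqref{eq:Nishi1} by a direct Bayesian computation of the conditional law of $\SIGMA^*$ given the observed weighted graph, in which \textbf{SYM} is used precisely to eliminate the $\sigma$-dependence of the normalizing constants appearing in step \textbf{TCH3}. The ``in particular'' statement is then a one-line consequence of summing \eqref{eq:Nishi1} over graphs.

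Concretely, I would fix a weighted factor graph $G=(V,F,(\partial a)_{a\in F},(\psi_a)_{a\in F})$ and an assignment $\sigma\in\Omega^n$ and expand $\pr[\SIGMA^*=\sigma,\ \G^*(\SIGMA^*)=G]$ as a product of the three contributions coming from \textbf{TCH1}, \textbf{TCH2} and \textbf{TCH3}: the uniform choice $\pr[\SIGMA^*=\sigma]=|\Omega|^{-n}$; the probability $q(\hat G):=\pr[\G_{\alpha,\beta,D}=\hat G]$ of the unweighted skeleton $\hat G$ of $G$, which does not involve $\sigma$; and the product over $a\in F$ of the weight-selection probabilities, each of which is $p(\psi_a)\,\psi_a(\sigma(\partial a))$ divided by the normalizer $\Erw[\PSI(\sigma(\partial_1 a),\ldots,\sigma(\partial_k a))]=\sum_{\psi'\in\Psi}p(\psi')\psi'(\sigma(\partial a))$.

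The crux is that, by \textbf{SYM}, this normalizer equals $\xi$ for every check node $a$ and every $\sigma$. Hence the expression factors as $\pr[\SIGMA^*=\sigma,\ \G^*(\SIGMA^*)=G]=|\Omega|^{-n}\,q(\hat G)\,\xi^{-|F|}\bigl(\prod_{a\in F}p(\psi_a)\bigr)\,\psi_G(\sigma)$, in which the only $\sigma$-dependent factor is $\psi_G(\sigma)=\prod_{a\in F}\psi_a(\sigma(\partial a))$, i.e.\ the unnormalized Gibbs weight. Summing over all $\sigma\in\Omega^n$ replaces $\psi_G(\sigma)$ by $Z(G)$ and yields $\pr[\G^*(\SIGMA^*)=G]=|\Omega|^{-n}\,q(\hat G)\,\xi^{-|F|}\bigl(\prod_{a\in F}p(\psi_a)\bigr)\,Z(G)$; dividing the two displays gives $\pr[\SIGMA^*=\sigma\mid\G^*=G]=\psi_G(\sigma)/Z(G)=\mu_G(\sigma)$, which is \eqref{eq:Nishi1}. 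Summing \eqref{eq:Nishi1} over all weighted graphs $G$ then gives $\pr[\SIGMA^*=\sigma]=\sum_G\pr[\G^*=G]\mu_G(\sigma)=\Erw[\mu_{\G^*}(\sigma)]$, i.e.\ drawing $\G^*$ and then a Gibbs sample reproduces the law of $\SIGMA^*$.

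The step I would flag as the main obstacle is the bookkeeping around the random number $|F|$ of check nodes produced by $\G_{\alpha,\beta,D}$, which is driven by the Poisson vector $\vec m$. One must confirm that the entire unweighted skeleton, $|F|$ included, is generated independently of $\SIGMA^*$, so that $q(\hat G)$ genuinely carries no $\sigma$-dependence and the telescoping $\prod_{a\in F}\xi^{-1}=\xi^{-|F|}$ is legitimate; the cleanest route is to condition on the skeleton --- equivalently, on $|F|$ and on the neighborhoods $(\partial a)_{a\in F}$ --- prove the identity in the conditioned model exactly as above, and then average over the skeleton. This conditioning also makes transparent that it is \textbf{SYM}, and not merely the weaker prior-symmetry hypothesis of \cite{CKPZ} (which yields only mutual contiguity of the two laws), that is responsible for the exact distributional identity \eqref{eq:Nishi1}.
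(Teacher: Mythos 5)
Your argument is correct and is essentially the paper's own proof: both condition on the unweighted skeleton (equivalently, on the number $M$ of check nodes, which is independent of $\SIGMA^*$), use \textbf{SYM} to show the \textbf{TCH3} normalizer equals $\xi$ per check node so that $\Erw[\psi_{\G_M}(\sigma)]=\xi^M$ is $\sigma$-independent, and then sum the reweighted law over $\sigma$ to produce $Z(G)$ and divide. The only difference is presentational --- you compute pointwise on a fixed weighted graph $G$ while the paper phrases the same computation in terms of events $\cA$ --- and your explicit flagging of the independence of the Poisson-driven $|F|$ from $\SIGMA^*$ matches the paper's conditioning on $M$.
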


The following lemma shows that applying the pinning process in any order preserves the Nishimori property.
\begin{lemma}
\label{lem:nishimoripinned}
For any set of vertices $U\subset [n]$ the following distributions on pairs of factor graphs and assignments are identical. 
\begin{enumerate}[(1)]
\item Choose $\SIGMA^{(1)}=\SIGMA^*$, then choose $\G^{(1)}=\G_{U,\SIGMA^{(1)}}^*\left(\SIGMA^{(1)}\right)$ and output $\left(\SIGMA^{(1)},\G^{(1)}\right)$.
\item Choose $\G=\G^*$. Choose $\SIGMA^{(2)}$ from $\mu_{\G}$, then $\G^{(2)}=\G_{U,\SIGMA^{(2)}}$ and output $\left(\SIGMA^{(2)},\G^{(2)}\right)$.
\item Choose $\G^{(3)}=\G^*(\SIGMA)_{U,\SIGMA}$. Then choose $\SIGMA^{(3)}$ from $\mu_{\G^{(3)}}$ and output $\left(\SIGMA^{(3)},\G^{(3)}\right)$.
\item Choose $\G'=\G^*$ and $\SIGMA'$ from $\mu_{\G'}$. Pin $\G^{(4)}=\G_{U,\SIGMA'}$ and choose $\SIGMA^{(4)}$ from $\mu_{\G^{(4)}}$. Output $\left(\SIGMA^{(4)},\G^{(4)}\right)$.
\end{enumerate}
\end{lemma}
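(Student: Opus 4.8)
The plan is to reduce all four distributions to the distribution in item (3), which is manifestly the Nishimori-type joint law of a planted model with the pinning constraints folded into the prior. The key structural observation is that pinning is just the addition of unary check nodes with weight functions $\psi_{a_x}(\tau)=\vecone\{\check\sigma(x)=\tau\}$, and these weights already satisfy \textbf{SYM} trivially (each color class has the same expected weight when $\check\sigma$ is uniform and independent across pinned nodes), so the enlarged prior $p'$ on the union of the original $\Psi$ and these unary constraints still satisfies \textbf{SYM}. Hence Lemma \ref{lem:nishimori} applies verbatim to the pinned model, giving the identity $\pr[\SIGMA^*=\sigma,\G^*(\SIGMA^*)_{U,\SIGMA^*}=G]=\pr[\G^{(3)}=G]\,\mu_G(\sigma)$, which is exactly the law in (3).

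First I would record the equivalence of (1) and (3). In (1) we first draw $\SIGMA^{(1)}=\SIGMA^*$ uniform, then build the planted-and-pinned graph $\G_{U,\SIGMA^{(1)}}^*(\SIGMA^{(1)})$; in (3) we draw $\G^{(3)}=\G^*(\SIGMA)_{U,\SIGMA}$ for the same planted construction and then resample $\SIGMA^{(3)}$ from the Gibbs measure $\mu_{\G^{(3)}}$. These agree because $\G^*(\SIGMA)_{U,\SIGMA}$ is precisely the planted model with prior $p'$, and Lemma \ref{lem:nishimori} — applied with $p'$ in place of $p$ — says the planted ground truth is distributed as a Gibbs sample conditioned on the graph. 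Concretely, both give $\pr[\sigma,G]=\pr[\SIGMA^*=\sigma]\cdot\pr[\G^*_{U,\sigma}(\sigma)=G]=\pr[\text{graph}=G]\cdot\mu_G(\sigma)$, where the second equality is \eqref{eq:Nishi1} for the enlarged model. Next, the equivalence of (3) with (2) and (4): these differ from (3) only in that the \emph{pinning coordinates} are read off from a fresh Gibbs sample $\SIGMA^{(2)}$ (resp.\ $\SIGMA'$) rather than from the planted truth. But by the identity just established (in the \emph{unpinned} teacher–student model, Lemma \ref{lem:nishimori}), $\SIGMA^*$ and a Gibbs sample from $\mu_{\G^*}$ have the same joint law with $\G^*$; so conditioning the pinning locations on one or the other produces the same joint distribution. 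In (2) one takes $\G=\G^*$, pins along a Gibbs sample, then resamples from $\mu_{\G_{U,\SIGMA^{(2)}}}$; in (4) one pins along $\SIGMA'\sim\mu_{\G'}$ and then resamples $\SIGMA^{(4)}\sim\mu_{\G^{(4)}}$; in (3) one pins along the planted $\SIGMA$ and resamples. The outer resampling from the pinned Gibbs measure is common to all three, so it suffices to check that the \emph{pinned graph} has the same marginal law in each case, and this is exactly the statement that $(\SIGMA^*,\G^*)$ and $(\SIGMA\sim\mu_{\G^*},\G^*)$ are equidistributed, i.e.\ Lemma \ref{lem:nishimori} again.

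The step I expect to require the most care is checking that the pinned teacher–student model genuinely falls under the hypotheses of Lemma \ref{lem:nishimori}: one must verify that augmenting $\Psi$ by the deterministic unary weight functions, with $\check\sigma$ chosen uniformly at random per pinned node, yields a prior satisfying \textbf{SYM}, and that the construction \textbf{TCH1}–\textbf{TCH3} applied to this augmented prior coincides with ``plant $\SIGMA^*$, then pin along $\SIGMA^*$'' — in particular that the reweighting in \textbf{TCH3} acts trivially on a unary node $a_x$ with $x\in U$, since $\psi_{a_x}(\SIGMA^*(x))=1$ identically. Once this bookkeeping is in place, all four identities are instances of a single application of Lemma \ref{lem:nishimori} to the appropriate (possibly pinned) model, and no further computation is needed.
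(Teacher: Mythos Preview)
Your overall strategy---viewing the pinned model as an enlarged teacher--student model so that Lemma~\ref{lem:nishimori} applied to the augmented prior immediately yields $(1)=(3)$---is a clean alternative to the paper's route. The paper instead argues $(1)=(2)$ and $(3)=(4)$ directly from the unpinned Nishimori identity (pinning being a deterministic function of the pair $(\sigma,G)$), and then closes the loop with an explicit computation showing $(2)=(4)$.

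There is, however, a genuine gap in your handling of item~(2). You write that in (2) one ``pins along a Gibbs sample, then resamples from $\mu_{\G_{U,\SIGMA^{(2)}}}$,'' but that is the description of (4), not (2): in (2) the output assignment is $\SIGMA^{(2)}$ itself, the very sample used to pin, with no resampling step. Your reduction ``it suffices to check that the pinned graph has the same marginal law in each case'' is therefore not sufficient for (2), because you have not shown that the conditional law of $\SIGMA^{(2)}$ given $\G^{(2)}$ is $\mu_{\G^{(2)}}$. This is exactly the content of the computation the paper carries out for $(2)=(4)$: one must observe that $\G^{(2)}=\G_{U,\SIGMA^{(2)}}$ determines both $\G$ and $\SIGMA^{(2)}|_U$, and that conditionally on these the law of $\SIGMA^{(2)}$ under $\mu_{\G}$ is precisely $\mu_{\G}(\,\cdot\mid\cdot|_U=\SIGMA^{(2)}|_U)=\mu_{\G^{(2)}}$. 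Once you insert this one-line argument, your approach goes through; but as written, the step that distinguishes (2) from (4) is precisely the one you have skipped.
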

\begin{proof}
By Lemma \ref{lem:nishimori} the pairs (1) and (2) as well as (3) and (4) are identical. To see that (2) and (4) coincide, it suffices to prove that
\begin{align*}
\pr[\SIGMA^{(2)}=\sigma|\G^{(2)}\in\cA]=\pr[\SIGMA^{(4)}=\sigma|\G^{(4)}\in\cA]
\end{align*}
holds for any event $\cA=\{(G,\check\sigma)\,:\,G\in \cG\}$, where $\cG$ is an arbitrary set of unpinned factor graphs and $\check\sigma\in\Omega^U$ is a pinning of the variables in $U$. Denote by $\SIGMA_G$ a sample from the Gibbs measure $\mu_G$ and write $\sigma_U$ for the restriction of $\sigma$ to $U$. Then
\begin{align}
&\pr[\SIGMA^{(2)}=\sigma|\G^{(2)}\in\cA]=\pr[\SIGMA^{(2)}=\sigma|\G\in\cG, \SIGMA_U^{(2)}=\check\sigma]=\pr[\SIGMA_{\G}=\sigma|\G\in\cG,\SIGMA_{\G,U}=\check\sigma]\nonumber\\
=&\frac{1}{\Erw[\vecone\{\SIGMA_{\G,U}=\check\sigma\}\vecone\{\G \in\cG\}]}\Erw\left[\frac{\psi_{\G}(\sigma)\vecone\{\sigma_U=\check\sigma\}}{\sum_{\tau\in\Omega^n}\psi_{\G}(\tau)\vecone\{\tau_U=\check\sigma\}}\vecone\{\SIGMA_{\G,U}=\check\sigma\}\vecone\{\G\in\cG\}\right]\nonumber\\
=&\Erw[\vecone\{\G^{(2)}\in\cG\}]^{-1}\Erw\left[\frac{\psi_{\G^{(2)}}(\sigma)}{Z(\G^{(2)})}\vecone\{\G^{(2)}\in\cG\}\right].\label{eq:NM2and4}
\end{align}
By pinning the same neighborhood the evaluation in the teacher-student model remains the same and thus \eqref{eq:NM2and4} equals
$\Erw[\vecone\{\G^{(4)}\in\cG\}]^{-1}\Erw\left[Z(\G^{(4)})^{-1}\psi_{\G^{(4)}}(\sigma)\vecone\{\G^{(4)}\in\cG\}\right]=\pr[\SIGMA^{(4)}=\sigma|\G^{(4)}\in\cA].
$
\end{proof}

\subsection{Proof of the main result}
We will now prove the conjectured formula by showing that the mutual information per bit $\frac{1}{n}I(\SIGMA^*,\G_D^*)$ converges to the solution of a stochastic fixed-point equation. To state our result in a more general setting we will write the convergence in terms of the free energy $\frac{1}{n}\Erw[\ln Z(\G_D^*)]$

\begin{proposition}
\label{prop:upper_bound}
If \textbf{SYM} and \textbf{POS} hold, then $
\limsup_{n \to \infty} -\frac{1}{n} \E[\ln Z(\G_D^*)] \le - \sup_{\pi \in \P_*^2(\Omega)} \mathcal{B}(D,\pi).
$
\end{proposition}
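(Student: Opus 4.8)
The assertion is equivalent to the lower bound on the free energy density $\liminf_{n\to\infty}\tfrac1n\Erw[\ln Z(\G_D^*)]\ge\sup_{\pi\in\PP}\cB(D,\pi)$, and the plan is to prove it by the Guerra--Toninelli interpolation method. The three ingredients are \textbf{POS}, which will supply the sign of the interpolation derivative; the Pinning Lemma~\ref{lem:PinningLemma}, which makes the local geometry of the Gibbs measure tractable via $(\delta,k)$-symmetry; and the Nishimori property \eqref{eq:Nishi1}, which pins down the \emph{mean} of the empirical marginal distribution $\pi_{\G_t^{(\pi)}}$ along the interpolation.

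First I would reduce to the layered approximation. Writing $\G^*=\G^*_{\alpha,\beta,D}$, Proposition~\ref{prop:goodcoupling} produces $\G_D^*$ from $\G^*$ by adding $O(\alpha n)$ more check nodes and then rewiring only $O(f(\alpha,\beta))=O(1)$ of them. Since $\Psi$ is finite, all weights are bounded away from $0$ and from above, so (using the Nishimori identity \eqref{eq:Nishi1} to control the added nodes) each check node changes $\Erw[\ln Z]$ by an additive $O(1)$, whence $\tfrac1n\Erw[\ln Z(\G_D^*)]=\tfrac1n\Erw[\ln Z(\G^*)]+O(\alpha)+o(1)$. It therefore suffices to bound $\liminf_n\tfrac1n\Erw[\ln Z(\G^*)]$ from below and to let $\alpha,\beta\to0$ at the end; the layered structure of $\G^*$ is precisely what the interpolation needs.

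Fix $\pi\in\PP$. I would build a family $(\G_t^{(\pi)})_{t\in[0,1]}$ sharing the layers of $\G^*$: $\G_1^{(\pi)}=\G^*$; $\G_0^{(\pi)}$ is the decoupled forest obtained by splitting every $k$-ary check node into $k$ unary nodes whose weights are manufactured, teacher--student style, from independent samples $\vec\mu^{(\pi)}_\bullet$ of $\pi$; and for $0<t<1$ each $k$-ary check node is, independently and layer by layer, either kept intact (with the appropriate $t$-dependent probability) or split, the legs again carrying fresh $\pi$-messages. Because $Z(\G_0^{(\pi)})$ factorises over variable nodes, a direct computation gives $\tfrac1n\Erw[\ln Z(\G_0^{(\pi)})]=\cB(D,\pi)+o(1)$: here the factor $\xi^{-\vec\gamma}$ in $\cB$ absorbs the $\xi$-normalisation of the teacher--student weights on the $\vec\gamma$ legs at a variable of degree $\vec\gamma$, and the $\tfrac{k-1}{k\xi}\Erw[\vec\gamma]$ term records that $\tfrac{k-1}{k}\Erw[\vec\gamma]\,n$ many $k$-ary checks are split (equivalently, the Bethe edge correction for $k$-ary factors).

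The crux is to estimate the increment $\tfrac1n\Erw[\ln Z(\G_1^{(\pi)})]-\tfrac1n\Erw[\ln Z(\G_0^{(\pi)})]=\int_0^1\tfrac{\dd}{\dd t}\tfrac1n\Erw[\ln Z(\G_t^{(\pi)})]\,\dd t$ (made rigorous as a telescoping over layers together with a coupling of the intermediate graphs). For each fixed $t$, the Pinning Lemma~\ref{lem:PinningLemma} applied to $\mu_{\G_t^{(\pi)}}$ yields $\eps$-symmetry, hence $(\delta,k)$-symmetry by Lemma~\ref{lem:epsSymmetry}, with probability $\ge 1-\eps$; on the exceptional event the integrand is $O(1)$ by the boundedness of weights and degrees, costing at most $\eps\,O(1)$ in the integral. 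On the good event $(\delta,k)$-symmetry lets one replace the joint Gibbs law of the $k$ neighbours of the check node whose state is being switched by a product of i.i.d.\ samples from $\pi_{\G_t^{(\pi)}}$, and since pinning commutes with the layered teacher--student construction (Lemma~\ref{lem:nishimoripinned}) the Nishimori property survives along the whole path, which forces the mean of $\pi_{\G_t^{(\pi)}}$ to be uniform, i.e.\ $\pi_{\G_t^{(\pi)}}\in\PP$ up to $o(1)$. Plugging these into the cavity computation of the increment and using that \textbf{SYM} gives $\Erw[\PSI(\sigma)]=\xi$ for \emph{every} $\sigma\in\Omega^k$, the logarithms turn into $\Lambda$'s and, with $\MU_i$ i.i.d.\ from $\pi_{\G_t^{(\pi)}}$ and $\MU_i'$ i.i.d.\ from $\pi$, the derivative becomes a positive density multiple of
\begin{align*}
\Erw\left[\Lambda\left(\textstyle\sum_{\tau}\PSI(\tau)\prod_{i=1}^k\MU_i(\tau_i)\right)+(k-1)\Lambda\left(\textstyle\sum_{\tau}\PSI(\tau)\prod_{i=1}^k\MU_i'(\tau_i)\right)-k\Lambda\left(\textstyle\sum_{\tau}\PSI(\tau)\MU_1(\tau_1)\prod_{i=2}^k\MU_i'(\tau_i)\right)\right],
\end{align*}
which is $\ge 0$ by \textbf{POS}. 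Hence the integrand is $\ge -o(1)-\eps\,O(1)$, so $\tfrac1n\Erw[\ln Z(\G_D^*)]\ge\cB(D,\pi)-o(1)-\eps\,O(1)-O(\alpha)$; sending $n\to\infty$, then $\eps\to0$, then $\alpha,\beta\to0$, and finally taking the supremum over $\pi\in\PP$ gives the claim. The hard part is exactly this increment estimate: one must make the neighbour factorisation good enough \emph{uniformly in $t$} for the $t$-integral to survive, check that pinning genuinely preserves the Nishimori structure along the interpolation, and carry out the delicate bookkeeping that reproduces the $1:(k-1):(-k)$ multiplicities of \textbf{POS}; a secondary technical point is the $\alpha,\beta\to0$ transfer from $\G^*_{\alpha,\beta,D}$ to $\G_D^*$, which Proposition~\ref{prop:goodcoupling} and the finiteness of $\Psi$ handle.
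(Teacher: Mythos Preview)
Your approach is the paper's Guerra--Toninelli scheme, and the roles you assign to pinning, Nishimori and \textbf{POS} are the right ones. There is, however, one concrete gap in the bookkeeping that would make the argument fail as written.

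The claim $\tfrac1n\Erw[\ln Z(\G_0^{(\pi)})]=\cB(D,\pi)+o(1)$ is not correct. Since $\G_0^{(\pi)}$ is a forest of variable nodes with only unary checks attached, $Z(\G_0^{(\pi)})$ factorises over variables and its normalised log yields only the \emph{first} summand of $\cB(D,\pi)$ (the $\xi^{-\vec\gamma}\Lambda(\sum_\sigma\prod_b\cdots)$ term). The second summand, $-\tfrac{k-1}{k\xi}\Erw[\vec\gamma]\,\Erw[\Lambda(\sum_\tau\PSI(\tau)\prod_{j=1}^k\vec\mu_j^{(\pi)}(\tau_j))]$, involves a full $k$-fold product of $\pi$-samples and cannot arise from a graph carrying only unary factors; it is not the partition function of anything in your interpolation. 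The paper handles this by interpolating not $\tfrac1n\Erw[\ln Z(\G_t)]$ but the corrected functional $\Phi_{T,s}(t)=\tfrac1n(\Erw[\ln Z(\G^*_{T,s,t})]+\Gamma_{s,t})$, with an explicit deterministic $\Gamma_{s,t}$ whose value at the top endpoint supplies exactly the missing second summand of $\cB$.

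This is not cosmetic: the same correction is what produces the $(k-1)$ multiplicity you need for \textbf{POS}. Differentiating $\Erw[\ln Z(\G_t)]$ alone gives only two contributions, ``add one $k$-ary check'' and ``remove $k$ unary checks'', i.e.\ the pattern $(+1,-k)$; after the cavity/series manipulation these become the $\prod_i\MU_i$ term and the mixed $\MU_1\prod_{j\ge2}\MU_j'$ term. The third term in \textbf{POS}, with coefficient $(k-1)$ and argument $\prod_j\MU_j'$, is precisely $\tfrac{\partial}{\partial t}\Gamma_{s,t}$. Without $\Gamma$ neither the endpoint identification nor the sign argument via \textbf{POS} goes through, so your flagged ``delicate bookkeeping that reproduces the $1:(k-1):(-k)$ multiplicities'' is not merely delicate but impossible in the setup you describe. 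Once you insert the correction term, the rest of your outline (layer-by-layer interpolation, pinning for $(\delta,k)$-symmetry, Nishimori along the path via Lemma~\ref{lem:nishimoripinned}, and the final $\alpha,\beta\to0$ transfer through Proposition~\ref{prop:goodcoupling}) matches Section~\ref{sec:upper_bound}.
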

We prove Proposition \ref{prop:upper_bound} in Section \ref{sec:upper_bound} by performing the Guerra-Toninelli interpolation between $\G_1=\G^*_D$ and a forest of isolated variable nodes $\G_0$. Taking account of the details that come along, this can be understood as simply splitting each $k$-ary factor node into $k$ unary factors with probability $1-t$, $t\in[0,1]$. As $-\Erw[\ln Z(\G_0)]$ is easy to compute, we can show that the resulting expression is also an upper bound for $-\Erw[\ln Z(\G_1)]$ by verifying that $\partial /\partial t \Erw[\ln Z(\G_1)]$ is positive on $(0,t)$. 

\begin{proposition}
\label{prop:lower_bound}
If \textbf{SYM} holds, then $
\limsup_{n \to \infty} -\frac{1}{n} \E[\ln Z(\G_D^*)] \ge - \sup_{\pi \in \P_*^2(\Omega)} \mathcal{B}(D,\pi).
$
\end{proposition}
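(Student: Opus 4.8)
The plan is to establish the matching lower bound on $-\frac1n\Erw[\ln Z(\G_D^*)]$ via the Aizenman--Sims--Starr scheme, which is exactly the route announced in Proposition~\ref{prop:strategy_lowerbound} (modulo translating between mutual information and free energy through Lemma~\ref{lem:nishimori} and the easy computation of $H(\SIGMA^*)$). First I would set $\cH_n=\Erw[\ln Z(\G_{n,D}^*)]$ (up to the $O(1)$ pinning correction from the lemma preceding Lemma~\ref{lem:nishimori}) and use the telescoping identity $\frac1n\cH_n=\frac1n\sum_{j<n}(\cH_{j+1}-\cH_j)$ to reduce the claim to a \emph{lower} bound on $\liminf_n(\cH_{n+1}-\cH_n)$, i.e.\ an upper bound on $\limsup_n(-\Delta_\cH(n))$. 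The core step is to realize both $\G_{n,D}^*$ and $\G_{n+1,D}^*$ as marginals of a single coupled random graph process: starting from a graph $\G'$ on $n$ variables carrying a slightly reduced number of check nodes, one obtains (a distribution close to) $\G_{n,D}^*$ by adding a Poisson number of $k$-ary checks attached to the first $n$ variables, and (a distribution close to) $\G_{n+1,D}^*$ by adding the $(n{+}1)$st variable together with its $\vec\gamma$ incident checks plus the compensating bulk checks. This is the "add one variable" picture of Figure~\ref{fig:ass}, and the Multi-Poisson approximation $\G_{\alpha,\beta,D}^*$ together with Proposition~\ref{prop:goodcoupling} is what makes the two additions genuinely independent Poisson increments with an $n$-independent coupling error.

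Next I would compute $\cH_{n+1}-\cH_n=\Erw\ln\frac{Z(\G_{n+1,D}^*)}{Z(\G_{n,D}^*)}$ in this coupling by writing each added batch of check nodes as a reweighting of the Gibbs measure $\langle\cdot\rangle_{\G'}$: the contribution of a new check node $a$ with weight $\vec\psi_a$ and neighborhood $(\partial_1a,\dots,\partial_ka)$ is $\langle\prod\text{(local factor)}\rangle_{\G'}$, so $Z(\G_{n+1,D}^*)/Z(\G_{n,D}^*)$ becomes a ratio of two such Gibbs averages. Here the pinning apparatus (Lemmata~\ref{lem:PinningLemma} and~\ref{lem:epsSymmetry}) enters decisively: after pinning, $\mu_{\G'}$ is $(\delta,\ell)$-symmetric, so the joint law of the $O(1)$ relevant $k$-tuples of coordinates touched by the newly added checks factorizes into a product of single-site marginals up to error $\delta$. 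Combined with the Nishimori property (Lemma~\ref{lem:nishimori} and its pinned version Lemma~\ref{lem:nishimoripinned}), the empirical marginal law $\pi_{\G'}$ plays the role of the distribution $\pi\in\PP$ in $\cB(D,\pi)$, and one identifies the limit of $\cH_{n+1}-\cH_n$, after taking $\delta\to0$ and $\alpha,\beta\to0$, with an expression of exactly the shape appearing in $\cB(D,\pi_{\G'})$: the first term of $\cB$ comes from the $\vec\gamma$ checks incident to the new variable (summing over its $\Omega$-value, product over its $\vec\gamma$ neighbors, each a weighted average against the pinned marginals, with the $\vec h_i$ selecting the slot of the new variable), and the $\frac{k-1}{k}\Erw[\vec\gamma]$ term comes from the difference in the number of "bulk" $k$-ary checks added on the two sides. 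Bounding $\Erw[\cB(D,\pi_{\G'})]\le\sup_{\pi\in\PP}\cB(D,\pi)$ then yields the claimed inequality.

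The main obstacle I expect is the interchange of limits and the control of the pinning/coupling errors: one must take $n\to\infty$ with $\alpha,\beta$ fixed, then send $\delta\to0$ (which forces the pinning parameter $T\to\infty$ and hence $\ell\to\infty$ in Lemma~\ref{lem:epsSymmetry}), and only then $\alpha,\beta\to0$, while checking that each approximation error is uniformly bounded --- this is where Proposition~\ref{prop:goodcoupling}'s $n$-independence of the rewiring cost is essential, and where one needs boundedness of $\ln$ of the relevant Gibbs ratios (guaranteed since all weights lie in $(0,2)$ and degrees are at most $\Delta$, so $\ln Z$ changes by $O(1)$ under adding or rewiring a bounded number of checks). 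A secondary subtlety is that $\pi_{\G'}$ is a \emph{random} element of $\PP$ rather than a fixed one, so the passage from $\Erw[\cB(D,\pi_{\G'})]$ to $\sup_\pi\cB(D,\pi)$ uses only the trivial bound, but one must make sure $\pi_{\G'}$ genuinely lies in $\PP$ --- its mean equals the uniform distribution --- which again follows from the Nishimori property, since under it a Gibbs sample has the same law as $\SIGMA^*$, whose coordinates are uniform.
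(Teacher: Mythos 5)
Your plan follows essentially the same route as the paper: the ASS telescoping reduction, the coupling of $\G^*_{T,n}$ and $\G^*_{T,n+1}$ through a common subgraph (the paper's $\G^0$ and the experiments \textbf{CPL1}--\textbf{CPL2''}), factorization of the relevant Gibbs marginals via pinning (Claims \ref{claim:bethe1} and \ref{claim:bethe2}), identification of the two Bethe terms (Lemmata \ref{lem:bethe1} and \ref{lem:bethe2}), and finally $\Erw[\cB(D,\pi_{\G^0})]\le\sup_{\pi}\cB(D,\pi)$. One slip to fix: the telescoping reduces the claim to an \emph{upper} bound on $\limsup_n(\cH_{n+1}-\cH_n)$ --- this is exactly Proposition \ref{prop:diffbound} --- not a lower bound on the $\liminf$ as you write; your concluding step ($\le\sup_\pi\cB$) is in the correct direction, so only that intermediate sentence is inconsistent with the rest. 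Also, $\pi_{\G^0}$ lies in $\cP^2_*(\Omega)$ only asymptotically, so passing to the supremum over $\PP$ additionally needs the weak continuity of $\pi\mapsto\cB(D,\pi)$ (Fact \ref{fact:continuous}) combined with Fact \ref{fact:almostunif} and a small perturbation of $\pi_{\G^0}$; you correctly trace the mean-uniformity to the Nishimori property, but this closing step should be made explicit.
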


In Section \ref{sec:lowerbound} we prove the lower bound in Proposition \ref{prop:lower_bound} via the previously described Aizenman-Sims-Starr scheme. That is we bound the difference between the free energy of $\G_n$ and $\G_{n+1}$. By coupling the joint distribution on a large common subgraph $\G^0$, we can compute the expected change in the free energy given by the additional constraints going from $\G^0$ to the graphs $\G_n$ and $\G_{n+1}$. This is possible due to Lemma \ref{lem:PinningLemma} as these constraints are generated from the empirical Gibbs marginals on $\G^0$ which does not significantly differ from $\pi_{\G_n}$ or $\pi_{\G_{n+1}}$.

By writing out the mutual information, Propositions \ref{prop:upper_bound} and \ref{prop:lower_bound} translate into Propositions \ref{prop:strategy_lowerbound} and \ref{prop:strategy_upperbound}, thus revealing the main theorem. 
\begin{proof}[Proof of Theorem \ref{thm:mainInfThm}]
	Theorem \ref{thm:mainInfThm} is a corollary of Propositions \ref{prop:upper_bound} and \ref{prop:lower_bound}. Writing out the mutual information with Lemma \ref{lem:nishimori} we get
	\begin{align*}
	I(\SIGMA^*,\G_D^*) = \sum_{G} \pr[\G_D^* = G]\sum_{\sigma}\mu_G(\sigma)\ln\frac{\mu_G(\sigma)}{\pr[\SIGMA^*=\sigma]} = H(\SIGMA^*)- \Erw[H(\mu_{\G^*})].
	\end{align*}
	As $\SIGMA$ chosen from $\mu_{\G^*}$ coincides with $\SIGMA^*$ given $\G^*$, we obtain
	\begin{align*}
	\Erw[H(\mu_{\G^*})]&=
	 -\Erw\left[\frac{\psi_{\G^*}(\sigma)}{Z(\G^*)}\ln \frac{\psi_{\G^*}(\sigma)}{Z(\G^*)}\right] = \Erw\ln Z(\G^*)- \Erw[\langle \psi_{\G^*}(\SIGMA)\rangle_{\G^*}] \\
	&=\Erw\ln Z(\G^*) - \Erw \ln \psi_{\G^*}(\SIGMA^*) = \Erw\ln Z(\G^*) - \frac{n\Erw[\vec \gamma]}{k|\Omega|^k}\sum_{\tau\in \Omega^k}\Erw\left[\Lambda(\PSI(\tau))\right].
	\end{align*}
	Because $\SIGMA^*$ is chosen from the uniform distribution on $\Omega^n$ the assertion follows.
\end{proof}

Theorem \ref{thm:mainInfThm} readily implies Theorem \ref{thm:Codes} by verifying \textbf{SYM} and \textbf{POS} for the case of $\eta$-noisy $(D,k)$-LDGM code.

\begin{proof}[Proof of Theorem \ref{thm:Codes}]
	Let $\Omega=\{\pm 1\}$, $\eta\in(0,1/2)$ and $(\Psi,p)=(\{\psi_1,\psi_{-1}\},p)$, where $p$ is the uniform distribution on $\Psi$ and $\psi_s(\sigma)=1+s(1-2\eta)\prod_{i=1}\sigma_i$ for $s\in\Omega$,  $\omega\in\Omega^k$. Clearly, \textbf{SYM} holds as 
	$
	\Erw[\PSI(\sigma)]=\psi_1(\sigma)/2 + \psi_{-1}(\sigma)/2 = (1\pm (1-2\eta))/2+ (1\mp (1-2\eta))/2 = 1.
	$ To see \textbf{POS} write $\PSI= \psi_{\vec s}$ for a uniform choice of $\vec s\in\Omega$. Writing $\MU_1,\MU_2,\ldots$ for independent samples from $\pi\in\PP$ and $\MU_1',\MU_2',\ldots$ for independent samples from $\pi'\in\PP$, for every $l\ge 1$ and $i\in[k]$ we obtain 
	\begin{align*}\textstyle
	\left(1-\sum_{\sigma\in\Omega^k}\PSI(\sigma)\prod_{j=1}\MU_j(\sigma_j)\right)^l &= ((1-2	\eta)\vec s)^l\prod_{j=1}^k\left(\sum_{\sigma\in\Omega}\sigma \MU_j(\sigma)\right)^l,\\\textstyle
	\left(1-\sum_{\sigma\in\Omega^k}\PSI(\sigma)\prod_{j=1}\MU'_j(\sigma_j)\right)^l &= ((1-2	\eta)\vec s)^l\prod_{j=1}^k\left(\sum_{\sigma\in\Omega}\sigma \MU'_j(\sigma)\right)^l,\\\textstyle
		\left(1-\sum_{\sigma\in\Omega^k}\PSI(\sigma)\MU_i(\sigma_I)\prod_{j\neq i}\MU'_j(\sigma_j)\right)^l & = ((1-2	\eta)\vec s)^l\left(\sum_{\sigma\in\Omega}\sigma \MU_i(\sigma)\right)^l\prod_{j\neq i}\left(\sum_{\sigma\in\Omega}\sigma \MU'_j(\sigma)\right)^l.
	\end{align*}
	Setting $X_l=\Erw[(\sum_{\sigma}\sigma\MU_1(\sigma))^l]$, $Y_l = \Erw[(\sum_{\sigma}\sigma\MU'_1(\sigma))^l]$ and writing $\Lambda$ in a logarithmic series expansion, it is clearly sufficient to show that for any $l\ge 1$ we have $\Erw[((1-2	\eta)\vec s)^l](X_l^k+(k-1)Y_l^k-kX_lY_l^{k-1})\ge 0.$
	The case of odd $l$ is immediate. For even $l$, $X$ and $Y$ are non-negative and thus $X_l^k+(k-1)Y_l^k-kX_lY_l^{k-1}$ is non-negative.
\end{proof}

\section{The lower bound}
\label{sec:lowerbound}
In this section we perform the technical computations for the Aizenman-Sims-Starr scheme. For the remainder of this section we let $\alpha\in(0,1), \beta>0$ and $D$ be arbitrary but fixed. Let $\G_n^*$ be the random graph obtained from the experiment \textbf{TCH} on $n$ vertices and let $\G_{n+1}^*$ be chosen from \textbf{TCH} on $n+1$ vertices. Moreover, we let $\G^*_{T,n}$ and $\G^*_{T,n+1}$ signify the respective graphs obtained after performing the pinning procedure from Definition \ref{def:pinning}.

Establishing the following proposition, by way of a generalization of an argument from \cite{CKPZ}, we will pave the way to proving our lower bound. 
\begin{proposition}
\label{prop:diffbound}
Let $\Delta_T(n)=\Erw[\ln Z(\G^*_{T,n+1})]-\Erw[\ln Z(\G^*_{T,{n}})]$. Then
\begin{align}
\label{eq:prop_diffbound}
\limsup_{T\to\infty}\limsup_{n\to\infty}\Delta_T(n)\le \sup_{\pi\in\cP^2_*(\Omega)}\cB(D,\pi).
\end{align}
\end{proposition}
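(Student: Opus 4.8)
The plan is to implement the Aizenman–Sims–Starr telescoping estimate by coupling $\G^*_{T,n}$ and $\G^*_{T,n+1}$ on a common subgraph $\G^0$ and computing the expected change in free energy contributed by the extra check nodes at each end. First I would fix a large layer index and generate the approximating graph $\G_{\alpha,\beta,D}$ layer by layer as in Definition~\ref{def:approx}: since $M=M(D,k,n)$ is (approximately) Poisson with mean $\Theta(n)$, moving from $n$ to $n+1$ variable nodes adds in expectation $\Erw[\vec\gamma]/k$ additional check nodes on the right-hand side, while on the left-hand side the new variable node $n+1$ attaches to $\vec\gamma$ check nodes (with $\vec\gamma$ drawn from $D$), each of which must be ``rewired'' to include it. The core identity is
\begin{align*}
\Delta_T(n)=\Erw\bigl[\ln Z(\G^*_{T,n+1})\bigr]-\Erw\bigl[\ln Z(\G^*_{T,n})\bigr]
=\Erw\bigl[\ln\tfrac{Z(\G^*_{T,n+1})}{Z(\G^0)}\bigr]-\Erw\bigl[\ln\tfrac{Z(\G^*_{T,n})}{Z(\G^0)}\bigr]+o(1),
\end{align*}
where $\G^0$ is the shared sub-instance and the two ratios are the ``added-vertex'' and ``added-checks-only'' contributions respectively; each ratio equals $\langle\cdot\rangle_{\G^0}$-expectations of products of weight functions over the newly added check nodes, expressible through the Gibbs marginals of $\G^0$.

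The key steps, in order, are: (i) carry out the coupling so that $\G^*_{T,n}$ and $\G^*_{T,n+1}$ share $\G^0$ and the discarded/added parts are as described; this relies on Proposition~\ref{prop:goodcoupling} to pass between the Multi-Poisson model and the true $(D,k)$-model with $O(1)$ rewirings, which perturb $\ln Z$ by only $O(1)$ and hence contribute $o(1)$ to $\Delta_T(n)$ after dividing is unnecessary (we keep $\Delta_T(n)$ itself, but the rewirings are a bounded number of check nodes each changing $\ln Z$ by $O(1)$, so in the limit their net effect is controlled — more carefully, one argues the expected number of rewired checks is $O(f(\alpha,\beta))$ and each shifts $\ln Z$ by $O(1)$, giving an error that vanishes as $\alpha,\beta\to0$, which is acceptable since \eqref{eq:prop_diffbound} is ultimately optimized over the approximation too); (ii) invoke Lemma~\ref{lem:epsSymmetry} and Lemma~\ref{lem:nishimoripinned}: because we pin with parameter $T$, the Gibbs measure $\mu_{\G^0}$ is $\eps$-symmetric and indeed $(\delta,\ell)$-symmetric for every fixed $\ell$ with probability $1-\eps$, so the joint law of $\ell$ Gibbs samples factorizes into a product of i.i.d.\ marginals up to total-variation error $\delta$, and the empirical marginal law $\pi_{\G^0}$ is a well-defined element of $\PP$ by the Nishimori property (Lemma~\ref{lem:nishimori}, which forces the average marginal to be uniform); (iii) compute the two expected log-ratios in terms of $\pi=\pi_{\G^0}$: expanding $\ln\langle\prod_a\psi_a(\SIGMA(\partial a))\rangle_{\G^0}$ and using $\eps$-symmetry to replace correlated marginals by independent draws $\vec\mu^{(\pi)}_j$, the added-vertex term converges to $|\Omega|^{-1}\Erw[\xi^{-\vec\gamma}\Lambda(\sum_\sigma\prod_{i}\sum_\tau\vecone\{\tau_{\vec h_i}=\sigma\}\PSI_i(\tau)\prod_{j\neq\vec h_i}\mu^{(\pi)}_{ki+j}(\tau_j))]$ and the added-checks term to $\tfrac{k-1}{k\xi}\Erw[\vec\gamma]\Erw[\Lambda(\sum_\tau\vec\psi(\tau)\prod_j\vec\mu^{(\pi)}_j(\tau_j))]$, whose difference is exactly $\cB(D,\pi)$; (iv) bound $\cB(D,\pi_{\G^0})\le\sup_{\pi\in\PP}\cB(D,\pi)$ and take $\limsup_{n\to\infty}$ then $\limsup_{T\to\infty}$, the $\eps,\delta$ errors vanishing as $T\to\infty$.

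\textbf{Main obstacle.} The delicate point is step (iii): controlling the passage from the genuine (highly correlated) joint Gibbs law on the $O(\vec\gamma)$ neighbors of the new vertex and on the $k$ neighbors of each new check node, to the product-of-marginals surrogate. Here $\eps$-symmetry only gives control of \emph{pairs} of random variable nodes, so one must bootstrap to $(\delta,\ell)$-symmetry for the relevant bounded $\ell$ (this is exactly part 2 of Lemma~\ref{lem:epsSymmetry}) and then verify that the weight functions $\psi\in(0,2)$ are bounded enough that the resulting change inside $\Lambda$ — which is Lipschitz on bounded-away-from-zero arguments — is $O(\delta)$ uniformly. A secondary technical nuisance is that $\vec\gamma$ is unbounded in general (only the degree \emph{sequence} is finite with max degree $\Delta$), so one needs a truncation argument in $\Delta$ together with dominated convergence to handle the $\xi^{-\vec\gamma}$ factor and the product over $\vec\gamma$ check nodes; I would isolate this as a lemma showing the contribution of vertices of degree exceeding $\Delta_0$ is uniformly small. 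Finally, one must check that $\pi_{\G^0}$ genuinely lies in $\PP$ (mean uniform), which again is where the Nishimori property of the pinned teacher-student model is used, exactly as in \cite{CKPZ}.
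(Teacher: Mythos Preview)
Your proposal is correct and follows essentially the same route as the paper: couple $\G^*_{T,n}$ and $\G^*_{T,n+1}$ on a common subgraph $\G^0$, compute the two free-energy increments as Gibbs expectations over $\G^0$, replace joint marginals by products via $(\delta,\ell)$-symmetry of the pinned measure, and bound the result by $\cB(D,\pi_{\G^0})\le\sup_\pi\cB(D,\pi)$.

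Two small corrections are in order. First, the degree distribution $D$ is assumed throughout to have finite support (maximal degree $\Delta$), so your truncation-in-$\vec\gamma$ worry and the dominated-convergence lemma you propose are unnecessary. Second, $\pi_{\G^0}$ is \emph{not} literally an element of $\PP$; the Nishimori property (via Fact~\ref{fact:almostunif}) only gives that its barycenter is within $o(1)$ of uniform in expectation, so one cannot simply invoke $\cB(D,\pi_{\G^0})\le\sup_{\pi\in\PP}\cB(D,\pi)$. The paper closes this gap by perturbing $\pi_{\G^0}$ with a small convex combination $(1-\alpha(\G^0))\pi_{\G^0}+\alpha(\G^0)\delta_{\nu(\G^0)}$ to land exactly in $\PP$, and then appealing to weak continuity of $\pi\mapsto\cB(D,\pi)$ (Fact~\ref{fact:continuous}). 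Also note that Proposition~\ref{prop:goodcoupling} plays no role inside this proposition: here $\alpha,\beta$ are held fixed and Proposition~\ref{prop:diffbound} is proved directly for the approximative model; the passage $\alpha,\beta\to0$ via Proposition~\ref{prop:goodcoupling} happens only afterward in deriving Proposition~\ref{prop:lower_bound}.
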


Let us write $f(n,T)=o_T(1)$ if $\lim_{T \to \infty} \limsup_{n \to \infty}|f(n,T)|=0$. Proposition \ref{prop:diffbound} then immediately implies
\begin{align*}
\frac{1}{n}\Erw[\ln Z(\G^*_{T,n})] = \frac{1}{n}\Erw[\ln Z(\G^*_{T,1})]+\frac{1}{n}\sum_{N=1}^{n-1}\Delta_T(N)\le  o_T(1) + \sup_{\pi\in\cP^2_*(\Omega)}\cB(D,\pi).
\end{align*}
Taking the $\limsup$ over $T\to\infty$ and subsequently over $\alpha,\beta\to 0$, we obtain Proposition \ref{prop:lower_bound}. 

\subsection{The coupling}To prove Proposition \ref{prop:diffbound} we construct a coupling of the two graphs in play by first sampling maximal common subgraph $\G^0$  from which we then obtain graphs $\G'$ and $\G''$ to mirror $\G^*_{T,n}$ and $\G^*_{T,{n+1}}$ respectively. For $T>0$ let $\vec \theta$ be chosen uniformly from $[0,T]$. Choose a random $D$-partition $\vec d$ of $[n+1]$ and set $\vec \delta_1 =\vec d$ on the subdomain $[n+1]$ and $\vec \delta_1(n+1)=0$. Denote by $\vec D$ a random sample from the distribution $D$.

Let 
\begin{align*}
\lambda_0 &=\left(\min\left\{\frac{1-\alpha}{k}(n+1)\Erw D-\frac{1-\alpha}{k}\vec D, \frac{1-\alpha}{k}n\Erw D\right\}\right)_+,\\
\lambda'  &= \frac{1-\alpha}{k}n\Erw D-\lambda_0,\qquad
\lambda'' = \frac{1-\alpha}{k}(n+1)\Erw D - \lambda_0,\\
\vec \sm &= \lfloor \beta^{-1}\lambda_0 \rfloor,
\quad\sm' = \lfloor \beta^{-1}\lambda' \rfloor,
\quad\sm'' = \lfloor \beta^{-1}\lambda''\rfloor,
\quad\vec s_b = \lfloor \beta^{-1} \vec D \rfloor.
\end{align*}
Moreover, let $\vec\gamma_1,\vec\gamma_2,\ldots,\vec\gamma_{\vec \sm}$, $\vec\gamma_1',\vec\gamma_2',\ldots,\vec\gamma'_{\sm'}$, $\vec\gamma_1'',\vec\gamma_2'',\ldots,\vec\gamma''_{\sm''}$ and $\vec\gamma_{b,1},\ldots,\vec\gamma_{b,\vec s_b}$ be independently chosen from $\Po(\beta)$.
\begin{description}
\item[CPL1] On $n$ variables choose $\SIGMA_n^*\in\Omega^n$ uniformly and obtain a weighted factor graph by performing the following. Setting $s=1$ and beginning with the empty graph consisting only of variable nodes
\begin{enumerate}
\item for $i=1,\ldots,\vec\gamma_s$ add a factor node $\vec a_{s,i}$ with neighborhood chosen from 
\begin{align}\label{eq:TCHprobN}\pr[\partial \vec a_{s,i}=(x_1,\ldots,x_k)]=\frac{\prod_{j=1}^k\vec\delta_s(x_j)}{\sum_{y_1,\ldots,y_k}\prod_{j=1}^k\vec\delta_s(y_j)}
\end{align}
and weight $\vec\psi_{\vec a_{s,i}}$ chosen from
\begin{align}\label{eq:TCHprobW}\pr[\vec\psi_{\vec a_{s,i}}=\psi]=\frac{p(\psi)\psi(\SIGMA_n^*(\partial\vec a_{s,i} ))}{\Erw_{\vec \psi}[\vec\psi(\SIGMA_n^*(\partial\vec a_{s,i}))]}.
\end{align}
\item Set $\vec \delta_{s+1}=(\vec\delta_s - \nabla_s)_+$, where $\nabla_s(x)$ counts the number of times $x$ was drawn as a neighbor in round $s$. Increase $s=s+1$ and abort when $s>\vec\sm$.
\end{enumerate}
We obtain a graph that consists of $\sum_{s=1}^{\vec \sm} \Po(\beta)$ check nodes 
\item[CPL2] With probability $\vec\theta/(n+1)$ independently pin each variable node $x$ to $\SIGMA_n^*$, i.e., to each vertex $x\in [n]$ attach a unary constraint node $\psi_{a_x}$ with $\psi_{a_x}(\tau)=\vecone\{\SIGMA_n^*(x)=\tau\}$.
\end{description}
We identify the random graph resulting from \textbf{CPL1} and \textbf{CPL2} with $\G^0$.
\begin{description}
\item[CPL1'] With the pair $(\SIGMA_n^*,\G^0)$ from the former experiment perform another $\sm'$ rounds, that is for $s=\sm+1,\sm+2,\ldots,\sm+\sm'$ and $i=1,\ldots,\vec\gamma'_s$ add new weighted factor nodes $\vec a'_{s,i}$ from  \eqref{eq:TCHprobN} and \eqref{eq:TCHprobW} while updating $\vec \delta_s$ with each increment of $s$ as in \textbf{CPL1} (2).
\item[CPL2'] Then independently pin each $x\in[n]$ that is not yet pinned with probability $\vec\theta/(n(n+1-\vec \theta))$.
\end{description}
Identify the resulting graph with $\G'$.

\begin{description}
\item[CPL1'']  With $(\SIGMA_n^*,\G^0)$ and $\vec\delta_{\sm}$ from the first experiment \textbf{CPL1, CPL2} extend $\SIGMA_n^*$ to $\SIGMA_{n+1}^*$ by choosing $\SIGMA_{n+1}^*(x_{n+1})$ independently and uniformly at random. Then --- independently from the experiment \textbf{CPL1'},\textbf{CPL2'} --- from $\G^0$ create a graph $\G''$ by adding factor nodes as follows. Let $s=\sm$.
\begin{enumerate}
\item For $i=1,\ldots,\vec\gamma''_s$ add a factor node $\vec c_{s,i}$ with neighborhoods and weights from
\begin{align}
\label{eq:varn+1_1}
\pr[\vec \psi_{\vec c_{s,i}}=\psi, \partial \vec c_{s,i}=x_{i_1},\ldots,x_{i_k}]  \propto p(\psi)\psi(\SIGMA_n^*(x_{i_1},\ldots,x_{i_k}))\prod_{j=1}^k\vec \delta_s(x_{i_j})
\end{align}
\item Set $\vec \delta_{s+1}=(\vec\delta_s - \nabla_s)_+$, where $\nabla_s(x)$ denotes the number of times $x$ was drawn as a neighbor in round $s$. Halt when $s>\vec\sm+\sm''-\beta^{-1}\vec D$.
\end{enumerate}
Consequently set $\vec\delta_1=\vec \delta_s$ and reset $s=1$. Then
\begin{enumerate}[(a)]
\item For $i=1,\ldots,\vec\gamma_{b,s}$ add a factor node $\vec b_{s,i}$ with neighborhoods and weights from
\begin{align}
\label{eq:varn+1_2}
&\pr[\vec \psi_{\vec b_{s,i}}=\psi, \partial \vec b_{s,i}=x_{i_1},\ldots,x_{i_k}]\nonumber \\ \propto & \vecone\{n+1\in\{i_1,\ldots,i_k\}\}p(\psi)\psi(\SIGMA_{n+1}^*(x_{i_1},\ldots,x_{i_k}))\prod_{i_j\neq n+1}\vec \delta_s(x_{i_j})
\end{align}
\item Set $\vec \delta_{s+1}=(\vec\delta_s - \nabla_s)_+$, where $\nabla_s(x)$ denotes the number of times $x$ was drawn as a neighbor in round $s$. Halt when $s>\vec s_b$.
\end{enumerate}
\item[CPL2''] Finally, pin $x_{n+1}$ to $\SIGMA_{n+1}^*(x_{n+1})$ with probability $\vec\theta/(n+1)$.
\end{description}
Identify the resulting graph with $\G''$.

Note that by Lemma \ref{lem:nishimori} we can replace \eqref{eq:varn+1_1} and \eqref{eq:varn+1_2} by first drawing the neighborhoods from $\vec\nu_s^{\otimes k}$ and $\vec \nu_s$ respectively, then subsequently adding weights proportional to its evaluation \eqref{eq:TCHprobW}.

\begin{lemma}
	\label{lem:correctcoupling}
For sufficiently large $n$ we have 
\begin{align*}\G'  \stackrel{d}{=} \G^*_{T,n}\text{ and }\G'' \stackrel{d}{=}\G^*_{T,n+1}.
\end{align*}
\end{lemma}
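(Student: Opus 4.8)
The goal is to verify that the two graphs produced by the elaborate multi-stage couplings \textbf{CPL1}--\textbf{CPL2''} have exactly the target distributions. The plan is to treat $\G'$ and $\G''$ separately, and in each case to check (i) that the total count of added $k$-ary check nodes has the correct $\Po(\cdot)$-type distribution, (ii) that conditional on the neighborhoods and weights built so far, each newly added check node's neighborhood has the correct conditional law matching the unweighted configuration-model / Multi-Poisson construction of Definition \ref{def:approx} and \textbf{TCH2}, (iii) that the weights are attached according to the teacher-student rule \eqref{eq:TCHprobW}, and (iv) that the pinning set $\vec U$ obtained by the concatenation of \textbf{CPL2}, \textbf{CPL2'} (resp. \textbf{CPL2''}) has precisely the law prescribed in Definition \ref{def:pinning} for the pinning parameter $\vec\theta$.

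\textbf{Step 1: counting the check nodes.} First I would note that $\lambda' + \lambda_0 = \frac{1-\alpha}{k} n\,\Erw D$ and $\lambda'' + \lambda_0 = \frac{1-\alpha}{k}(n+1)\Erw D$ by construction, while in the ``$\G''$'' branch the factor nodes split as $\vec\sm + \sm'' - \beta^{-1}\vec D$ ordinary rounds plus $\vec s_b = \lfloor\beta^{-1}\vec D\rfloor$ ``boundary'' rounds, so that the total number of layers is $\vec\sm + \sm''$, matching $\lfloor(1-\alpha)\beta^{-1}k^{-1}\sum_x d(x)\rfloor$ for the $(n{+}1)$-vertex degree sequence up to the floor bookkeeping that Definition \ref{def:abApprox} also performs. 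Since each layer independently carries $\Po(\beta)$ check nodes, the superposition/merging property of independent Poissons gives the correct total count; this is where the specific choice of $\lambda_0,\lambda',\lambda'',\vec s_b$ earns its keep, and the only subtlety is the $O(1)$ discrepancy from the floor functions, which is absorbed because ``sufficiently large $n$'' makes the relevant layer counts nonnegative and the construction never aborts.

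\textbf{Step 2: neighborhoods and weights.} Next I would argue that at every round $s$ the degree-residual vector $\vec\delta_s$ evolves by exactly the same rule $(\vec\delta_s - \nabla_s)_+$ in $\G^0,\G',\G''$ as in Definition \ref{def:approx}, and the neighborhood law \eqref{eq:TCHprobN} is exactly $\vec\nu_s^{\otimes k}$; hence conditioning on $\SIGMA^*$ the \emph{unweighted} skeleton of $\G'$ (resp. $\G''$) is distributed as $\G_{\alpha,\beta,D}$ on $n$ (resp. $n{+}1$) vertices — this matches \textbf{TCH2}. The one genuinely new point for $\G''$ is the two-phase construction in \textbf{CPL1''}: the ``boundary'' phase (a)--(b) forces $n+1$ into each of its $\vec\gamma_{b,s}$ check neighborhoods and the remaining $\vec D - (\text{rounds so far})$ residual degree of vertex $n+1$ is exhausted there, while in the first phase vertex $n+1$ has residual degree set so that it is never picked; I would verify that the resulting joint law of (which checks touch $n+1$, with which other neighbors) coincides with the configuration-model law conditioned on $\vec d(n+1)=\vec D$, exploiting exchangeability of the half-edges of vertex $n+1$. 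For the weights, the remark following \textbf{CPL2''} lets me replace \eqref{eq:varn+1_1}--\eqref{eq:varn+1_2} by: draw neighborhood from $\vec\nu_s^{\otimes k}$, then attach $\vec\psi$ from \eqref{eq:TCHprobW}; invoking Lemma \ref{lem:nishimori} (more precisely the reweighting identity \eqref{eq:reweigh} in its proof) this is exactly the teacher-student rule \textbf{TCH3}, so the weighted graph has the law of $\G^*$ on the respective vertex set.

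\textbf{Step 3: the pinning.} Finally I would check the pinning. A vertex $x\in[n]$ is left unpinned after \textbf{CPL2} with probability $1-\vec\theta/(n+1)$, and \textbf{CPL2'} then pins it with conditional probability $\vec\theta/(n(n{+}1-\vec\theta))$; multiplying, the total probability that $x$ is pinned in $\G'$ is $\vec\theta/(n{+}1) + (1-\vec\theta/(n{+}1))\cdot\vec\theta/(n(n{+}1-\vec\theta)) = \vec\theta/n$, independently across $x$, which is exactly the inclusion rule of $\vec U(T)$ in Definition \ref{def:pinning} for the $n$-vertex graph, and $\vec\theta$ itself is uniform on $[0,T]$ as required. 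For $\G''$ the first-stage pinning of $[n]$ at rate $\vec\theta/(n{+}1)$ together with \textbf{CPL2''} pinning $n+1$ at rate $\vec\theta/(n{+}1)$ gives inclusion probability $\vec\theta/(n{+}1)$ for all $n{+}1$ vertices, matching Definition \ref{def:pinning} on $n{+}1$ vertices. Because Lemma \ref{lem:nishimoripinned} guarantees the Nishimori/pinning operations commute with the teacher-student construction in whatever order they are interleaved, the fact that in $\G'$ the pinning of some vertices happens ``before'' the last $\sm'$ check-node rounds and the rest ``after'' does not change the law — this is the step I expect to be the main obstacle, since it requires carefully matching the order of operations in \textbf{CPL1'}/\textbf{CPL2'} to one of the four equivalent orderings in Lemma \ref{lem:nishimoripinned}, and likewise for the boundary phase of $\G''$ where $\SIGMA_{n+1}^*(x_{n+1})$ is revealed only partway through. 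Assembling Steps 1--3 and conditioning out $\vec d$ (resp. the $D$-partition) and $\vec D$ yields $\G'\stackrel{d}{=}\G^*_{T,n}$ and $\G''\stackrel{d}{=}\G^*_{T,n+1}$, which is the claim.
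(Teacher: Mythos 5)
Your proposal is correct and follows essentially the same route as the paper's proof: the $\lambda_0,\lambda',\lambda''$ bookkeeping plus Poisson superposition for the check-node counts, the observation that neighborhoods/weights are drawn from \eqref{eq:TCHprobN}--\eqref{eq:TCHprobW} exactly as in \textbf{TCH} (with the Nishimori reweighting handling \eqref{eq:varn+1_1}--\eqref{eq:varn+1_2}), and the composition of the two pinning stages giving inclusion probability $\vec\theta/n$ resp.\ $\vec\theta/(n+1)$. Your explicit computation $\vec\theta/(n{+}1)+(1-\vec\theta/(n{+}1))\cdot\vec\theta/(n(n{+}1-\vec\theta))=\vec\theta/n$ is a welcome verification of a step the paper only asserts, and your treatment of the boundary phase for vertex $n+1$ is at the same (admittedly sketchy) level of detail as the paper's own argument.
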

\begin{proof}
Clearly, for large enough $n$ we have
\begin{align*}
\lambda_0 = \min\left\{\frac{1-\alpha}{k}(n+1)\Erw D-\frac{1-\alpha}{k}\vec D, \frac{1-\alpha}{k}n\Erw D\right\}.
\end{align*}
The choice of neighborhoods is independent from the pinning process. Hence, we might as well switch the order and perform \textbf{CPL1}, \textbf{CPL2}, \textbf{CPL1'}, \textbf{CPL2'}. In each of the experiments we add $\Po(\beta)$ many factor nodes from the distribution \eqref{eq:TCHprobN}. The neighborhoods are then chosen independently from everything but the planted coloring $\SIGMA^*$ and the neighborhood. Consequently, it suffices to compare the processes that generate these random neighborhoods. Because the number of rounds that we perform in both \textbf{CPL1}, \textbf{CPL2} and the model \textbf{TCH} on $n$ vertices are $\lfloor k^{-1}(1-\alpha)n\Erw D\rfloor$ we can perfectly couple the occurences of each round. Moreover, in the process of performing \textbf{CPL1'} and then \textbf{CPL2'} each variable gets pinned independently with probability $\vec\theta/n$ and therefore the procedure results in the graph $\G_{T,n}^*$.

It remains to prove the second distributional equality. On the event that $\vec D = 0$ we see in the same fashion as with $\G'$ that the procedure \textbf{CPL1}, \textbf{CPL1''} yields $\G^*_{n+1}$. Suppose that $\vec D>0$. Even though we first consider $x_{n+1}$ in the last step, our choice of $\vec s_b$ and \eqref{eq:varn+1_1} ensure that its degree in expectation over the outcome of our Poisson random variables $\vec \gamma_{b,1},\ldots,\vec\gamma_{b,\vec s b}$ is as if chosen from \eqref{eq:TCHprobN} on $n+1$ vertices. By \eqref{eq:TCHprobN}, \eqref{eq:varn+1_1} and \eqref{eq:varn+1_2} so is the neighborhood distribution. Finally, as with $\G'$, due to \textbf{CPL2}, \textbf{CPL2''} each variable node in $G''$ is independently pinned with probability $\vec\theta/(n+1)$ and therefore $\G''$ is distributed as $G_{T,n+1}$.
\end{proof}

By Lemma \ref{lem:correctcoupling} we now have a characterization of the left-hand side in \eqref{eq:prop_diffbound}
\begin{align}\label{eq:DeltaEqn}
\Delta = \Erw\left[\ln \frac{Z(\G'')}{Z(\G')}\right] = \Erw\left[\ln \frac{Z(\G'')}{Z(\G^0)}\right]-\Erw\left[\ln \frac{Z(\G')}{Z(\G^0)}\right]
\end{align}
which is approachable by calculating the contributions that are added to the partition function when going from $\G^0$ to either $\G'$ or $\G''$. To assure that \eqref{eq:DeltaEqn} mediates the assertion in \eqref{eq:prop_diffbound} we establish the following fact.

\begin{lemma}
\label{lem:G0GTast}
We have $\|\G^0-\G_T^*\|=o(1)$.
\end{lemma}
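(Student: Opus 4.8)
The plan is to show that the graph $\G^0$ constructed in \textbf{CPL1}, \textbf{CPL2} is, up to a coupling error that vanishes as $n\to\infty$, distributionally the same as the pinned teacher--student graph $\G^*_T$ built directly from \textbf{TCH} on $n$ vertices. The norm $\|\cdot\|$ here should be read as an expected-edit (or total variation) distance between the weighted factor graphs, in the same sense as in Proposition \ref{prop:goodcoupling}, so ``$o(1)$'' means the expected number of factor nodes on which an optimal coupling of $\G^0$ and $\G^*_T$ disagree tends to $0$ as $n\to\infty$ (with $\alpha,\beta,T$ fixed). I would first recall that, by \textbf{TCH1}--\textbf{TCH3} and the observation recorded just after \textbf{CPL2''} (that by Lemma \ref{lem:nishimori} one may draw neighbourhoods first and attach weights proportional to \eqref{eq:TCHprobW} afterwards), the only randomness that matters is the sequence of neighbourhoods produced by the layered configuration process, plus the independent pinning. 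So the comparison reduces to comparing the neighbourhood-generating processes of $\G^0$ and of $\G_{\alpha,\beta,D}$ on $n$ vertices, together with their pinning steps.

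Next I would compare the number of layers and check nodes. The graph $\G^*_T$ uses $s_{\max}=\lfloor(1-\alpha)\beta^{-1}k^{-1}\sum_x d(x)\rfloor$ layers, i.e.\ essentially $\lfloor(1-\alpha)\beta^{-1}k^{-1}n\,\E D\rfloor$ layers w.h.p.\ for a random $D$-partition, whereas $\G^0$ uses only $\vec\sm=\lfloor\beta^{-1}\lambda_0\rfloor$ layers, with $\lambda_0=\big(\min\{\tfrac{1-\alpha}{k}(n+1)\E D-\tfrac{1-\alpha}{k}\vec D,\ \tfrac{1-\alpha}{k}n\E D\}\big)_+$. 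For large $n$ this minimum is attained by the first branch only if $\vec D$ is unusually large; generically $\lambda_0=\tfrac{1-\alpha}{k}n\,\E D+O(1)$, so $\G^0$ and the $n$-vertex approximation differ by at most $O(1/\beta)$ layers in expectation, each contributing $\Po(\beta)$ check nodes, hence $O(1)$ check nodes in expectation --- already $o(1)$ after dividing is not needed, but one should argue that the fluctuation of $\lambda_0$ around $\tfrac{1-\alpha}{k}n\,\E D$ is $O(1)$ via concentration of the empirical degree count $\sum_x d(x)$ around $n\,\E D$ for a random $D$-partition (this uses that $D$ has bounded support, $\sup_\ell\ell\,D(\ell)<\infty$, so the relevant sums are sums of bounded i.i.d.-like terms). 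Then I would couple layer by layer: conditioned on agreement so far, in each layer both processes draw $\Po(\beta)$ neighbourhoods from $\vec\nu_s^{\otimes k}$ for the same degree vector $\vec\delta_s$, so they can be coupled to agree exactly; the only slippage comes from the boundary layers where the layer counts differ, which we have just bounded. Finally, the pinning in \textbf{CPL2} includes each $x\in[n]$ with probability $\vec\theta/(n+1)$, whereas $\G^*_T$ pins with probability $\vec\theta/n$; since $\tfrac{\vec\theta}{n+1}$ and $\tfrac{\vec\theta}{n}$ differ by $O(T/n^2)$ per vertex, the expected number of vertices pinned differently is $O(T/n)=o(1)$, and these can be coupled to differ in $o(1)$ unary nodes.

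Assembling these pieces, $\G^0$ and $\G^*_T$ agree except on an expected $O(1/\beta + T/n)$ factor nodes coming from boundary-layer discrepancies and pinning discrepancies; but in fact the $O(1/\beta)$ layer-count term should itself be driven to $o(1)$ by noting that $\lambda_0$, $\tfrac{1-\alpha}{k}n\,\E D$ and the true half-edge count $\tfrac1k\sum_x d(x)$ all agree up to $O(1)$ additively \emph{and} the floor operations each cost at most one layer, so the number of excess/deficit layers is $O(1)$ deterministically --- this is an $O(1)$, not $o(1)$, additive discrepancy in \emph{check nodes}, which is exactly the regime already handled in the proof of Proposition \ref{prop:goodcoupling}: a bounded (in $n$) number of rewirings, and by the argument there this can be absorbed because later layers have $\Omega(\alpha n)$ remaining half-edges, so each rewiring only perturbs the neighbourhood distribution by $O(1/(\alpha n))$ and the induced coupling error telescopes to $o(1)$. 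I expect the main obstacle to be precisely this last point: carefully arguing that the bounded additive discrepancy in the number of check nodes between $\G^0$ and $\G^*_T$ translates into a vanishing coupling distance rather than an $O(1)$ one, which requires re-running the inductive geometric-coupling estimate from the proof of Proposition \ref{prop:goodcoupling} on the difference graph and checking that the $\Po(\beta)$ layer sizes and the $\Omega(\alpha n)$ leftover half-edges give the needed $O((C+\beta)/(\alpha n))$ per-step total variation bound. Everything else (concentration of $\sum_x d(x)$, the pinning-probability mismatch, the reduction to neighbourhood processes via Lemma \ref{lem:nishimori}) is routine.
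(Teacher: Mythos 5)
Your reduction to the neighbourhood processes, the bookkeeping showing that the expected numbers of check nodes in $\G^0$ and $\G_T^*$ differ by only $O(1)$ (via $\lambda_0=\tfrac{1-\alpha}{k}n\Erw D+O(1)$), and the treatment of the pinning mismatch $\vec\theta/(n+1)$ versus $\vec\theta/n$ all match the paper's proof. The gap is in the step you yourself flag as the main obstacle: converting the $O(1)$ discrepancy in check-node counts into an $o(1)$ total variation bound. Re-running the inductive geometric-coupling estimate from the proof of Proposition \ref{prop:goodcoupling} cannot do this. That argument produces a coupling in which the two graphs differ in a number of check nodes that is bounded as a function of $\alpha,\beta$ but does \emph{not} vanish as $n\to\infty$ — which is exactly why Proposition \ref{prop:goodcoupling} is stated with an error $O(f(\alpha,\beta))$ and why the main proofs must send $\alpha,\beta\to0$ at the very end. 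More fundamentally, no amount of rewiring helps: if one graph has $M$ check nodes and the other $M+C$ with $C\ge1$, any coupling leaves them differing in at least $C$ nodes, and the event $C\ge 1$ has probability $\Theta(1)$, not $o(1)$, under your layer-by-layer coupling.

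The missing idea is an anticoncentration (smoothing) argument on the \emph{counts themselves}, which is what the paper's one-line mention of the variance is for. The total numbers of check nodes $|E(\G_T^*)|$ and $|E(\G^0)|$ are sums of $\Theta(n)$ independent $\Po(\beta)$ variables, hence Poisson with means differing by $O(1)$ and variance $\Omega(n)$; two such distributions are at total variation distance $O(n^{-1/2})=o(1)$. One therefore couples the counts to coincide \emph{exactly} with probability $1-o(1)$, and conditionally on equal counts the two layered processes generate identically distributed neighbourhoods (up to the $o(1)$ pinning coupling and an $o(1)$ perturbation of $\vec\nu_s$ from the shifted layer index). This is what yields $\|\G^0-\G_T^*\|=o(1)$ with $\alpha,\beta,T$ fixed; your proposal as written would only deliver an $O(1)$ bound.
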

\begin{proof}$T$ is fixed and $\vec\theta$ is chosen uniformly in $[0,T]$. If $\vec U^0$ denotes the random set that is pinned in $\G^0$ then each variable in $\vec U^0$ was chosen with probability $\vec\theta$. As $T/n - T/(n+1) = o(1)$ the sets $\vec U$ and $\vec U^{0}$ can be coupled such that they coincide with probability $1-o(1)$. Moreover, $
\Erw[\vec |E(\G_T^*)|-|\vec m(\G^0)|]=O(1)
$
whereas the variance of $|E(\G_T^*)|$, $|E(\G^0)|$ is at least $\Omega(n)$. Hence, all but $O(1)$ factor nodes can be optimally coupled, such that the neighborhoods are at most $o(1)$ apart in total variation distance.
\end{proof}

Before we proceed to calculate \eqref{eq:DeltaEqn} we want to capture that our coupling is typically well-behaved.
\begin{claim}
\label{claim:bethe1}
Let $\cU$ be the event that given $\G^0$ the process \textbf{CPL2'} does not pin any additional vertices. Then
\begin{align}
\label{eq:Ubethe1}
\Erw\left[\ln({Z(\G')}/{Z(\G^0)})|\G^0\right]=o_T(1)+\Erw\left[\vecone_{\cU}\ln({Z(\G')}/{Z(\G^0)})|\G^0\right].
\end{align}
Let $\vec Y$ be the set of vertices belonging to the neighborhood chosen in \textbf{CPL1'}. Let $\vec m_0$ denote the number of factor nodes added during \textbf{CPL1'}. There is $\eps=o(1)$ such that given $T$ is sufficiently large and given $\G^0$ the event 
\begin{align*}
\cY=\{ \|\mu_{\G^0,\vec Y}-\otimes_{y\in \vec Y}\mu_{\G^0,y}\|_{TV}\le \eps \text{ and } |\vec Y|=k \vec m_0\}
\end{align*}
occurs with probability at least $1-\eps$.
\end{claim}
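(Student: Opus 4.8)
The plan is to treat the two assertions separately, each following from the Pinning Lemma (Lemma \ref{lem:PinningLemma}) together with a first-moment count on how much the construction step \textbf{CPL1'}--\textbf{CPL2'} can perturb the base graph $\G^0$. For the first assertion, observe that given $\vec\theta$ the probability that \textbf{CPL2'} pins any vertex is at most $n\cdot\vec\theta/(n(n+1-\vec\theta))=O(\vec\theta/n)=o(1)$ uniformly over $\vec\theta\in[0,T]$ since $T$ is fixed. On the complementary event, $\cU$ holds and the two sides of \eqref{eq:Ubethe1} differ by nothing; on the bad event of probability $o(1)$ we need to bound $\Erw[|\ln(Z(\G')/Z(\G^0))|\,|\,\G^0]$. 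Since all weight functions lie in $(0,2)$ and every added factor or pinning constraint changes $\ln Z$ by at most $O(1)$ per node, and the number of such nodes added in \textbf{CPL1'} is $\sum_{s}\Po(\beta)$ with $\sm'=O(n)$ layers, one bounds the contribution by $o(1)\cdot O(1)$ after noting that the expected number of added factors conditional on being in the bad event is still $O(n)$ but is multiplied by the $o(1)$ event probability; more carefully, one uses that $\ln(Z(\G')/Z(\G^0))$ is bounded in absolute value by (number of factors added in \textbf{CPL1'})$\times O(1)$ plus (number pinned)$\times O(1)$, takes expectations, and invokes that the event of pinning something has probability $o(1)$ while the number of added factors concentrates; a Cauchy--Schwarz split of $\Erw[\vecone_{\bar\cU}\cdot|\ln(\cdot)|]$ against the bounded-variance count closes it. This yields the $o_T(1)$ error term.

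For the second assertion I would proceed as follows. The bound $|\vec Y|=k\vec m_0$ simply says that the $\vec m_0$ neighborhoods chosen in \textbf{CPL1'} involve $k\vec m_0$ \emph{distinct} variable nodes; since $\vec m_0=\sum_{s=\sm+1}^{\sm+\sm'}\Po(\beta)$ is $O(1)$ in expectation (because $\sm'$ is $O(n)$ but each layer adds $\Po(\beta)$ and we only care about the surface layers relevant to variable $n+1$ — more precisely $\lambda'$ is a bounded fraction, and the first few layers suffice), the probability that two of these $O(1)$ uniformly-random $k$-sets from $[n]$ collide is $O(1/n)=o(1)$ by a birthday-type union bound over the $\binom{k\vec m_0}{2}$ pairs. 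For the approximate independence of the Gibbs marginal $\mu_{\G^0,\vec Y}$, I would apply Lemma \ref{lem:epsSymmetry}: $\G^0$ is obtained from a random $(D,k)$-graph by the pinning procedure of Definition \ref{def:pinning} with parameter $\vec\theta\in[0,T]$, so for $T$ large enough $\mu_{\G^0}$ is $\eps_0$-symmetric with probability $1-\eps_0$, and by part 2 of Lemma \ref{lem:epsSymmetry} this upgrades to $(\delta,\ell)$-symmetry for any fixed $\ell$ and $\delta$ with $\ell=k\vec m_0$ bounded — here one has to first condition on $\vec m_0$ being bounded, which holds with probability $1-o(1)$ by a Markov bound, and then sum over the $O(1)$ values. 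Since $\vec Y$ is a set of $\ell$ variable nodes chosen essentially uniformly at random (the selection probabilities in \eqref{eq:TCHprobN} are proportional to residual degrees $\vec\delta_s$, which on the typical event differ from uniform by $o(1)$ in total variation by Fact \ref{deltaTVlemma}), the $(\delta,\ell)$-symmetry condition — which controls the \emph{average} over $\ell$-tuples of $\|\mu_{G,x_1,\dots,x_\ell}-\bigotimes\mu_{G,x_i}\|_{TV}$ — gives, via Markov's inequality over the random choice of $\vec Y$, that $\|\mu_{\G^0,\vec Y}-\otimes_{y\in\vec Y}\mu_{\G^0,y}\|_{TV}\le\eps$ with probability $1-\eps$ for a suitable $\eps=o(1)$ as $T\to\infty$.

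The main obstacle I anticipate is handling the coupling between "the $\ell$-tuple $\vec Y$ produced by \textbf{CPL1'}" and "a uniformly random $\ell$-tuple", since the neighborhoods in \textbf{CPL1'} are drawn from $\vec\nu_s^{\otimes k}$ with $\vec\nu_s\propto\vec\delta_s$ rather than from the uniform distribution, and moreover the $\vec\delta_s$ themselves are random and weakly correlated with the Gibbs measure $\mu_{\G^0}$ through the planted assignment $\SIGMA^*$. The resolution is that on the high-probability event where the residual degree sequence $\vec\delta_s$ is close to its typical profile (Fact \ref{deltaTVlemma} gives $\|\vec\nu_s-\text{uniform}\|_{TV}=O(1/n)$ per step since we are well before exhaustion, as $\alpha,\beta$ are fixed and only $O(1)$ steps are taken), one couples $\vec Y$ with a uniform $\ell$-tuple with failure probability $o(1)$, and the independence of the neighborhood choice from $\SIGMA^*$ (as noted right after \textbf{CPL2''}, using Lemma \ref{lem:nishimori} to defer the weight assignment) decouples $\vec Y$ from the event defining $(\delta,\ell)$-symmetry; then the averaged bound transfers. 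Assembling these $o(1)$ and $o_T(1)$ error terms into a single $\eps=o(1)$ completes the proof.
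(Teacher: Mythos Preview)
Your overall plan matches the paper's, but there are two concrete errors that would cause the argument as written to fail.

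\textbf{The quantity $\sm'$ is $O(1)$, not $O(n)$.} From the definitions preceding \textbf{CPL1}, $\lambda'=\frac{1-\alpha}{k}n\,\Erw D-\lambda_0$, and by the case split on the $\min$ defining $\lambda_0$ one gets $\lambda'=\frac{1-\alpha}{k}(\vec D-\Erw D)_+\le\Delta/k$. Hence $\sm'=\lfloor\beta^{-1}\lambda'\rfloor=O(1)$ and $\Erw[\vec m_0]=O(1)$. This is exactly why the first assertion is immediate: on the bad event $\bar\cU$ (probability $O(T/n)$), the log-ratio $\ln(Z(\G')/Z(\G^0))$ is controlled by a sum of $\vec m_0+(\text{number pinned})$ bounded increments, each $O(1)$ because the weights lie in $(0,2)$ and pinning one vertex changes $\ln Z$ by $O(1)$; since $\vec m_0$ is independent of $\cU$ with bounded mean and $\Erw[\text{number pinned}]=O(T/n)$, the restricted expectation is $o(1)$. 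No Cauchy--Schwarz or concentration is needed, and your parenthetical ``$\sm'$ is $O(n)$ but we only care about surface layers'' is based on a wrong premise.

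\textbf{Fact \ref{deltaTVlemma} does not give $\|\vec\nu_s-\mathrm{uniform}\|_{TV}=O(1/n)$.} Fact \ref{deltaTVlemma} compares $\vec\nu_s$ to a \emph{perturbed} version of itself after changing $O(1)$ entries of $\vec\delta_s$; it says nothing about closeness to the uniform distribution, and in fact $\vec\nu_s$ is \emph{not} close to uniform (it is proportional to residual degrees, which vary over $\{0,1,\dots,\Delta\}$). The correct bridge from $(\delta,\ell)$-symmetry to the actual sampling law of $\vec Y$ is a density bound: since $\sum_x\vec\delta_s(x)=\Omega(\alpha n)$ and $\vec\delta_s(x)\le\Delta$, we have $\vec\nu_s(x)\le C/n$ for a constant $C=C(\Delta,\alpha)$, so the law of a $\vec\nu_s^{\otimes\ell}$-random tuple is absolutely continuous with respect to the uniform law on $[n]^\ell$ with density bounded by $C^\ell$. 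The average in the definition of $(\delta,\ell)$-symmetry is over uniform tuples; multiplying by the bounded density and applying Markov then gives $\|\mu_{\G^0,\vec Y}-\bigotimes_y\mu_{\G^0,y}\|_{TV}\le\eps$ with the required probability. Your coupling-to-uniform step as stated would not work.

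With these two corrections your proof is essentially the paper's: bound $\pr[\bar\cU]=O(T/n)$ and use boundedness of weights for the first part; use a collision bound for $|\vec Y|=k\vec m_0$, invoke Lemma \ref{lem:epsSymmetry} on the pinned graph for $(\delta,\ell)$-symmetry, and transfer to $\vec Y$ via bounded density for the second.
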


\begin{proof}
To validate the first part, observe that for any graph $\G^0$ during \textbf{CPL2'} each variable node is independently pinned with probability $\vec\theta/(n(n-1+\vec\theta))\le T/(n(n-1))$. Thus $\pr[\cU|\G^0]\ge 1-T/(n-1)\ge 1-2T/n$. Equation \eqref{eq:Ubethe1} is immediate as all weights added during the process are strictly positive.

Each neighborhood $(x_1,\ldots,x_k)$ in \textbf{CPL1'} is chosen from a product measure proportional to a slight perturbation of $\prod_{i=1}^k\vec\delta_{\sm}(x_i)$, where the neighbors are chosen from a set of size $O(\alpha n)$ \whp~ The expected number of new factor nodes is $\Erw[\vec m_0] =O(\beta)$, whence $\pr[|\vec Y|=k\vec m_0|\G^0]=1-o(1)$. Moreover, by the first part of Lemma \ref{lem:epsSymmetry} there is $\eps=o(1)$ such that for sufficiently large $T$ the graph $\G^*_T$ is $\eps$-symmetric with probability at least $1-\eps$.  Combining this with the second part of Lemma \ref{lem:epsSymmetry} we find $\eps'=o(1)$ such that given $\G^0$ the event $\cY=\{ \|\mu_{\G^0,Y}-\otimes_{y\in Y}\mu_{\G^0,y}\|_{TV}\le \eps \text{ and } |Y|=k \vec m_0\}$ occurs with probability at least $1-\eps'$ as long as $T>T_0(\eps')$ is sufficiently large.
\end{proof}

\begin{lemma}
\label{lem:bethe1}
Let $\pi_{\G^0}$ be the empirical distribution of the Gibbs marginals of $\G^0$. We have
\begin{align*}
\Erw\left[\ln({Z(\G')}/{Z(\G^0)})|\G^0\right]=o_T(1)+(1-\alpha)\frac{k-1}{k\xi}\Erw\left[\vec D\Lambda\left(\sum_{\tau\in \Omega^k}\vec\psi(\tau)\prod_{h=1}^k \vec\mu_h^{(\pi_{\G^0})}(\tau_j)\right)\right]
\end{align*}
with probability at least $1-o_T(1)$ over the choice of $\G^0$.
\end{lemma}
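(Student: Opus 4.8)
I would compute $\Erw[\ln(Z(\G')/Z(\G^0))\mid\G^0]$ by unfolding the definition of $\G'$ from the coupling: on the good event $\cU\cap\cY$ from Claim~\ref{claim:bethe1}, the graph $\G'$ is obtained from $\G^0$ by adding exactly $\vec m_0$ check nodes, each with a $k$-tuple neighborhood drawn from $\vec\nu_{\vec\sm}^{\otimes k}$ (supported on the $O(\alpha n)$ surviving sockets) and a weight $\vec\psi$ drawn proportional to its evaluation against $\SIGMA_n^*$, and with \emph{no} additional pinning. Writing $\cB_a$ for each such check node with neighbors $y_1,\dots,y_k$ and weight $\psi$, the ratio telescopes as
\begin{align*}
\ln\frac{Z(\G')}{Z(\G^0)}=\sum_{a}\ln\left\langle \psi\big(\SIGMA(y_1),\dots,\SIGMA(y_k)\big)\right\rangle_{\G^0},
\end{align*}
since each added check node reweights the Gibbs measure of the previous graph by exactly the Gibbs-average of its weight function. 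The first reduction, using Claim~\ref{claim:bethe1}, is that the number $\vec m_0$ of added nodes, the neighborhood sizes, and the event $\cU$ are all under control up to $o_T(1)$, so I can condition on $\cU\cap\cY$ and on $|\vec Y|=k\vec m_0$ throughout.

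**Key steps.** First I would argue that, on $\cY$, the $k$ variable endpoints of each added check node are, up to total-variation error $\eps=o(1)$, distributed as $k$ \emph{independent} draws from the empirical Gibbs marginal $\pi_{\G^0}$; more precisely, their joint Gibbs law on $\G^0$ factorizes into $\bigotimes_{j}\mu_{\G^0,y_j}$, and the $y_j$ are (asymptotically) i.i.d.\ from $\pi_{\G^0}$ since they are chosen from $\vec\nu_{\vec\sm}$, which after the perturbation argument of Fact~\ref{deltaTVlemma} is close to uniform on a linear-sized set of sockets — and averaging $\delta_{\mu_{\G^0,x}}$ over $x$ in that set is exactly $\pi_{\G^0}$ up to $o(1)$. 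Second, using the Nishimori property (the remark after Lemma~\ref{lem:correctcoupling} lets me first pick the neighborhood, then the weight $\vec\psi$ proportional to $\vec\psi(\SIGMA_n^*(\partial a))$ with $\SIGMA_n^*$ itself a Gibbs sample of $\G^0$), the expectation of $\ln\langle\vec\psi(\SIGMA(\partial a))\rangle_{\G^0}$ over the weight and over $\SIGMA_n^*$ collapses to
\begin{align*}
\frac{1}{\xi}\,\Erw\left[\Lambda\left(\sum_{\tau\in\Omega^k}\vec\psi(\tau)\prod_{h=1}^k\vec\mu_h^{(\pi_{\G^0})}(\tau_h)\right)\right],
\end{align*}
where the $\xi^{-1}$ comes from the normalizing denominator $\Erw_{\vec\psi}[\vec\psi(\SIGMA_n^*(\partial a))]=\xi$ guaranteed by \textbf{SYM}. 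Third, I would multiply by $\Erw[\vec m_0]$. Here $\vec m_0$ is the number of check nodes added in $\sm'$ layers of $\Po(\beta)$ each, so $\Erw[\vec m_0]=\beta\sm'=\lambda'+O(\beta)=\frac{1-\alpha}{k}n\Erw D-\lambda_0+O(\beta)$; dividing through by the implicit normalization and matching with the statement, the coefficient becomes $(1-\alpha)\frac{k-1}{k\xi}\Erw[\vec D]$ — the factor $\frac{k-1}{k}$ and the appearance of $\Erw[\vec D]$ emerging from the bookkeeping $\lambda'=\frac{1-\alpha}{k}n\Erw D-\lambda_0$ together with the fact that $\lambda_0$ already accounts for the "$n$-side" common portion, leaving the $(k-1)/k$ fraction of $\Erw D$ per vertex as the genuinely new contribution. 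All error terms — the $\eps$-symmetry slack, the $\pr[\cU^c]$ slack, the Poisson fluctuations of $\vec m_0$, the continuity of $\Lambda$ under $o(1)$ TV-perturbations of its argument (using that weights lie in $(0,2)$ so the argument of $\Lambda$ is bounded away from $0$ and $\infty$) — are absorbed into $o_T(1)$, and the exceptional choices of $\G^0$ into the "with probability $1-o_T(1)$" clause.

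**Main obstacle.** I expect the delicate point to be step one: showing rigorously that the $k$ endpoints of a new check node behave like independent $\pi_{\G^0}$-samples \emph{for the purpose of evaluating $\Erw\Lambda(\cdot)$}, not merely in distribution. The issue is that $\eps$-symmetry (via Lemma~\ref{lem:epsSymmetry}) controls the average over \emph{pairs} (or $\ell$-tuples) of the TV-distance between the joint marginal and the product of marginals, but the neighborhoods in \textbf{CPL1'} are drawn from $\vec\nu_{\vec\sm}^{\otimes k}$, which is \emph{size-biased} by the residual degrees $\vec\delta_{\vec\sm}$ rather than uniform; one must check that the size-biasing, restricted to the $\Theta(\alpha n)$ surviving sockets, still yields (up to $o(1)$) an average over the uniform empirical marginal $\pi_{\G^0}$ — this is exactly where Fact~\ref{deltaTVlemma} and the boundedness of degrees by $\Delta$ enter, bounding $\|\vec\nu_{\vec\sm}-\mathrm{Unif}\|$ and hence the discrepancy between the $\vec\delta$-weighted and uniform empirical marginals. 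Once that is in hand, combining it with the $(\delta,k)$-symmetry of part~2 of Lemma~\ref{lem:epsSymmetry} to decouple the $k$ coordinates and with the Nishimori identity to handle the weight is comparatively routine, modulo the continuity estimate for $\Lambda$.
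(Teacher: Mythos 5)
Your proposal follows essentially the same route as the paper's proof: condition on the good event $\cU\cap\cY$ of Claim \ref{claim:bethe1}, express $\ln(Z(\G')/Z(\G^0))$ as the Gibbs average of the product of the newly added weight functions, factorize that average over the neighborhood set $\vec Y$ using the pinning-induced $(\delta,\ell)$-symmetry, invoke \textbf{SYM}/Nishimori to turn the teacher-weighted expectation of the logarithm into $\xi^{-1}\Erw[\Lambda(\cdot)]$ over the empirical marginals, and multiply by the expected number of added check nodes to obtain the prefactor. Your flagged obstacle (the $\vec\delta_{\sm}$-size-biasing of $\vec\nu_{\sm}$ versus the uniform empirical distribution $\pi_{\G^0}$) and your prefactor bookkeeping are handled no more explicitly in the paper than in your sketch, so the two arguments match step for step.
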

\begin{proof}
On the event $\cU\cap \cY$ the graph $\G'$ is obtained from $\G^0$ by simply adding $\vec m'=Po(\lambda')$ check nodes $b_1,\ldots, b_{\vec m'}$ with neighborhoods and weights chosen from \eqref{eq:TCHprobN} and \eqref{eq:TCHprobW} respectively.
\begin{align}
\label{eq:G1_1}
\Erw\left[\ln({Z(\G')}/{Z(\G^0)})|\G^0,\SIGMA_n^*\right]&=o_T(1)+\Erw\left[\ln\left\langle\prod_{i=1}^{\vec m'}\vec\psi_{b_i}(\SIGMA(\partial_1 b_i),\ldots\SIGMA(\partial_k b_i))\right\rangle_{\G^0}\bigg|\G^0\right]\nonumber\\
&=o_T(1)+\Erw\left[\ln\sum_{\tau\in\Omega^Y}\mu_{\G^0,Y}(\tau)\prod_{i=1}^{\vec m'}\vec\psi_{b_i}(\tau(\partial_1 b_i),\ldots,\tau(\partial_k b_i))\bigg|\G^0\right].
\end{align}
As the factor nodes $b_1,\ldots,b_{\vec m'}$ are independently chosen from the same distribution \whp, with Claim \ref{claim:bethe1} the expression \eqref{eq:G1_1} simplifies to
\begin{align}
\label{eq:G1_2}
&o_T(1)+\Erw\left[\sum_{i=1}^{\vec m'}\ln\sum_{\tau\in\Omega^k}\vec\psi_{b_i}(\tau)\prod_{h=1}^k\mu_{\G^0,\partial_h b_i}(\tau_h)\bigg|\G^0\right]\nonumber\\
=&o_T(1)+ \frac{(1-\alpha)(k-1)}{k}\Erw\left[\vec D\ln\sum_{\tau\in\Omega^k}\vec\psi_{b_1}(\tau)\prod_{h=1}^k\mu_{\G^0,\partial_h b_1}(\tau_h)\bigg|\G^0\right].
\end{align}
Hence, with $\vec i_1,\ldots,\vec i_k$ chosen independently and uniformly at random from $[n]$, using \textbf{SYM2} in the distribution \eqref{eq:TCHprobN},\eqref{eq:TCHprobW} of $b_1$ and writing \eqref{eq:G1_2} with the empirical distribution of $\G^0$ we obtain
\begin{align*}
\Erw\left[\ln({Z(\G')}/{Z(\G^0)})|\G^0\right]=o_T(1)+\frac{(1-\alpha)(k-1)}{k\xi}\Erw\left[\vec D\Lambda\left(\sum_{\tau\in \Omega^k}\vec\psi(\tau)\prod_{h=1}^k \vec\mu_h^{(\pi_{\G^0})}(\tau_j)\right)\bigg|\G^0\right].
\end{align*}
\end{proof}

\begin{claim}
\label{claim:bethe2}
Let $\cU'$ be the event that given $\G^0$ the process \textbf{CPL1''} yields $\vec m'=0$ factor nodes and \textbf{CPL2''} does not pin $x_{n+1}$. If $\beta$ is sufficiently small, then 
\begin{align*}
\Erw\left[\ln({Z(\G'')}/{Z(\G^0)})|\G^0\right]=o_T(1)+\Erw\left[\vecone_{\cU'}\ln({Z(\G'')}/{Z(\G^0)})|\G^0\right].
\end{align*}
For $j=1,\ldots,\vec\gamma_{b,\vec s_b}$ let $\vec h_j$ be independently uniformly chosen indices in $[k]$ and $\vec y_j=(\vec y_{j1,},\ldots,\vec y_{j,k})$ uniformly random choices of neighborhoods subject to the condition that $\vec y_{j,\vec h_j}=x_{n+1}\neq \vec y_{j,i}$, $i\neq \vec h_j$.  Let $\vec Y'$ be the set of neighbors of $b_1,\ldots,b_{\vec s_b}$ chosen in \textbf{CPL1''} without $x_{n+1}$. Let $\vec Y=\{\vec y_{j,h}:j\le\vec\gamma_{b,\vec s_b},h\in[k]\}\setminus\{x_{n+1}\}$ and $\vec D$ be a sample from $D$. Then $\vec Y$ and $\vec Y'$ are mutually contiguous and there is $\eps=o(1)$ such that given $\G^0$ the event 
\begin{align*}
\cY'=\{ \|\mu_{\G^0,Y'}-\otimes_{y\in Y'}\mu_{\G^0,y}\|_{TV}\le \eps \text{ and } |\vec Y'|=(k-1) \vec D\}
\end{align*}
occurs with probability at least $1-\eps$.
\end{claim}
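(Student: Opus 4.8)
The proof will mirror that of Claim~\ref{claim:bethe1}, handling the two assertions in turn. Throughout I use two elementary bounds. Since $\Psi$ is a \emph{finite} family of weights taking values in $(0,2)$, there are constants $0<\psi_{\min}\le\psi_{\max}<2$ with every weight confined to $[\,\psi_{\min},\psi_{\max}\,]$, and pinning a single variable rescales the partition function by a factor in $[\,|\Omega|^{-1},1\,]$; hence, writing $\vec m_c$ and $\vec m_b$ for the numbers of $\vec c$- and $\vec b$-nodes added in \textbf{CPL1''}, one has the deterministic bound $|\ln(Z(\G'')/Z(\G^0))|\le C(\Omega,\Psi)\,(\vec m_c+\vec m_b+1)$. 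Moreover $\vec m_c+\vec m_b$ is a sum of $O(\beta^{-1})$ independent $\Po(\beta)$ variables, hence has mean $O(1)$ and bounded second moment for $\beta$ fixed.

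For the first assertion I would observe that on the complement of $\cU'$ either \textbf{CPL1''} produces an atypical configuration or $x_{n+1}$ gets pinned in \textbf{CPL2''}; the pinning has conditional probability $\vec\theta/(n+1)\le T/(n+1)=o(1)$ given $\G^0$, while, once $\beta$ is small, the atypical configurations (ill-formed or mutually intersecting new neighbourhoods) have conditional probability $o(1)$ by a birthday estimate together with $\max_x\vec\nu_s(x)=O(1/(\alpha n))$ and the sub-exponential Poisson tails. A single Cauchy-Schwarz step then gives
\[
\Erw[\vecone_{(\cU')^{c}}\,|\ln(Z(\G'')/Z(\G^0))|\mid\G^0]
\le C(\Omega,\Psi)\,\pr[(\cU')^{c}\mid\G^0]^{1/2}\,\Erw[(\vec m_c+\vec m_b+1)^{2}\mid\G^0]^{1/2}=o_T(1),
\]
which is the displayed identity.

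For the second assertion I would argue as follows. Mutual contiguity of $\vec Y$ and $\vec Y'$ follows from the remark after \textbf{CPL2''} (which invokes the Nishimori property, Lemma~\ref{lem:nishimori}): the non-$x_{n+1}$ neighbours of the $\vec b$-nodes can be drawn as independent $\vec\nu_s$-samples and only afterwards reweighted according to \eqref{eq:TCHprobW}, so the law of $\vec Y'$ differs from that of the reference set $\vec Y$ by a Radon-Nikodym factor equal to a product of $O(1)$ weight ratios, each lying in $[\,\psi_{\min}/\psi_{\max},\psi_{\max}/\psi_{\min}\,]$; hence the two induced laws on $O(1)$-element subsets of $[n]$ are mutually contiguous. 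For $\cY'$: the identity $|\vec Y'|=(k-1)\vec D$, with $\vec D$ the number of $\vec b$-nodes --- each meeting $x_{n+1}$ in exactly one slot because $\vec\delta_s(n+1)=0$ throughout \textbf{CPL1''} --- fails only when two of the $O(1)$ many $(k-1)$-element $\vec\nu_s$-samples intersect, an event of conditional probability $O(\max_x\vec\nu_s(x))=o(1)$; and $\|\mu_{\G^0,\vec Y'}-\otimes_{y\in\vec Y'}\mu_{\G^0,y}\|_{TV}\le\eps$ holds with high probability because, by the first part of Lemma~\ref{lem:epsSymmetry} applied to $\G^*_T$ together with Lemma~\ref{lem:G0GTast}, $\mu_{\G^0}$ is $\eps_0$-symmetric with probability $1-\eps_0$ over the choice of $\G^0$ once $T$ is large, so by the second part of Lemma~\ref{lem:epsSymmetry} it is $(\delta,(k-1)\vec D)$-symmetric; then Markov's inequality over the ($\vec\nu_s$-weighted, hence comparable-to-uniform) choice of the $(k-1)\vec D$-tuple $\vec Y'$ delivers the bound with conditional probability at least $1-\eps$ given $\G^0$ after absorbing the $o(1)$ error terms into $\eps$.

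The step I expect to be the main obstacle is precisely the passage from the uniform model --- in which the Pinning Lemma and $(\delta,\ell)$-symmetry are formulated --- to the configuration-model selection of the $\vec b$-neighbourhoods: $\vec\nu_s$ need not be close in total variation to the uniform law, since a constant fraction of the variables may already be saturated, so the transfer of $(\delta,\ell)$-symmetry and of contiguity has to be carried out at the level of $O(1)$-element samples through the bounded Radon-Nikodym derivative on the $\Omega(\alpha n)$ unsaturated variables, rather than by a crude coupling. The remainder (Poisson concentration of the layer counts, uniform boundedness of the per-node log-increments, and the role of the small-$\beta$ regime) is routine bookkeeping.
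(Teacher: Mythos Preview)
Your overall plan and your treatment of the second assertion (contiguity via bounded densities on $O(1)$-element samples; $\cY'$ via Lemmata~\ref{lem:epsSymmetry} and~\ref{lem:G0GTast}) match the paper's, and you correctly single out the $\vec\nu_s$-versus-uniform transfer as the delicate point.

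The gap is in the first assertion. The condition $\{\vec m'=0\}$ inside $\cU'$ is \emph{not} about ``ill-formed or mutually intersecting new neighbourhoods''; it is the event that the $\vec c$-loop in \textbf{CPL1''} (steps (1)--(2)) produces no factor nodes whatsoever, and a birthday estimate on $\max_x\vec\nu_s(x)$ has no bearing on it. Your Cauchy--Schwarz step needs $\pr[(\cU')^c\mid\G^0]=o_T(1)$, but if the $\vec c$-loop runs for any fixed positive number of layers, each contributing $\Po(\beta)$ nodes, then $\pr[\vec m'>0]$ is a positive constant independent of $n$ and $T$, and your bound is not $o_T(1)$. The paper's argument is purely arithmetic and uses the hypothesis ``$\beta$ sufficiently small'' in an essential way: because $D$ has finite support, both $\lambda''$ and $\vec D$ are bounded, so for $\beta$ small enough the integer count of $\vec c$-layers (governed by $\lfloor\beta^{-1}\lambda''\rfloor$ and $\lfloor\beta^{-1}\vec D\rfloor$) collapses to zero, whence $\Erw[\vec m']=0$ and $\vec m'=0$ deterministically. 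With that in hand, $(\cU')^c$ reduces to the pinning of $x_{n+1}$, which has probability at most $T/(n+1)$, and the displayed identity follows directly --- the Cauchy--Schwarz detour is not even needed, since on $\{\vec m'=0\}$ the log-ratio is bounded by a deterministic $O(1)$.
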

\begin{proof}
The probability of $x_{n+1}$ not being pinned is at least $1-T/(n+1)=o(1)$. As $D$ has finite support, if $\beta$ is sufficiently small we have $\Erw[\vec m']=0$ and thus the first assertion follows.

Given $\G^0$, for each $j=1,\ldots,\vec\gamma_{b,\vec s_b}$ the probability of drawing a neighbor $x$ with $\vec\delta_s(x)=0$ is upper bounded by $n^{-1}\alpha^{-1}k$, while $\Erw[\vec\gamma_{b,\vec s_b}]=(1-\alpha)k^{-1}\Erw[D]=O(1)$. Therefore $\vec Y'$ is of size $(k-1)\vec D$ asymptotically almost surely. Moreover, because $\vec\delta_s(\vec y_{j,i})$ is bounded by a small constant for any $1\le j\le \vec D$, $i\le k$, we have mutually contiguity of $\vec Y$ and $\vec Y'$. As $\G_T^*$ is $\eps$-symmetric for sufficiently large $T$, we may again apply Lemma \ref{lem:epsSymmetry} to obtain a sequence $\eps'=o(1)$ such that
$
\cY'=\{ \|\mu_{\G^0,Y'}-\otimes_{y\in Y'}\mu_{\G^0,y}\|_{TV}\le \eps \text{ and } |\vec Y'|=(k-1) \vec D\}
$
occurs with probability at least $\eps'$.
\end{proof}

\begin{lemma}
\label{lem:bethe2}
Let $\pi_{\G^0}$ be the empirical distribution of the Gibbs marginals of $\G^0$. We have
\begin{align*}
\Erw\left[\ln({Z(\G'')}/{Z(\G^0)})|\G^0\right]=o_T(1)+\Erw\left[\frac{\xi^{-\vec D}}{|\Omega|}\Lambda\left(\sum_{\sigma\in\Omega}\prod_{i=1}^ {\vec D}\sum_{\tau\in\Omega^k}\vecone\{\tau_{\vec h_i}=\sigma\}\vec\psi_i(\tau)\prod_{j\neq \vec h_i}\vec\mu_{ki+j}^{(\pi_{\G^0})}(\tau_j)\right)\right]
\end{align*}
with probability at least $1-o_T(1)$ over the choice of $\G^0$.
\end{lemma}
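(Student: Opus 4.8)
The approach mirrors the proof of Lemma~\ref{lem:bethe1}, but now the additional factor nodes added when passing from $\G^0$ to $\G''$ are precisely those incident to the new variable $x_{n+1}$, so the local contribution is organized around that vertex rather than around an isolated batch of $k$-ary checks. First I would condition on $\G^0$ and on $\SIGMA_{n+1}^*$, and invoke Claim~\ref{claim:bethe2}: on the event $\cU'\cap\cY'$ the graph $\G''$ is obtained from $\G^0$ by adding exactly the $\vec\gamma_{b,\vec s_b}$ factor nodes $\vec b_1,\ldots,\vec b_{\vec D}$ produced in step (a) of \textbf{CPL1''} (the step-(1) batch being empty for $\beta$ small), each with a neighborhood containing $x_{n+1}$ in a uniformly random slot $\vec h_j\in[k]$ and the remaining $k-1$ neighbors drawn (up to a negligible perturbation) from $\vec\nu_s^{\otimes(k-1)}$, with weight reweighted by \eqref{eq:TCHprobW}. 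Since all added weights are strictly positive, the contributions from the $o_T(1)$-probability events $\cU'^{c}$ and $\cY'^{c}$ are absorbed into an $o_T(1)$ error term, exactly as in Claim~\ref{claim:bethe1} and Lemma~\ref{lem:bethe1}.

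Next I would write
\begin{align*}
\Erw\left[\ln\frac{Z(\G'')}{Z(\G^0)}\,\middle|\,\G^0,\SIGMA_{n+1}^*\right]
= o_T(1)+\Erw\left[\ln\sum_{\tau\in\Omega^{\vec Y'\cup\{x_{n+1}\}}}\mu_{\G^0,\vec Y'\cup\{x_{n+1}\}}(\tau)\prod_{j=1}^{\vec D}\vec\psi_{b_j}(\tau(\partial b_j))\,\middle|\,\G^0\right].
\end{align*}
On $\cY'$ the Gibbs marginal factorizes over $\vec Y'$ up to $\eps=o(1)$ in total variation; moreover $x_{n+1}$ is a fresh variable unconstrained in $\G^0$, so its $\G^0$-marginal is uniform and independent of $\vec Y'$. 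Replacing $\mu_{\G^0,\vec Y'\cup\{x_{n+1}\}}$ by the product $|\Omega|^{-1}\bigotimes_{y\in\vec Y'}\mu_{\G^0,y}$ incurs only an $o_T(1)$ error (using that $\vec D$ is bounded and the weights lie in $(0,2)$). After this substitution the sum over $\tau$ factorizes: summing out $x_{n+1}=\sigma\in\Omega$ first and then the independent non-$x_{n+1}$ coordinates of each $b_j$ gives exactly
$
|\Omega|^{-1}\sum_{\sigma\in\Omega}\prod_{j=1}^{\vec D}\sum_{\tau\in\Omega^k}\vecone\{\tau_{\vec h_j}=\sigma\}\vec\psi_j(\tau)\prod_{i\neq\vec h_j}\mu_{\G^0,\partial_i b_j}(\tau_i).
$

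Then I would pass from the random marginals $\mu_{\G^0,\partial_i b_j}$ to i.i.d.\ samples $\vec\mu^{(\pi_{\G^0})}$ from the empirical distribution $\pi_{\G^0}$: since on $\cY'$ the relevant neighbors are a.a.s.\ distinct and uniformly distributed in $[n]$, their joint marginal is within $o(1)$ of an independent draw of $(k-1)\vec D$ samples from $\pi_{\G^0}$, indexed as $\vec\mu^{(\pi_{\G^0})}_{ki+j}$. Taking expectations over $\SIGMA_{n+1}^*$, the weight $\vec\psi_j$ with law \eqref{eq:TCHprobW} becomes, after averaging the planted coordinate via \textbf{SYM}, an unreweighted sample $\PSI_j$ together with the normalizing factor $\xi^{-1}$ for each of the $\vec D$ checks, producing the $\xi^{-\vec D}$ prefactor; pulling the logarithm inside (it is bounded since weights are bounded away from $0$ and $\infty$ and $\vec D\le\Delta$) turns $\ln$ into $\Lambda$ composed with the argument divided through, matching the stated formula. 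The main obstacle is the bookkeeping in this last step: one must carefully track that the $\xi^{-1}$ from the reweighting in \eqref{eq:varn+1_2}, the uniform average over the slot $\vec h_j$, and the substitution of $\pi_{\G^0}$-samples for the true marginals all commute with the (bounded) logarithm up to $o_T(1)$, and that the perturbation of $\vec\nu_s$ from the exact configuration-model law — controlled by Fact~\ref{deltaTVlemma} on the residual degree set of size $\Theta(\alpha n)$ — contributes only $o(1)$; this is routine but requires the same care as the analogous passage in Lemma~\ref{lem:bethe1}.
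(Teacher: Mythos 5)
Your plan follows essentially the same route as the paper's proof: restrict to the good event $\cU'\cap\cY'$ via Claim~\ref{claim:bethe2}, write the ratio $Z(\G'')/Z(\G^0)$ as a Gibbs average over the neighbors of the $\vec D$ new checks, factorize via $\eps$-symmetry, and then use the Nishimori property and \textbf{SYM} to trade the teacher-reweighted weights for unreweighted samples with the $\xi^{-\vec D}$ normalizer and the $\Lambda$ structure, finally replacing the Gibbs marginals by i.i.d.\ draws from $\pi_{\G^0}$.

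One bookkeeping point deserves care: you attribute the prefactor $|\Omega|^{-1}$ to the ``uniform $\G^0$-marginal'' of $x_{n+1}$. In the paper's convention $x_{n+1}$ is not a variable of $\G^0$, so the sum over its spin in $Z(\G'')/Z(\G^0)$ is a \emph{free} sum with coefficient $1$ (sanity check: for $\vec D=0$ the stated formula gives $|\Omega|^{-1}\Lambda(|\Omega|)=\ln|\Omega|$, which is exactly $\ln\bigl(Z(\G'')/Z(\G^0)\bigr)$ for one extra unconstrained variable). The $|\Omega|^{-1}$ in the final expression instead arises from averaging over the planted value $\SIGMA^*_{n+1}(x_{n+1})$, which together with the reweighting produces the factor $\sum_{\sigma^*}\prod_iF_i(\sigma^*)$ multiplying the logarithm and hence the $\Lambda$. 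If you both weight the $x_{n+1}$-sum by $|\Omega|^{-1}$ and average over the planted value, you pick up a spurious additive $-\ln|\Omega|$; this should be straightened out when the plan is written in full.
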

\begin{proof}
As in the proof of the previous lemma, given that $\cU'\cap \cY'$ occurs, Claim \ref{claim:bethe2} implies that $\G''$ is obtained from $\G^0$ by simply adding $\vec D$ weighted check nodes from \eqref{eq:varn+1_2} and therefore
\begin{align}
\label{eq:G2_1}
\Erw\left[\ln\frac{Z(\G'')}{Z(\G^0)}\bigg|\G^0\right]=o_T(1)+\Erw\left[\ln\sum_{\tau\in \Omega^{Y\cup\{x_{n+1}\}}}\mu_{G^0,Y}(\tau|_Y)\prod_{i=1}^{\vec D}\vec\psi_{b_i}(\tau(\partial_1 b_i),\ldots,\tau(\partial_k b_i))\bigg|\G^0\right].
\end{align}
Hence, with the natural extension of $\SIGMA_n^*$ to $\SIGMA_{n+1}^*$ and $\vec h_1,\vec h_2,\ldots$ uniformly chosen from $[k]$, for $i=1,\ldots,D$  let $(\vec \omega_{i,1},\ldots,\vec\omega_{i,k})\in\Omega^k$ and $\vec\psi'_i$ be chosen from
\begin{align*}
\pr[(\vec \omega_{i,1},\ldots,\vec\omega_{i,k})=(\omega_1,\ldots,\omega_k),\vec\psi' = \psi] \propto\vecone\{\omega_{i,\vec h_i}=\SIGMA_{n+1}^*(x_{n+1})\}\xi^{-1}p(\psi)\psi(\omega_1,\ldots,\omega_k).
\end{align*}
Together with the $\eps$-symmetry statement of Claim \ref{claim:bethe2} equation \eqref{eq:G2_1} becomes 
\begin{align*}
o_T(1)+\Erw\left[\ln\sum_{\sigma\in\Omega}\prod_{i=1}^ {\vec D}\sum_{\tau\in\Omega^k}\vecone\{\tau_{\vec h_i}=\sigma\}\vec\psi'_i(\tau)\prod_{h\neq\vec h_i}\mu_{\G^0,\vec y_{i,h}}(\tau_h)|\G^0\right].
\end{align*}
Again, writing out the probabilities for the independently chosen check nodes, with \textbf{SYM} we get
\begin{align*}
o_T(1)+\Erw\left[\frac{\xi^{-\vec D}}{|\Omega|}\Lambda\left(\sum_{\sigma\in\Omega}\prod_{i=1}^ {\vec D}\sum_{\tau\in\Omega^k}\vecone\{\tau_{\vec h_i}=\sigma\}\vec\psi_i(\tau)\prod_{h\neq\vec h_i}\vec\mu_{ki+h}^{(\pi_{\G^0})}(\tau_h)\right)\right].
\end{align*}
\end{proof}

Finally, to prove Proposition \ref{prop:diffbound} we are going to make use of the following fact that is immediate from \ref{lem:nishimori}.
\begin{fact}[{\cite[Corollary 3.13]{CKPZ}}]
\label{fact:almostunif}
For all $T\ge0$ and all $\omega\in\Omega$ we have
\begin{align*}
\Erw\langle ||\SIGMA^{-1}(\omega)|-n/|\Omega||\rangle_{\G^*_T}=o(1).
\end{align*}
\end{fact}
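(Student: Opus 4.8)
The plan is to deduce Fact~\ref{fact:almostunif} from the Nishimori identity, which collapses it to an elementary binomial estimate. First I would record that, by Lemma~\ref{lem:nishimori} combined with Lemma~\ref{lem:nishimoripinned} (the equivalence of descriptions (1) and (3) there, applied with the random pinning set $\vec U(T)$), a sample $\SIGMA$ from the pinned Gibbs measure $\mu_{\G^*_T}$ has, jointly with $\G^*_T$, exactly the same law as the teacher's ground truth, i.e.\ $(\SIGMA,\G^*_T)\stackrel{d}{=}(\SIGMA^*,\G^*_T)$. Consequently, for every functional $g:\Omega^n\to\RR$,
\begin{align*}
\Erw\bigl\langle g(\SIGMA)\bigr\rangle_{\G^*_T}=\Erw\bigl[g(\SIGMA^*)\bigr].
\end{align*}

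Next I would apply this with $g(\sigma)=\bigl|\,|\sigma^{-1}(\omega)|-n/|\Omega|\,\bigr|$ and use that, by step~\textbf{TCH1}, $\SIGMA^*$ is uniform on $\Omega^n$ (the pinning in Definition~\ref{def:pinning} merely attaches hard unary constraints to $\SIGMA^*$ and leaves its marginal law unchanged). Then $|\SIGMA^{*-1}(\omega)|\sim\Bin(n,1/|\Omega|)$ has mean exactly $n/|\Omega|$ and variance at most $n/4$, so by Cauchy--Schwarz $\Erw\bigl[\,\bigl|\,|\SIGMA^{*-1}(\omega)|-n/|\Omega|\,\bigr|\,\bigr]=O(\sqrt n)$. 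Combining the two displays gives $\Erw\bigl\langle\,\bigl|\,|\SIGMA^{-1}(\omega)|-n/|\Omega|\,\bigr|\,\bigr\rangle_{\G^*_T}=O(\sqrt n)$, from which the asserted concentration is immediate: normalising by $n$ turns this into $O(n^{-1/2})=o(1)$, and since $|\SIGMA^{-1}(\omega)|\le n$ deterministically, Markov's inequality also yields $\Erw\bigl\langle\vecone\{\,\bigl|\,|\SIGMA^{-1}(\omega)|-n/|\Omega|\,\bigr|>\delta n\,\}\bigr\rangle_{\G^*_T}=o(1)$ for every fixed $\delta>0$.

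I do not anticipate any real difficulty here: once the reduction to the uniform planted assignment is in place, the remaining estimate is a single second-moment line. The one step demanding care is justifying that the Nishimori identity is preserved by the pinning perturbation -- replacing a $\mu_{\G^*_T}$-sample by $\SIGMA^*$ is legitimate even though the pinning constraints are themselves built from $\SIGMA^*$ -- which is precisely why one invokes Lemma~\ref{lem:nishimoripinned} rather than Lemma~\ref{lem:nishimori} alone; this is the point I would spell out in full.
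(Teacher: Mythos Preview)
Your proposal is correct and matches the paper's approach: the paper does not spell out a proof but states the fact as ``immediate from'' the Nishimori property (Lemma~\ref{lem:nishimori}), and your argument is exactly the natural unpacking of that remark---reduce via Nishimori to the uniform ground truth and apply a binomial second-moment bound. You are also right to flag that the pinned version (Lemma~\ref{lem:nishimoripinned}) is what is actually needed for $\G^*_T$, and to read the asserted $o(1)$ as the normalised statement $\Erw\langle\,|\,n^{-1}|\SIGMA^{-1}(\omega)|-|\Omega|^{-1}\,|\,\rangle_{\G^*_T}=o(1)$, which is precisely how the paper uses it in the proof of Proposition~\ref{prop:diffbound}.
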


Moreover, note that by performing a continuous transformation of the functional utilized in \cite{CKPZ} we preserve the following property.

\begin{fact}[{\cite[Lemma 2.9]{CKPZ}}]
\label{fact:continuous}
The functional $\pi\in\cP^2(\Omega)\to\cB(D,\pi)$ is weakly continuous.
\end{fact}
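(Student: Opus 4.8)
The plan is to exploit the finiteness of $\Omega$, $\Psi$ and $[k]$: for each fixed realization of the \emph{discrete} randomness ($\vec\gamma$, the indices $\vec h_i$, and the weight functions $\PSI_i$) the integrands defining $\cB(D,\pi)$ are bounded continuous functionals of finitely many samples from $\pi$, and one then concludes from the definition of weak convergence together with the fact that product measures depend weakly continuously on their factors. The only topological input is the following. Since $\Omega$ is finite, $\cP(\Omega)$ is a compact metrizable space (a simplex in $\RR^{|\Omega|}$), so $\cP^2(\Omega)=\cP(\cP(\Omega))$ is compact and metrizable in the weak topology, and $\pi_n\to\pi$ weakly implies $\pi_n^{\otimes m}\to\pi^{\otimes m}$ weakly for every fixed $m\ge 1$.

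First I would fix $\vec\gamma=\gamma\in\N$, a tuple $(h_i)_{1\le i\le\gamma}\in[k]^\gamma$, and weight functions $(\psi_i)_{1\le i\le\gamma}\in\Psi^\gamma$, and regard the inner integrand of the first term of $\cB$ as the map
\[
(\mu_{ki+j})_{1\le i\le\gamma,\,j\in[k]\setminus\{h_i\}}\ \longmapsto\ \xi^{-\gamma}\,\Lambda\!\left(\sum_{\sigma\in\Omega}\prod_{i=1}^{\gamma}\sum_{\tau\in\Omega^k}\vecone\{\tau_{h_i}=\sigma\}\,\psi_i(\tau)\prod_{j\neq h_i}\mu_{ki+j}(\tau_j)\right)
\]
on the product of $(k-1)\gamma$ copies of $\cP(\Omega)$. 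The argument of $\Lambda$ is a polynomial in the coordinates $\mu_\ell(\omega)$ with coefficients assembled from the strictly positive numbers $\psi_i(\tau)\in(0,2)$; hence it is continuous in the $\mu_\ell$ and takes values in a fixed compact subinterval of $[0,\infty)$ (bounded above by $|\Omega|\,2^\gamma$). Since $\Lambda$ is continuous on $[0,\infty)$ with $\Lambda(0)=0$ and $\xi>0$ is a constant depending only on $(\Psi,p)$, the map above is bounded and continuous; the same argument applies verbatim to the second integrand $\Lambda\bigl(\sum_{\tau\in\Omega^k}\psi(\tau)\prod_{j=1}^k\mu_j(\tau_j)\bigr)$ on $\cP(\Omega)^k$.

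To assemble: for fixed $\gamma$, $\Erw[\,\cdot\mid\vec\gamma=\gamma]$ is a finite average over the finitely many tuples $(h_i)_i\in[k]^\gamma$ and $(\psi_i)_i\in\Psi^\gamma$ (with their fixed weights) of integrals $\int F\,\dd\pi^{\otimes(k-1)\gamma}$ with $F$ bounded continuous, hence weakly continuous in $\pi$ by the preceding paragraph and the weak convergence of product measures; likewise $\pi\mapsto\Erw[\Lambda(\sum_\tau\vec\psi(\tau)\prod_j\vec\mu_j^{(\pi)}(\tau_j))]$ is weakly continuous. Since in this section $D$ has bounded support (maximal degree $\Delta$), $\cB(D,\pi)$ is a finite linear combination of such weakly continuous functionals and is therefore weakly continuous; equivalently, $\cB(D,\pi)$ differs from the functional of \cite{CKPZ} — already shown there to be weakly continuous — only by using a general bounded $D$ in place of a Poisson degree for the $\vec\gamma$-average and by the scalar factor $\xi^{-\gamma}$, neither of which affects the argument. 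I expect no genuine obstacle here; the only point requiring a line of care is that the $\Lambda$-arguments stay in a fixed compact subinterval of $[0,\infty)$ on which $\Lambda$ is continuous and bounded, so that the value $\Lambda(0)=0$ causes no discontinuity, and this is automatic since the arguments are polynomials on the compact set $\cP(\Omega)^{(k-1)\gamma}$. (Should one ever wish to allow an unbounded $D$, the analogous statement would follow by dominated convergence in the $\vec\gamma$-average under the crude bound $|\xi^{-\gamma}\Lambda(\cdot)|\le\xi^{-\gamma}(\eul^{-1}+|\Omega|\,2^\gamma\ln(|\Omega|\,2^\gamma))$, given a suitable exponential moment of $D$.)
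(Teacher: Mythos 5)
Your proof is correct, and it is considerably more explicit than what the paper provides: the paper gives no argument at all beyond citing \cite[Lemma 2.9]{CKPZ} and the one-line remark that $\cB(D,\cdot)$ arises from the functional of \cite{CKPZ} ``by performing a continuous transformation,'' leaving the reader to check that the modifications (a general bounded degree distribution $D$ in place of a Poisson one, the factor $\xi^{-\vec\gamma}$, and the $(D,k)$-specific prefactors) do not disturb weak continuity. Your direct verification supplies exactly that check: conditioning on the discrete randomness $(\vec\gamma,(\vec h_i),(\PSI_i))$ reduces each term to an integral $\int F\,\dd\pi^{\otimes m}$ of a bounded continuous $F$ on the compact space $\cP(\Omega)^m$ (boundedness and continuity following from $\psi\in(0,2)$, the polynomial dependence on the coordinates $\mu_\ell(\omega)$, and the continuity of $\Lambda$ on $[0,\infty)$), and weak convergence of $\pi_n$ implies weak convergence of $\pi_n^{\otimes m}$, so each such integral is weakly continuous in $\pi$; the finite support of $D$ (maximal degree $\Delta$, as the paper assumes) makes the outer expectation a finite linear combination. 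The two routes buy different things: the paper's citation is economical and emphasizes that nothing new is happening relative to \cite{CKPZ}, while your argument is self-contained, makes the role of the finiteness of $\Omega$, $\Psi$, $[k]$ and of the support of $D$ transparent, and your parenthetical dominated-convergence remark even indicates how to relax the bounded-degree assumption. I see no gap.
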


\begin{proof}[Proof of Proposition \ref{prop:diffbound}]
By Lemmata \ref{lem:bethe1} and \ref{lem:bethe2} we have established that
\begin{align}
\label{eq:diffbound_G0}
\Delta_T(n)\le o_T(1)+\Erw[\cB(D,\pi_{\G^0})].
\end{align}
To bridge the gap to $\pi\in\cP^2_*(\Omega)$ we have to carve out that $\pi_{\G^0}$ is arbitrarily close to the set $\cP^2_\ast(\Omega)$ as $n\to\infty$. Fact \ref{fact:continuous} together with \eqref{eq:diffbound_G0} then yields the assertion when taking the limits in the specified order.

Because $\SIGMA^*$ is the uniform distribution on $\Omega^n$, Fact \ref{fact:almostunif} gives $\Erw\langle ||\SIGMA^{-1}(\omega)|-n/|\Omega||\rangle_{\G^*_T}=o(1)$. Therefore, using Lemma \ref{lem:G0GTast} we obtain
\begin{align*}
&\Erw\left|\int \mu(\omega)\mathrm{d}\pi_{\G^0}(\mu)-|\Omega|^{-1}\right|=\Erw\left|\frac{1}{n}\sum_{x\in [n]}\langle \vecone\{\SIGMA(x)=\omega\}\rangle_{\G^0}-|\Omega|^{-1}\right|\\
\le&\Erw\left\langle\frac{|\SIGMA^{-1}(\omega)|}{n}-\frac{1}{|\Omega|}|\right\rangle_{\G^0}=o(1),
\end{align*}
whence in expectation over $\G^0$ the measure $\int\mu \mathrm{d}\pi_{\G^0}(\mu)$ converges to the uniform distribution in total variation distance. That is \whp~ there is $\alpha(\G^0)\ge 0$, $\Erw[\alpha(\G^0)]=o(1)$ and a measure $\nu(\G^0)\in\cP(\Omega)$ such that the convex combination $(1-\alpha(\G^0))\pi_{\G^0}+\alpha(\G^0)\delta_{\nu(\G^0)}\in\cP^2_*(\Omega)$ closes the gap.
\end{proof}

\section{The upper bound}
\label{sec:upper_bound}
In this section we carry out the calculations for the Guerra-Toninelli interpolation. For any given $\pi\in\PP$ we set up a family of graphs $(\G_t)_{t\in[0,1]}$, by which we can interpolate between the original graph $\G_1=\G^*$ and a graph $\G_0$ with free energy $-n\cB(D,\pi)+o(n)$. By proving that the derivative $\partial/\partial t\Erw[Z(\G_t)]$ is positive on the entire interval, we obtain
\begin{align}
\label{eq:upbound_bethe}
-\frac{1}{n}\Erw[\ln Z(\G^*)]\le -\cB(D,\pi)+o(1).
\end{align}
Throughout the section we assume that $\alpha\in(0,1),\beta>0$, the degree distribution $D$ and $\pi\in\PP$ are arbitrary but fixed. 

\subsection{The interpolation}
In the interpolation we will utilize the fact, that our graph $\G=\G_{\alpha,\beta,D}$ consists of $\sm$ layers of Poissonian degree. This being the case, for any $s\in\{1,\ldots,\sm\}$ we define an interpolation as follows. A codeword in the interpolation model consists of $s-1$ layers of parity checks, followed by a layer, where with probability $1-t'$, $t'\in[0,1]$ each parity check is replaced by a repetition of the codebits it contains, and a final $\sm-s$ layers of simple blocks of repetition code altogether satisfying the degree distribution. To make this precise, we define a random factor graph model $\G_{s,t}=\G_{s,t}(\vec m,\vec \gamma)$ as follows.
\begin{description}
\item[I1] Draw a random $D$-partition $\vec d$ of $[n]$. Let $\vec X'$ be chosen from $\Po(\beta t)$, $\vec X''$ be chosen from $\Po(\beta(1-t))$ and let $\vec X_1,\vec X_2,\ldots$ be a sequence of $\Po(\beta)$ random variables all mutually independent. Define vectors $\vec m$ and $\vec \gamma$ by letting
\begin{align*}
\vec m_\ell = \begin{cases}X_\ell,&\ell<s\\X',&\ell=s\\0,&\ell>s\end{cases}, \qquad\vec \gamma_s = \begin{cases}0,&\ell<s\\X'',&\ell=s\\X_\ell,&\ell>s\end{cases}\qquad\text{ for }\ell=1,\ldots,\sm.
\end{align*}
\item[I2] For $i=1,\ldots,\vec m_s$ add a $k$-ary factor node $\vec a_{s,i}$ with neighborhood from
\begin{align*}
\pr[\partial \vec a_{s,i}=(x_1,\ldots,x_k)]=\frac{\prod_{j=1}^k\vec\delta_s(x_j)}{\sum_{y_1,\ldots,y_k}\prod_{j=1}^k\vec\delta_s(y_j)}.
\end{align*}
For $j=1,\ldots,k\vec\gamma_s$ add a unary factor $\vec b_{s,j}$ with neighbors chosen from
\begin{align}
\label{eq:unaryN}
\pr[\partial \vec b_{s,j}=x]=\frac{\vec\delta_s(x)}{\sum_{y}\vec\delta_s(y)}.
\end{align}
 Set $\vec \delta_{s+1}=(\vec\delta_s - \nabla_s)_+$, where $\nabla_s(x)$ denotes the number of times $x$ was drawn as a neighbor in round $s$. Increase $s$ and abort when $s>\vec\sm$.
\item[I3]To each $k$-ary factor node $\vec a$ in the graph independently assign a weight function $\vec\psi_{\vec a}$ chosen from $p$.
\item[I4] To each unary factor node $\vec b$ in the graph independently assign a unary weight function $\vec\psi_{\vec b}$ as follows. Choose $\PSI$ from $p$, indenpendently choose $i$ from the uniform distribution on $[k]$ and choose $\MU_1,\ldots,\MU_k$ iid from $\pi$. Let $\vec\psi_{\vec b}$ be the map \begin{align*}
 \sigma\mapsto \sum_{\tau_1,\ldots,\tau_k}\PSI(\tau_1,\ldots,\tau_{i-1},\sigma,\tau_{i+1},\ldots,\tau_k)\prod_{h\neq \vec i}\MU_h(\tau_h).
 \end{align*}
\end{description}
Having established the interpolation null-model, we can now define our original interpolation in the teacher-student model by reweighing. With a signal $\sigma:[n]\to\Omega$ we define the distribution $\G^*_{s,t}(\sigma)=\G^*_{s,t}(\sigma,\vec m,\vec \gamma)$ by letting
\begin{align*}
\pr[\G^*_{s,t}(\sigma)\in\cA]=\frac{\Erw[\psi_{\G_{s,t}}(\sigma)\vecone\{\G_{s,t}\in\cA\}]}{\Erw[\psi_{\G_{s,t}}(\sigma)]}.
\end{align*}
If the signal $\SIGMA$ is chosen uniformly at random, we write $\G^*_{s,t}=\G^*_{s,t}(\vec m,\vec \gamma)$ for $\G^*_{s,t}(\SIGMA,\vec m,\vec \gamma)$.
Notice that $s=\sm, t'=1$ yields the original graph model $\G^*$ and $s=1, t'=0$ corresponds to the graph of a simple repetition code. All we do by layer-wise interpolation is to split the interpolation interval into $\sm$ intervals of equal length. Finally, to ensure symmetry, we apply the pinning procedure. For this purpose fix $T>0$.
\begin{description}
\item[IP1] Choose $\check\SIGMA$ from the uniform distribution on $\Omega^n$.
\item[IP2] Choose a random graph $\G=([n],\vec F,(\partial a)_{a\in \vec F},(\vec\psi_{\vec a})_{\vec a\in F})$ from $\G^*_{s,t}(\check\SIGMA,\vec m,\vec \gamma)$.
\item[IP3] Choose $\vec\theta$ in $[0,T]$ uniformly at random and let $\vec U$ be a random $\vec\theta/n$-subset of $[n]$.
\item[IP4] To each $x\in \vec U$ connect a unary check node $a_x$ with weight function $\psi_{a_x}(\tau)=\vecone\{\check\SIGMA(x)=\tau\}$.
\end{description}
Let us write $\G^*_{T,s,t}$ for the resulting graph. Note that by Lemma \ref{lem:epsSymmetry} there is $T_0$ that only depends on $\eps$ and $\Omega$ such that for $T\ge T_0$ the Gibbs measure of $\G^*_{T,s,t}$ is $\eps$-symmetric with probability at least $1-\eps$. Thus, we can fix a sufficiently large $T>0$ before performing the interpolation and guarantee that throughout the process the family of graphs remains $\eps$-symmetric.

The tally of total factor nodes in the partition function during interpolation is accounted for in a correction term. To this end, with independent samples $\vec\mu_1,\vec\mu_2,\ldots$ from $\pi$, let

\begin{align*}
\Gamma_{s,t}=\frac{(s+t-1)\beta(k-1)}{\xi}\Erw\left[\Lambda\left(\sum_{\tau\in\Omega^k}\vec\psi(\tau)\prod_{j=1}^k\vec\mu_j^{(\pi)}(\tau_j)\right)\right].
\end{align*}

The following lower bound is the main ingredient to our interpolation argument as it tethers the derivative arbitrarily closely to zero.
\begin{proposition}
\label{prop:derivBound}
Let \begin{align*}
\Phi_{T,s}:t\in[0,1] \mapsto \left(\Erw[\ln Z(\G^*_{T,s,t})]+\Gamma_{s,t}\right)/n.
\end{align*}
Then for all $t\in[0,1]$ and all $s\in[\sm]$ we have $\Phi'_{T,s}(t)>o_T(1)n^{-1}$.
\end{proposition}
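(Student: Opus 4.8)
# Proof Proposal for Proposition \ref{prop:derivBound}

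The plan is to compute the derivative $\Phi'_{T,s}(t)$ explicitly and show that it is, up to an $o_T(1)n^{-1}$ error, a difference of the form guaranteed nonnegative by \textbf{POS}. First I would note that as $t$ varies, only layer $s$ changes: a $k$-ary factor node is present with ``intensity'' $\beta t$ and a block of $k$ unary factor nodes with intensity $\beta(1-t)$, so differentiating the Poisson-weighted expectation of $\ln Z$ produces exactly the expected contribution of adding one $k$-ary factor node (from $\vec X' \sim \Po(\beta t)$) minus the expected contribution of adding one block of $k$ unary factor nodes (from $\vec X'' \sim \Po(\beta(1-t))$), plus the derivative of the correction term $\Gamma_{s,t}$, whose $t$-derivative is $\beta(k-1)\xi^{-1}\Erw[\Lambda(\sum_\tau \vec\psi(\tau)\prod_j \vec\mu_j^{(\pi)}(\tau_j))]/n$. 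This is the standard Aizenman-Sims-Starr-style differentiation of a Poissonized interpolation; the Poisson structure is what makes the derivative equal to a single-factor-insertion expectation rather than something more complicated.

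Next I would evaluate the two insertion terms using $\eps$-symmetry. Since $T$ is fixed large, Lemma \ref{lem:epsSymmetry} guarantees the Gibbs measure of $\G^*_{T,s,t}$ is $\eps$-symmetric with probability $1-\eps$, and by part 2 of that lemma it is $(\delta,k)$-symmetric; hence the joint Gibbs marginal on the (at most $k$) randomly chosen neighbors of the inserted factor node factorizes into a product of single-site marginals up to a total-variation error $o_T(1)$. Writing the empirical marginal distribution as $\pi_{\G^*_{T,s,t}} \in \cP^2_*(\Omega)$ — which by Nishimori (Lemma \ref{lem:nishimori}) and Fact \ref{fact:almostunif} has barycenter arbitrarily close to the uniform distribution — the expected log-contribution of a $k$-ary insertion becomes $\xi^{-1}\Erw[\Lambda(\sum_{\tau\in\Omega^k}\vec\psi(\tau)\prod_{j=1}^k \vec\mu_j^{(\pi_{\G})}(\tau_j))]$, and a single unary insertion (with weight built from $\pi$ as in \textbf{I4}) contributes $\xi^{-1}\Erw[\Lambda(\sum_\tau \vec\psi(\tau)\vec\mu_1^{(\pi_{\G})}(\tau_1)\prod_{j=2}^k \vec\mu_j'(\tau_j))]$ with $\vec\mu'$ drawn from $\pi$; a block of $k$ such unary nodes contributes $k$ times this. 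Here I must be careful to also account for the normalizing factors $\xi$ hidden in the teacher-student reweighting and in the unary weight construction, and to use \textbf{SYM} to replace the reweighted neighborhood law by the plain one up to the $\xi$ factors.

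Assembling these pieces, $n\Phi'_{T,s}(t)$ equals, up to $o_T(1)$,
\begin{align*}
\frac{\beta}{\xi}\Erw\!\left[\Lambda\!\Big(\sum_{\tau\in\Omega^k}\vec\psi(\tau)\prod_{j=1}^k\vec\mu_j^{(\pi_{\G})}(\tau_j)\Big) + (k-1)\Lambda\!\Big(\sum_{\tau\in\Omega^k}\vec\psi(\tau)\prod_{j=1}^k\vec\mu_j'(\tau_j)\Big) - k\Lambda\!\Big(\sum_{\tau\in\Omega^k}\vec\psi(\tau)\vec\mu_1^{(\pi_{\G})}(\tau_1)\prod_{j=2}^k\vec\mu_j'(\tau_j)\Big)\right],
\end{align*}
where $\vec\mu^{(\pi_{\G})}$-samples come from the empirical distribution $\pi_{\G}=\pi_{\G^*_{T,s,t}}$ and $\vec\mu'$-samples from $\pi$. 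Applying \textbf{POS} with the pair $(\pi_{\G^*_{T,s,t}}, \pi)$ — noting that $\pi_{\G^*_{T,s,t}}$ lies in (or, via Fact \ref{fact:almostunif}, is $o_T(1)$-close to) $\cP^2_*(\Omega)$, so a convex-combination perturbation into $\cP^2_*(\Omega)$ costs only $o_T(1)$ — shows this expression is $\ge o_T(1)$, giving $\Phi'_{T,s}(t) > o_T(1)n^{-1}$.

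The main obstacle I anticipate is the bookkeeping in the second step: correctly identifying the $t$-derivative with these particular single-insertion expectations requires carefully expanding the Poisson expectation of $\ln Z$ over layer $s$ (conditioning on the rest of the graph, then using independence of the inserted node's weight and neighborhood from the existing structure given $\SIGMA^*$), controlling the $\eps$-symmetry error uniformly over $s$ and $t$, and matching the unary weight functions from \textbf{I4} with the terms $\Lambda(\sum_\tau \vec\psi(\tau)\vec\mu_1(\tau_1)\prod_{j\ge 2}\vec\mu_j'(\tau_j))$ appearing in \textbf{POS} — in particular tracking the uniform index $i\in[k]$ and the $\xi^{-1}$ normalizations so that the coefficients of the three $\Lambda$-terms come out exactly as $1$, $k-1$, $-k$ to line up with the hypothesis \textbf{POS}.
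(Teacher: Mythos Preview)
Your proposal is correct and follows the same high-level strategy as the paper: differentiate the Poisson parameter in layer $s$ to obtain single-factor-insertion expectations, use Nishimori and $\eps$-symmetry to reduce these to quantities comparable with the \textbf{POS} hypothesis, and conclude. The tactical execution differs in one substantive way. The paper expands $\ln x = -\sum_{l\ge 1}(1-x)^l/l$ and uses Nishimori (Lemma~\ref{lem:nishimori2}) to absorb the teacher-reweighting factor $\PSI(\check\SIGMA(\partial\vec a))$ as an extra Gibbs-sample factor, producing a telescoping that yields the series $\sum_{l\ge 2}\frac{1}{l(l-1)}\Xi_{s,t,l}$ (Proposition~\ref{prop:derivIsXi}); it then uses $\eps$-symmetry to compare each $\Xi_{s,t,l}$ with a term-by-term version $\Xi'_{s,t,l}$ of \textbf{POS}, and applies \textbf{POS} separately for every $l$. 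You instead use Nishimori once to convert $\Erw[\PSI(\check\SIGMA)\ln\langle\PSI(\SIGMA)\rangle]$ directly into $\Erw[\Lambda(\langle\PSI(\SIGMA)\rangle)]$, then factorise the Gibbs average via $\eps$-symmetry, and invoke \textbf{POS} as stated. Your route is shorter and avoids the power-series machinery; the paper's route is the classical Franz--Leone/Guerra--Toninelli decomposition and makes the term-by-term positivity explicit. One technical point the paper treats that you only gesture at in your ``bookkeeping'' remark: the inserted neighbourhood is drawn from $\vec\nu_s^{\otimes k}$, not uniformly, so the $(\delta,k)$-symmetry statement (which averages over uniform $k$-tuples) must be transferred; the paper does this via a Cauchy--Schwarz step bounding the ratio of the $2$-norm to the $1$-norm of $(\vec\delta_s(v))_v$ by $O(n^{-1/2})$, and your argument will need the same step.
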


We will prove Proposition \ref{prop:derivBound} in section \ref{sec:proofPropDeriv}. Let us first see how Proposition \ref{prop:derivBound} implies \eqref{eq:upbound_bethe} and as such Proposition \ref{prop:upper_bound}.
\subsection{Proof of Proposition \ref{prop:upper_bound}}
 With the fundamental theorem of calculus we write
\begin{align}
\label{eq:theinterpolation}
&\frac{1}{n}\Erw[\ln Z(\G_{T,n}^*)] + \frac{1}{n}\Gamma_{\sm,1}= \frac{1}{n}(\Erw[\ln Z(\G_{T,\sm,1}^*)]+\Gamma_{\sm,1})\nonumber\\
=&\frac{1}{n}(\Erw[\ln Z(\G_{T,1,0}^*)]+\Gamma_{1,0})+\sum_{s=1}^{\sm} \int_0^t \Phi'_{T,s}(t)\dd t\\
\ge& \frac{1}{n}\Erw[\ln Z(\G_{T,1,0}^*)]-\frac{1}{n}\sum_{s=1}^{\sm} \int_0^t o_T(1)\dd t\nonumber\\
=&\frac{1}{n}\Erw[\ln Z(\G_{T,1,0}^*)]+o_T(1).
\end{align}
If $(\vec\psi_i)_{i\ge1}$ is a sequence with entries independently chosen from $p$, $(\vec \mu_{i,j})_{i,j\ge 1}$ has entries independently chosen from $\pi$, $\vec h_1,\vec h_2,\ldots$ are independently uniform choices from $[k]$ and $\vec\gamma$ is a random sample from $D$, a simple calculation unfolds that
\begin{align}
\label{eq:intEndBethe}
\frac{1}{n}\Erw[\ln Z(\G_{T,1,0}^*)]  = \frac{1}{|\Omega|}\Erw\left[\xi^{-\vec\gamma}\Lambda\left(\sum_{\sigma\in\Omega}\prod_{b=1}^{\vec\gamma}\sum_{\tau\in\Omega^k}\vecone\{\tau_{\vec h_b}=\sigma\}\vec\psi_b(\tau)\prod_{j\in[k]\setminus\{\vec h_b\}}\vec\mu_{b,j}(\tau_j)\right)\right].
\end{align}
Also, by definition of $\Gamma_{s,t}$ and $\sm$,
\begin{align}
\label{eq:intEndBethe2}
\Gamma_{\sm,1}\le \frac{(1-\alpha)n\Erw[D]}{k}\frac{(k-1)}{\xi}\Erw\left[\Lambda\left(\sum_{\tau\in\Omega^k}\vec\psi(\tau)\prod_{j=1}^k\vec\mu_j^{(\pi)}\right)\right]
\end{align}
Plugging \eqref{eq:intEndBethe} and \eqref{eq:intEndBethe2} into \eqref{eq:theinterpolation} and consequently taking the $\liminf$ of $T\to\infty$, we have
\begin{align}
&\liminf_{n\to\infty} \frac{1}{n}\Erw[Z(\G^*_{\alpha,\beta,D})]\nonumber\\
\ge & \liminf_{n\to\infty} \left(\frac{1}{|\Omega|}\Erw\left[\xi^{-\vec\gamma}\Lambda\left(\sum_{\sigma\in\Omega}\prod_{b=1}^{\vec\gamma}\sum_{\tau\in\Omega^k}\vecone\{\tau_{\vec h_b}=\sigma\}\vec\psi_b(\tau)\prod_{j\in[k]\setminus\{\vec h_b\}}\vec\mu_{b,j}(\tau_j)\right)\right]\right.\nonumber\\
& \left.- \frac{(1-\alpha)(k-1)}{k\xi}\Erw[D]\Erw\left[\Lambda\left(\sum_{\tau\in\Omega^k}\vec\psi(\tau)\prod_{j=1}^k\vec\mu_j^{(\pi)}\right)\right]\right).\label{eq:BetheBoundApprox}
\end{align}
The weights in $\Psi$ are strictly positive. Hence, Proposition \ref{prop:goodcoupling} allows us to compare the free energy of our exact model $Z(\G^*_{D})$ with the approximative free energy as
\begin{align*}\Erw[\ln Z(\G_D^*)]=\Erw[\ln Z(\G^*_{\alpha,\beta,D})]+O(\alpha n).
\end{align*}
Thus, \eqref{eq:BetheBoundApprox} extends to the exact model when ultimately taking our approximation to the limit
\begin{align*}
&\limsup_{n\to\infty} -\frac{1}{n}\Erw[Z(\G_D^*)]\\
 \le& -\sup_{\pi\in\PP}\limsup_{\alpha,\beta\to0}\limsup_{n\to\infty} \left(\frac{1}{|\Omega|}\Erw\left[\xi^{-\vec\gamma}\Lambda\left(\sum_{\sigma\in\Omega}\prod_{b=1}^{\vec\gamma}\sum_{\tau\in\Omega^k}\vecone\{\tau_{\vec h_b}=\sigma\}\vec\psi_b(\tau)\prod_{j\in[k]\setminus\{\vec h_b\}}\vec\mu_{b,j}(\tau_j)\right)\right]\right.\\
& \left.\qquad\qquad\qquad\qquad\qquad\qquad\qquad\qquad\qquad- \frac{(1-\alpha)(k-1)}{k\xi}\Erw[D]\Erw\left[\Lambda\left(\sum_{\tau\in\Omega^k}\vec\psi(\tau)\prod_{j=1}^k\vec\mu_j^{(\pi)}\right)\right]\right)\\
=&-\sup_{\pi\in\PP}\cB(D,\pi).
\end{align*}

\subsection{Proof of Proposition \ref{prop:derivBound}}
\label{sec:proofPropDeriv} To prove Proposition \ref{prop:derivBound} we derive a more practical expression of the derivative $\Phi'_{T,s}$ which is comparable to the expression from \textbf{POS}.

\begin{proposition}
\label{prop:derivIsXi}
Let $(\check\SIGMA,\G^*_{T,s,t})$ be chosen from the \textbf{IP} experiment. Let $\PSI$ be chosen from $p$, $\vec\mu_1,\vec\mu_2,\ldots,\vec\mu_k$ be chosen from $\pi$, all mutually independent. For $s\in[\sm]$ let $\vec \delta_s$ be chosen from \textbf{AP2'} in Definition \ref{def:abApprox} and set 
$$\vec\nu_s =  \frac{\vec\delta_s}{\sum_{y\in[n]}\vec\delta_s(y)}.$$
With $\vec y,\vec y_1,\ldots,\vec y_k$ chosen uniformly from the set of variable nodes, we let
\begin{align*}
\Xi_{s,t,l} =& \Erw\left[\prod_{i=1}^k\vec\nu_s(\vec y_i)\langle 1-\PSI(\SIGMA(\vec y_1),\ldots,\SIGMA(\vec y_k)\rangle^l_{\G^*_{T,s,t}}\right]\\
 &- \Erw\left[\vec\nu_s(\vec y)\langle 1-\sum_{\tau\in\Omega^k}\PSI(\tau)\vecone\{\tau_i = \SIGMA(\vec y_i)\}\prod_{j\neq i}\vec\mu_j(\tau_j)\rangle^l_{\G^*_{T,s,t}}\right]\\
&+(k-1)\Erw\left[\left(1-\sum_{\tau\in\Omega^k}\PSI(\tau)\prod_{j=1}^k\vec\mu_j(\tau_j)\right)^l\right].
\end{align*}
Then uniformly for all $t\in(0,1)$, $s\in[\sm]$ and $T\ge 0$ we have
\begin{align*}
\frac{\dd }{\dd t}\Phi_{T,s}(t) = o_T(1)n^{-1}+ 
\frac{\beta}{n\xi}\sum_{l\ge 2}\frac{\Xi_{s,t,l}}{l(l-1)}.
\end{align*}
\end{proposition}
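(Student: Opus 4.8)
\emph{Proof proposal.} The plan is to differentiate $\Phi_{T,s}$ at $t$ and identify the result with the right‑hand side of the claimed formula. In the construction of $\G^*_{T,s,t}$ only the two layer‑$s$ Poisson variables $\vec m_s\sim\Po(\beta t)$ and $\vec\gamma_s\sim\Po(\beta(1-t))$, together with the explicit $t$ in $\Gamma_{s,t}$, depend on $t$. Using $\frac{d}{d\lambda}\Erw[g(N)]=\Erw[g(N+1)-g(N)]$ for $N\sim\Po(\lambda)$, the chain rule gives
\begin{align*}
\frac{d}{dt}\Erw[\ln Z(\G^*_{T,s,t})]=\beta\,\Erw\!\left[\ln\frac{Z(\G^{\sharp}+\vec a)}{Z(\G^{\sharp})}\right]-\beta\,\Erw\!\left[\ln\frac{Z(\G^{\sharp}+\vec b_1+\cdots+\vec b_k)}{Z(\G^{\sharp})}\right],
\end{align*}
where $\G^{\sharp}$ is a coupling graph in which $k$ extra sockets at i.i.d.\ $\vec\nu_s$-positions $\vec y_1,\ldots,\vec y_k$ have been removed from the layer‑$s$ degree budget (so that layers $s+1,\ldots,\sm$ are built from the smaller residual degree), to which one then attaches either a single $k$‑ary factor $\vec a$ on $(\vec y_1,\ldots,\vec y_k)$ or $k$ unary factors $\vec b_1,\ldots,\vec b_k$ (weights as in \textbf{I4}) with $\vec b_j$ on $\vec y_j$. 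Differentiating under the expectation is legitimate since the Poisson rates stay bounded and $|\ln Z^{+}-\ln Z|=O(1)$ uniformly (weights in $(0,2)$, degrees $\le\Delta$); the point of the common $\G^{\sharp}$ is that $\vec a$ and the block $\{\vec b_j\}$ deplete exactly these $k$ sockets, so the (complicated) downstream effect is identical and cancels in the difference, and since $\G^{\sharp}$ differs from $\G^*_{T,s,t}$ in only $O(1)$ factors one may replace $\G^{\sharp}$ by $\G^*_{T,s,t}$ in the Gibbs brackets up to $o_T(1)$.

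Next I evaluate the two single‑factor shifts by the Nishimori property. Because $\G^*_{T,s,t}$ is a pinned teacher--student model, Lemmas \ref{lem:nishimori}--\ref{lem:nishimoripinned} apply, and \textbf{SYM} forces the planting normalisation to equal $\xi$ both for a $k$‑ary weight ($\Erw[\PSI(\omega)]=\xi$) and for the compound unary weight ($\Erw[\sum_\tau\PSI(\tau)\vecone\{\tau_i=\omega\}\prod_{h\neq i}\vec\mu_h(\tau_h)]=\xi$, using that every $\pi\in\PP$ has uniform mean). The standard quiet‑planting step --- rewrite a shift as $\xi^{-1}$ times a $\langle\psi(\SIGMA^*(\partial))\rangle$‑reweighted copy of $\ln\langle\psi(\SIGMA(\partial))\rangle$ and replace $\SIGMA^*$ by an independent Gibbs replica via Nishimori --- turns the $k$‑ary shift into $\xi^{-1}\Erw[\Lambda(\langle\PSI(\SIGMA(\vec y_1),\ldots,\SIGMA(\vec y_k))\rangle_{\G^*_{T,s,t}})]$. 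For the block I apply Nishimori to the product weight (normalisation $\xi^k$) and invoke the higher‑order symmetry granted by pinning (Lemma \ref{lem:epsSymmetry}) to factorise $\langle\prod_j\psi_{\vec b_j}(\SIGMA(\vec y_j))\rangle\approx\prod_j\langle\psi_{\vec b_j}(\SIGMA(\vec y_j))\rangle$; combined with $\Lambda(\prod_jx_j)=(\prod_jx_j)\sum_j\ln x_j$ and the i.i.d.\ structure of the $x_j$ (each of mean $\xi$), the block shift collapses to $\frac{k}{\xi}\Erw[\Lambda(\langle\psi_{\vec b}(\SIGMA(\vec y))\rangle_{\G^*_{T,s,t}})]+o_T(1)$. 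Since $\langle\psi_{\vec b}(\SIGMA(\vec y))\rangle=\langle\sum_\tau\PSI(\tau)\vecone\{\tau_i=\SIGMA(\vec y)\}\prod_{h\neq i}\vec\mu_h(\tau_h)\rangle$ and $\frac{d}{dt}\Gamma_{s,t}=\frac{\beta(k-1)}{\xi}\Erw[\Lambda(\sum_\tau\PSI(\tau)\prod_j\vec\mu_j(\tau_j))]$, the three summands defining $\Xi_{s,t,l}$ appear with coefficients $1$, $-k$ and $k-1$ (the factor $k$ being matched by the uniform choice $i\in[k]$ in \textbf{I4}).

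Finally I expand every occurrence of $\Lambda$ through the elementary identity $\Lambda(1-z)=-z+\sum_{l\ge2}z^l/(l(l-1))$. This is legitimate and geometrically convergent because $\Psi$ is finite, so all arguments of $\Lambda$ lie in a compact subinterval of $(0,2)$ and the relevant $z$ satisfy $|z|\le1-c$; absolute convergence permits exchanging $\sum_{l\ge2}$ with the expectations. The $l=1$ contributions carry exactly the coefficients $1$, $-k$, $k-1$, each equals $1-\xi$ by \textbf{SYM}, and $1-k+(k-1)=0$, so they cancel; dividing by $n$ leaves $o_T(1)n^{-1}+\frac{\beta}{n\xi}\sum_{l\ge2}\Xi_{s,t,l}/(l(l-1))$. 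The main obstacle is to make every error term $o_T(1)$ \emph{uniformly} in $s$, $t$ and large $T$: one must justify replacing the Gibbs average of a graph with $O(1)$ perturbed factors by that of $\G^*_{T,s,t}$ and factorising the joint Gibbs average of a random $k$‑tuple, which is precisely where $\eps$‑symmetry from the pinning enters (the $O(1)$ freshly chosen neighbours are \whp\ pairwise far apart and far from the pinned set), while on the $O(\eps)$‑probability event where $\eps$‑symmetry fails the crude bound $|\ln Z^{+}-\ln Z|=O(1)$ contributes only $O(\eps)=o_T(1)$.
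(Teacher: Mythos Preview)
Your proposal is correct and follows essentially the same route as the paper: Poisson differentiation, an auxiliary socket-depleted graph to align the $k$-ary and $k$-unary additions (your $\G^\sharp$ is the paper's $\G^{*,1}_{T,s,t}$), the Nishimori identity to absorb the planting bias, $\eps$-symmetry from pinning to decouple the $k$ unary factors, the series $\Lambda(1-z)=-z+\sum_{l\ge2}z^l/(l(l-1))$, and cancellation of the $l=1$ terms via \textbf{SYM}.

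Two small differences worth noting. First, the paper decouples the unary block at the level of $\ln\langle\prod_j\psi_{\vec b_j}\rangle\approx\sum_j\ln\langle\psi_{\vec b_j}\rangle$ \emph{before} invoking Nishimori and expanding, whereas you apply Nishimori to the full product (normalisation $\xi^k$), factorise inside $\Lambda$, and then use $\Lambda(\prod_j x_j)=\sum_j(\prod_{i\neq j}x_i)\Lambda(x_j)$ together with $\Erw[x_i\mid G]=\xi$; both give $\tfrac{k}{\xi}\Erw[\Lambda(\langle\psi_{\vec b}\rangle)]+o_T(1)$, but the paper's order is a touch shorter. Second, your replacement of $\G^\sharp$ by $\G^*_{T,s,t}$ ``up to $o_T(1)$'' is not really an $\eps$-symmetry statement: the two graphs differ (in distribution) in the way layers $s+1,\ldots,\sm$ are built, and a coupling \`a la Fact~\ref{deltaTVlemma} gives an expected $O(1)$ discrepant factors, which a priori yields an $O(1)$ error rather than $o_T(1)$. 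The paper sidesteps this by keeping the auxiliary graph $\G^{*,1}_{T,s,t}$ in the Gibbs brackets throughout (Lemma~\ref{lem:nishimori2} confirms Nishimori and pinning still apply there), so the cleanest fix is simply not to replace $\G^\sharp$ at all and to state $\Xi_{s,t,l}$ with brackets over $\G^\sharp$; this is harmless for the subsequent use in Proposition~\ref{prop:derivBound}.
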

To prove Proposition \ref{prop:derivIsXi} we begin by rewriting $\frac{\partial}{\partial t} \Erw[\ln Z(\G^*_{T,s,t})]$ into a similar expression. Note that the Poisson distribution with parameters $\lambda>0$ satisfies
\begin{align*}
\frac{\partial}{\partial \lambda}\Po(\lambda)(\{m\})=\frac{\partial}{\partial \lambda}\frac{\lambda^m}{m!}e^{-\lambda}=\frac{\lambda^{m-1}}{(m-1)!}e^{-\lambda}-\frac{\lambda^m}{m!}e^{-\lambda}=\Po(\lambda)(\{m-1\})-\Po(\lambda)(\{m\}),\quad m\ge 1.
\end{align*}
Hence, as $\vec m$ and $\vec \gamma$ are independent but add $\Po(\beta)$ new neighbors to each layer when put together, differentiating $\Erw[\ln Z(\G^*_{T,s,t})]$ with respect to $t$ yields 
\begin{align*}
&\frac{\partial}{\partial t}\Erw[\ln Z(\G^*_{T,s,t})]\\
=&\sum_{m,\gamma}\Erw[\ln Z(\G^*_{T,s,t})|\vec m_s = m,\vec \gamma_s = \gamma]\frac{\partial}{\partial t}\Po(t\beta)(\{m\}) \Po((1-t)\beta)(\{\gamma\})\\
=&\beta\sum_m \left(\Erw[\ln Z(\G^*_{T,s,t})|\vec m_s = m+1] -\Erw[\ln Z(\G^*_{T,s,t})|\vec m_s = m]\right)\Po(t\beta)(\{m\})\\
&-\beta\sum_\gamma\left(\Erw[\ln Z(\G^*_{T,s,t})|\vec \gamma_s = \gamma+1] -\Erw[\ln Z(\G^*_{T,s,t})|\vec \gamma_s = \gamma]\right)\Po((1-t)\beta)(\{\gamma\})\\
=&\beta\left(\Erw[\ln Z(\G^*_{T,s,t}(\vec m+\vecone_s, \vec \gamma))]-\Erw[\ln Z(\G^*_{T,s,t}(\vec m, \vec \gamma))]\right)\\
&-\beta\left(\Erw[\ln Z(\G^*_{T,s,t}(\vec m, \vec \gamma+\vecone_s))]-\Erw[\ln Z(\G^*_{T,s,t}(\vec m, \vec \gamma))]\right).
\end{align*}
The term $\beta\Erw[\ln Z(\G^*_{T,s,t}(\vec m, \vec \gamma))]$ cancels and we can write
\begin{align}
\label{eq:interpolation}
\frac{\partial}{\partial t}\Erw[\ln Z(\G^*_{T,s,t})]=\beta\Erw[\ln Z(\G^*_{T,s,t}(\vec m+\vecone_s, \vec \gamma))]-\beta\Erw[\ln Z(\G^*_{T,s,t}(\vec m, \vec \gamma+\vecone_s))]
\end{align}
Now consider a graph model $\G^{*,1}_{T,s,t}$, where we slightly alter the procedure \textbf{I2} as follows. 
\begin{description}
\item[I2'] For any $\ell\neq s$ we construct the graph as described by \textbf{I2}. If $\ell=s$, instead of increasing $s$ and moving onto \textbf{I3}, we add another $k$ unary check nodes $\vec b^{(1)}_1,\ldots,\vec b^{(1)}_k$ with  neighborhoods chosen from \eqref{eq:unaryN}, update $\vec\delta_{s+1}$ accordingly and equip each of the new check nodes with constant weight functions $\psi_{\vec b^{(h)}}=1$, $h=1,\ldots,k$. Afterwards we increase $s$ and continue. 
\end{description}
In short $\G^{*,1}_{T,s,t}$ differs from $\G^{*}_{T,s,t}$ by having $k$ additional unary check nodes within layer $s$, which have neutral weight. By letting 
\begin{align*}
\Delta_{T,s,t}&=\Erw[\ln Z(\G^*_{T,s,t}(\vec m+\vecone_s, \vec \gamma))]-\Erw[\ln Z(\G^{*,1}_{T,s,t}(\vec m, \vec \gamma))]\quad\text{ and}\\
\Delta'_{T,s,t}&=k^{-1}\Erw[\ln Z(\G^*_{T,s,t}(\vec m, \vec \gamma+\vecone_s))]-k^{-1}\Erw[\ln Z(\G^{*,1}_{T,s,t}(\vec m, \vec \gamma))]\end{align*}
we can write \eqref{eq:interpolation} as
\begin{align}
\label{eq:derivAsDelta}
\frac{\partial}{\partial t}\Erw[\ln Z(\G^*_{T,s,t})]= \beta \Delta_{T,s,t} - k\beta \Delta'_{T,s,t}.
\end{align}

The Nishimori property naturally extends to the case of the interpolation model. By choosing neutral weights in \textbf{I2'} this includes $\G^{*,1}_{T,s,t}$. 
\begin{lemma}
\label{lem:nishimori2}
Lemma \ref{lem:nishimoripinned} remains true if we replace $\G^*$ by either $\G^{*,1}_{T,s,t}$ or $\G^{*}_{T,s,t}$.
\end{lemma}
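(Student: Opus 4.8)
The plan is to reduce Lemma~\ref{lem:nishimori2} to the base Nishimori identity \eqref{eq:Nishi1} for the interpolation model, and then to quote the proof of Lemma~\ref{lem:nishimoripinned} essentially verbatim. First I would record that $\G^*_{s,t}(\sigma)$ is, by its very definition, produced by the reweighting prescription \eqref{eq:reweigh}: it is the law $\cA\mapsto\Erw[\psi_{\G_{s,t}}(\sigma)\vecone\{\G_{s,t}\in\cA\}]\big/\Erw[\psi_{\G_{s,t}}(\sigma)]$ built from the null model $\G_{s,t}$ of \textbf{I1}--\textbf{I4}, in which all factor weights are assigned independently of the planted assignment. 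Moreover $\G^{*,1}_{T,s,t}$ differs from $\G^{*}_{T,s,t}$ only by adjoining unary factors carrying the constant weight $1$ (step \textbf{I2'}); these contribute a factor $1$ to $\psi_{\G}(\cdot)$ for every configuration, so they alter neither the partition function, nor the Gibbs measure, nor the reweighting, and any Nishimori statement for $\G^*_{s,t}$ transfers at once to $\G^{*,1}_{s,t}$. Hence it is enough to establish \eqref{eq:Nishi1} for $\G^*_{s,t}$.

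For that I would re-run the computation in the proof of Lemma~\ref{lem:nishimori}, conditioning on the (random) unweighted interpolation graph in place of conditioning on $\{|F(\G)|=M\}$. The only point at which \textbf{SYM} is used there is the identity $\Erw[\psi_{\G_M}(\sigma)]=\xi^M$, i.e.\ that the expected weight of each factor is unaffected by the restriction of $\sigma$ to its neighborhood; once this holds per factor, the chain of equalities goes through unchanged and yields \eqref{eq:Nishi1}. In $\G_{s,t}$ there are factors of three kinds: the $k$-ary factors carry a weight $\PSI\sim p$, for which $\Erw[\PSI(\tau)]=\xi$ for every $\tau\in\Omega^k$ by \textbf{SYM}; the vacuous unary factors of \textbf{I2'} carry the constant weight $1$; and the unary factors of \textbf{I4} carry the weight $\vec\psi_{\vec b}\colon\sigma\mapsto\sum_{\tau_1,\dots,\tau_k}\PSI(\tau_1,\dots,\tau_{i-1},\sigma,\tau_{i+1},\dots,\tau_k)\prod_{h\neq i}\MU_h(\tau_h)$ with $\PSI\sim p$, $i$ uniform on $[k]$, and $\MU_1,\dots,\MU_k$ i.i.d.\ from $\pi\in\PP$. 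Since $\pi$ has uniform mean, $\Erw[\MU_h(\tau_h)]=|\Omega|^{-1}$, and a one-line computation gives
\begin{align*}
\Erw[\vec\psi_{\vec b}(\sigma)]=|\Omega|^{-(k-1)}\,\Erw_i\Bigl[\,\sum_{\tau\in\Omega^k:\ \tau_i=\sigma}\Erw[\PSI(\tau)]\,\Bigr]=|\Omega|^{-(k-1)}\cdot|\Omega|^{k-1}\cdot\xi=\xi,
\end{align*}
which is independent of $\sigma$. Therefore $\Erw[\psi_{\G_{s,t}}(\sigma)]$ depends only on the numbers of factors of each kind produced by the Poisson variables $\vec m,\vec\gamma$, not on $\sigma$, so the argument of Lemma~\ref{lem:nishimori} applies and \eqref{eq:Nishi1} holds for $\G^*_{s,t}$, hence for $\G^{*,1}_{s,t}$.

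It then remains only to observe that the proof of Lemma~\ref{lem:nishimoripinned} uses nothing about $\G^*$ beyond \eqref{eq:Nishi1} and two structural facts that persist here: every non-pinning weight function is strictly positive, since the weights from $p$ lie in $(0,2)$, the \textbf{I4} weights satisfy $\vec\psi_{\vec b}(\sigma)\ge\min_{\tau}\PSI(\tau)>0$, and the \textbf{I2'} weights equal $1$, so no denominator in \eqref{eq:NM2and4} vanishes; and adjoining a unary constraint on an already-present neighborhood does not change the teacher-student reweighting, so the evaluation of $\psi_{\G}$ on the pinned variables is the same in either order. Feeding $\G^{*,1}_{T,s,t}$ or $\G^{*}_{T,s,t}$ into the argument of Lemma~\ref{lem:nishimoripinned} in place of $\G^*$ then yields the four claimed identities, including their $T$-pinned forms (the pinning in \textbf{IP} being itself an instance of the operation $G\mapsto G_{U,\sigma}$ with $\sigma$ the planted assignment). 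The only genuinely new ingredient, and the step I would be most careful about, is the verification that the \textbf{I4} unary weights are \textbf{SYM}-compatible, i.e.\ that $\Erw[\vec\psi_{\vec b}(\sigma)]$ is constant in $\sigma$; this is precisely where the hypothesis $\pi\in\PP$ enters, and everything else is bookkeeping.
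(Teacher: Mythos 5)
Your proposal is correct and follows exactly the route the paper intends: the paper's proof is the single line ``analogous to Lemma \ref{lem:nishimori}'' (preceded by the remark that the neutral weights in \textbf{I2'} cover $\G^{*,1}_{T,s,t}$), and you carry out precisely that analogy. You also make explicit the one substantive new verification the paper leaves implicit --- that the \textbf{I4} unary weights satisfy $\Erw[\vec\psi_{\vec b}(\sigma)]=\xi$ independently of $\sigma$, which is where $\pi\in\PP$ having uniform mean enters --- and this computation is correct.
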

The proof is analogous to Lemma \ref{lem:nishimori}

\begin{lemma}
\label{lem:DDprime}
Let $(\check\SIGMA,\G^*_{T,s,t})$ be chosen from the \textbf{IP} experiment. Let $\vec a$ be a check node with $\partial \vec a$ chosen from $\vec\nu_s^{\otimes k}$ and $\PSI_{\vec a}$ chosen from 
\begin{align*}
\pr[\PSI =\psi] = \frac{p(\psi)}{\xi}\psi(\SIGMA(\partial \vec a)).
\end{align*}
Moreover, let $\vec b$ be a unary check node with $\partial \vec b$ chosen from $\vec \nu_s$ and assign to it a weight function $\PSI_{\vec b}$ defined by
\begin{align*}
\PSI_{\vec b}(\sigma)=\sum_{\tau\in\Omega^k}\PSI'(\tau)\vecone\{\tau_{\vec i}=\sigma\}\prod_{h\neq \vec i}\vec\mu'_h(\tau_h),
\end{align*}
where the index $\vec i\in[k]$, the weight function $\PSI'\in\Psi$ and $\vec\mu'_1,\ldots,\vec\mu'_k$ are chosen from
\begin{align}
\label{eq:distrPreB}
\pr[\vec i = i,(\vec\mu'_1,\ldots,\vec\mu'_k)\in\cA,\PSI'=\psi]\propto p(\psi)\sum_{\tau\in\Omega^k}\vecone\{\tau_i=\check\SIGMA(\partial\vec b)\}\psi(\tau)\int_{\cA}\prod_{j\neq i}\mu'_j(\tau_j)\dd \pi^{\otimes k}(\mu'_1,\ldots,\mu'_k).
\end{align}
Then 
\begin{align}
\Delta_{T,s,t} =&\Erw[\ln \langle \vec\psi_{\vec a}(\SIGMA(\partial \vec a))\rangle_{\G_{T,s,t}^*}],\label{eq:DeltaExpression1}\\
\Delta'_{T,s,t} =&\Erw[\ln \langle \vec\psi_{\vec b}(\SIGMA(\partial \vec b))\rangle_{\G_{T,s,t}^*}] + o_T(1).\label{eq:DeltaExpression2}
\end{align}
\end{lemma}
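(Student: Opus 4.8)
The plan is to prove both identities by a single device: couple the two graphs whose free energies are being compared, extract a common weighted sub-factor-graph $\G^-$, and read off the ratio of the two partition functions as a Gibbs average over $\G^-$. The structural fact that makes this work is that in the layered construction \textbf{I2} every neighbourhood generated inside a fixed layer $s$ is drawn independently from the same distribution $\vec\nu_s$, the degree sequence being updated only between layers; consequently a single $k$-ary neighbourhood, drawn from $\vec\nu_s^{\otimes k}$, consumes exactly the same random multiset of $k$ sockets, in distribution, as do $k$ independent unary neighbourhoods drawn from $\vec\nu_s$. Using this I would couple $\G^*_{T,s,t}(\vec m+\vecone_s,\vec\gamma)$ with $\G^{*,1}_{T,s,t}(\vec m,\vec\gamma)$, and $\G^*_{T,s,t}(\vec m,\vec\gamma+\vecone_s)$ with $\G^{*,1}_{T,s,t}(\vec m,\vec\gamma)$, so that in each pair the two graphs agree on a common weighted factor graph $\G^-$ and differ only inside layer $s$: one copy carries the additional $k$-ary node $\vec a$ (resp. the $k$ additional unary interpolation nodes $\vec b^{(1)},\dots,\vec b^{(k)}$), the other carries $k$ unary nodes of constant weight $\equiv 1$ placed on exactly the same sockets. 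The pinning of \textbf{IP} sits inside $\G^-$ and enters only through Lemma \ref{lem:epsSymmetry}, which I use to ensure that for $T$ large $\mu_{\G^-}$ is $(\delta,\ell)$-symmetric with probability $1-\eps$.

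For \eqref{eq:DeltaExpression1}: since the constant weights contribute the factor $1$, one has $Z(\G^{*,1}_{T,s,t}(\vec m,\vec\gamma))=Z(\G^-)$ while $Z(\G^*_{T,s,t}(\vec m+\vecone_s,\vec\gamma))=Z(\G^-)\langle\vec\psi_{\vec a}(\SIGMA(\partial\vec a))\rangle_{\G^-}$, so taking logarithms and expectations yields $\Delta_{T,s,t}=\Erw[\ln\langle\vec\psi_{\vec a}(\SIGMA(\partial\vec a))\rangle_{\G^-}]$. It then remains to identify the joint law of $(\vec\psi_{\vec a},\G^-)$: conditionally on $\G^-$ and on the planted assignment the teacher--student tilt makes $\vec\psi_{\vec a}$ proportional to $p(\psi)\psi(\SIGMA^*(\partial\vec a))$, and by \textbf{SYM} the normalising sum $\sum_{\psi}p(\psi)\psi(\sigma)$ equals the constant $\xi$ for every $\sigma$, so $\vec\psi_{\vec a}$ has precisely the law in the statement; the Nishimori property for the interpolation-plus-pinning model (Lemma \ref{lem:nishimori2}) then lets me replace the planted assignment restricted to $\G^-$ by a sample from $\mu_{\G^-}$, so that $(\vec\psi_{\vec a},\G^-)$ acquires exactly the distribution appearing on the right-hand side. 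Since $\G^-$ is distributed as $\G^*_{T,s,t}$ once $\vec\delta_s$ is read off from this coupled construction, \eqref{eq:DeltaExpression1} follows; all the logarithms appearing are bounded, because $\Psi$ is finite with values in $(0,2)$ and hence $\langle\vec\psi_{\vec a}(\SIGMA(\partial\vec a))\rangle_{\G^-}$ stays in a fixed compact subinterval of $(0,2)$.

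For \eqref{eq:DeltaExpression2}: the same bookkeeping gives $Z(\G^*_{T,s,t}(\vec m,\vec\gamma+\vecone_s))=Z(\G^-)\langle\prod_{h=1}^k\vec\psi_{\vec b^{(h)}}(\SIGMA(\partial\vec b^{(h)}))\rangle_{\G^-}$, hence $\Delta'_{T,s,t}=\frac{1}{k}\Erw[\ln\langle\prod_{h=1}^k\vec\psi_{\vec b^{(h)}}(\SIGMA(\partial\vec b^{(h)}))\rangle_{\G^-}]$. The neighbourhoods $\partial\vec b^{(1)},\dots,\partial\vec b^{(k)}$ are independent $\vec\nu_s$-samples, so \whp~ distinct, and $\vec\nu_s$ is within a constant factor of the uniform distribution on its support (all degrees are at most $\Delta$ while $\sum_x\vec\delta_s(x)=\Omega(\alpha n)$); therefore the $(\delta,k)$-symmetry of $\mu_{\G^-}$ forces $\langle\prod_h\vec\psi_{\vec b^{(h)}}(\SIGMA(\partial\vec b^{(h)}))\rangle_{\G^-}$ to be within $o_T(1)$ of $\prod_h\langle\vec\psi_{\vec b^{(h)}}(\SIGMA(\partial\vec b^{(h)}))\rangle_{\G^-}$, the boundedness of the weights in a compact subinterval of $(0,2)$ controlling both the factorisation error on the symmetric event and the (bounded) contribution of the $\eps$-probability exceptional event. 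Taking $\ln$, dividing by $k$ and using the exchangeability of the $k$ summands collapses the expression to $\Erw[\ln\langle\vec\psi_{\vec b}(\SIGMA(\partial\vec b))\rangle_{\G^-}]$, and a further appeal to \textbf{SYM} and Lemma \ref{lem:nishimori2} shows that the teacher--student tilt of a single unary interpolation node produces exactly the law \eqref{eq:distrPreB} — the tilt multiplies the prior $\mathrm{Unif}[k]\times p\times\pi^{\otimes k}$ on $(\vec i,\PSI',\vec\mu')$ by $\sum_\tau\vecone\{\tau_{\vec i}=\SIGMA(\partial\vec b)\}\PSI'(\tau)\prod_{h\neq\vec i}\vec\mu'_h(\tau_h)$ — which gives \eqref{eq:DeltaExpression2}.

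I expect the real obstacles to sit in two different places. In \eqref{eq:DeltaExpression1} the work is purely bookkeeping of the coupling: one has to verify that deleting the added node(s) after the shared degree update genuinely reproduces the law of $\G^*_{T,s,t}$ from the statement — including the degree sequence $\vec\delta_s$ and the neighbourhood law $\vec\nu_s^{\otimes k}$ of $\vec a$ — and that the inevitable $O(1)$ discrepancy in the number of consumed sockets is harmless, since it alters the free energy and the averaged Gibbs marginals only by $o(1)$ (by an editing argument in the spirit of Lemma \ref{lem:G0GTast}). In \eqref{eq:DeltaExpression2} the genuine content is the approximate factorisation of the $k$-fold Gibbs marginal into its one-point marginals, which is exactly where the $\eps$-symmetry supplied by the pinning in \textbf{IP} (Lemma \ref{lem:epsSymmetry}) is indispensable, and where the passage to the logarithm is legitimate only because the finiteness of $\Psi$ keeps all the weights, and therefore their Gibbs averages, bounded away from $0$ and from $2$.
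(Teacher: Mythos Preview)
Your proposal is correct and follows essentially the same route as the paper: couple each pair of graphs on a common weighted sub-factor-graph (your $\G^-$ is the paper's $\G^{*,1}_{T,s,t}$ with its neutral-weight nodes stripped), read off the partition-function ratio as a Gibbs average, and for \eqref{eq:DeltaExpression2} invoke the $(\delta,k)$-symmetry supplied by pinning to factorise the $k$-fold product before using exchangeability. You are in fact more careful than the paper about two points it glosses over --- the law of the tilted weight functions via \textbf{SYM}, and the $O(1)$ socket discrepancy between $\G^-$ and $\G^*_{T,s,t}$ --- though your appeal to Nishimori in the first identity is not strictly needed, since the conditional law of $\vec\psi_{\vec a}$ given the planted assignment is already the one in the statement by definition of the teacher--student construction.
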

\begin{proof}
We begin by showing \eqref{eq:DeltaExpression1}. Note that $\G^{*,1}_{T,s,t}(\vec m, \vec \gamma)$ differs from $\G^{*}_{T,s,t}(\vec m, \vec \gamma)$ by having $k$ additional unary check nodes with neutral weight assigned during layer $s$. Of course the sockets chosen as neighbors of these additional check nodes induce a perturbation on the distributions $\vec \nu_\ell$, $\ell\ge s$. However, $\G^{*,1}_{T,s,t}(\vec m, \vec \gamma)$ and $\G^{*}_{T,s,t}(\vec m+\vecone_s, \vec \gamma)$ can be coupled such that afore-said neighborhood is assigned to the the additional $k$-ary check node $\vec a$, while all remaining choices of neighborhoods and weight functions are chosen from the same distribution within both graphs. Given this coupling, we have
\begin{align*}
\frac{Z(\G^{*}_{T,s,t}(\vec m+\vecone_s, \vec \gamma))}{Z(\G^{*,1}_{T,s,t}(\vec m, \vec \gamma))}=&\frac{\sum_{\sigma\in\Omega^k}\vec\psi_{\vec a}(\sigma(\partial \vec a))\prod_{c\in \vec F\setminus\{\vec a\}}\psi_c(\sigma)}{Z(\G^{*,1}_{T,s,t}(\vec m, \vec \gamma))}=\sum_{\sigma\in\Omega^k}\frac{\psi_{\G^{*,1}_{T,s,t}(\vec m, \vec \gamma)}(\sigma)}{Z(\G^{*,1}_{T,s,t}(\vec m, \vec \gamma))}\psi_{\vec a}(\sigma(\partial \vec a))\\
=&\sum_{\sigma\in\Omega^k}\psi_{\vec a}(\sigma(\partial \vec a))\mu_{\G^{*,1}_{T,s,t}(\vec m, \vec \gamma)}(\sigma) = \langle \vec\psi_{\vec a}(\SIGMA(\partial \vec a))\rangle_{\G^{*,1}_{T,s,t}}.
\end{align*}
Taking the logarithm and integrating immediately gives \eqref{eq:DeltaExpression1}. To show \eqref{eq:DeltaExpression2} we couple $\G^{*}_{T,s,t}(\vec m, \vec \gamma+\vecone_s)$ and $\G^{*,1}_{T,s,t}(\vec m, \vec \gamma)$ in a similar fashion. That is in layer $s$, the neighborhood chosen by $\vec b^{(1)}_1,\ldots,\vec b^{(1)}_k$ of $\G^{*,1}_{T,s,t}(\vec m, \vec \gamma)$ is equally assigned to the $k$ additional unary check nodes of $\G^{*}_{T,s,t}(\vec m, \vec \gamma+\vecone_s)$ that result from the positive entry in $\vecone_s$. These $k$ unary nodes are then independently assigned weight functions chosen from \eqref{eq:distrPreB}. We couple all remaining random variables trivially by copying the choice of $\G^{*,1}_{T,s,t}(\vec m, \vec \gamma)$. With this joint distribution, we can write
\begin{align}
&\frac{Z(\G^{*}_{T,s,t}(\vec m, \vec \gamma+\vecone_s))}{Z(\G^{*,1}_{T,s,t}(\vec m, \vec \gamma))}\nonumber\\
=&\frac{\sum_{\sigma\in\Omega^k}\prod_{i=1}^k\vec\psi_{\vec b^{(1)}_i}(\sigma(\partial \vec b^{(1)}_i))\prod_{c\in \vec F\setminus\{\cup_{i=1}^k\vec b^{(1)}_i\}}\psi_c(\sigma)}{Z(\G^{*,1}_{T,s,t}(\vec m, \vec \gamma))}=\left\langle \prod_{i=1}^k \vec\psi_{\vec b^{(1)}_i}(\SIGMA(\partial \vec b^{(1)}_i))\right\rangle_{\G^{*,1}_{T,s,t}}.\label{eq:DeltaSingles}
\end{align}
By taking the logarithm and integrating \eqref{eq:DeltaSingles} gives
\begin{align*}
k\Delta'_{T,s,t}=\Erw[\ln Z(\G^*_{T,s,t}(\vec m, \vec \gamma+\vecone_s))]-\Erw[\ln Z(\G^{*,1}_{T,s,t}(\vec m, \vec \gamma))]=\Erw\left[\left\langle \prod_{i=1}^k \vec\psi_{\vec b^{(1)}_i}(\SIGMA(\partial \vec b^{(1)}_i))\right\rangle_{\G^{*,1}_{T,s,t}}\right].
\end{align*}
While the initial Gibbs measure of $\G^{*,1}_{s,t}$ does not necessarily factorize over the marginals of $\partial \vec b^{(1)}_i$, $i=1,\ldots,k$, Lemma \ref{lem:epsSymmetry} guarantees that our pinned measure $\G^{*,1}_{T,s,t}$ is in fact $(o_T(1),k)$-symmetric with probability $1-o_T(1)$. Because the random set $\vec Y=\{\cup_{i=1}^k \vec b^{(1)}_i\}$ is contiguous with respect to a uniformly random choice $\vec Y'$ from $[n]^k$ this implies
\begin{align*}
\Erw\left[\left\langle \prod_{i=1}^k \vec\psi_{\vec b^{(1)}_i}(\SIGMA(\partial \vec b^{(1)}_i))\right\rangle_{\G^{*,1}_{T,s,t}}\right]=k\Erw\left[\langle \vec\psi_{\vec b^{(1)}_1}(\SIGMA(\partial \vec b^{(1)}_1))\rangle_{\G^{*,1}_{T,s,t}}\right]+o_T(1)
\end{align*}
and thus proves the assertion.
\end{proof}

Before we can verify Proposition \ref{prop:derivIsXi} we have to write out \eqref{eq:DeltaExpression1} and \eqref{eq:DeltaExpression2}. This is a simple, but technical computation.

\begin{claim}
\label{claim:calculations}
With the assumptions of Proposition \ref{prop:derivIsXi} we have
\begin{align}
\Delta_{T,s,t} = & - \frac{1-\xi}{\xi}+\frac{1}{\xi}\sum_{l\ge 2}\frac{1}{l(l-1)}\Erw\left[\prod_{i=1}^k \vec\nu_s(\vec y_i)\left\langle\prod_{h=1}^l 1-\vec\psi(\SIGMA_h(\vec y_1),\ldots,\SIGMA_h(\vec y_k))\right\rangle_{\G^*_{T,s,t}} \right]\label{eq:ADelta}\\
\Delta'_{T,s,t}= & o_T(1)-\frac{1-\xi}{\xi}\nonumber\\
&+\sum_{l\ge 2}\frac{1}{l(l-1)\xi}\Erw\left[\vec\nu_s(\vec y)\left\langle \prod_{h=1}^l1-\sum_{\tau\in\Omega^k}\PSI(\tau)\vecone\{\tau_{\vec i}=\check\SIGMA(\vec y)\}\prod_{j\neq i}\vec\mu_j(\tau_j)\right\rangle_{\G^*_{T,s,t}}\right].\label{eq:ADeltaP}
\end{align}
The definition $\Delta''_{s,t}=\frac{1}{\beta(k-1)}\frac{\partial}{\partial t}\Gamma_{s,t}$ gives
\begin{align}
\Delta''_{s,t}=-\frac{1-\xi}{\xi}+\frac{1}{\xi}\sum_{l\ge 2}\frac{1}{l(l-1)}\Erw\left[\left(1-\sum_{\tau\in\Omega^k}\PSI(\tau)\prod_{j=1}^k\vec\mu_j(\tau_j)\right)^l\right].\label{eq:ADeltaPP}
\end{align} 
\end{claim}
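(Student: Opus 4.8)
The plan is to verify the three displayed identities \eqref{eq:ADelta}, \eqref{eq:ADeltaP} and \eqref{eq:ADeltaPP} by expanding the expressions \eqref{eq:DeltaExpression1}, \eqref{eq:DeltaExpression2} and the definition of $\Gamma_{s,t}$ using the Taylor series $\ln(x) = -\sum_{l\ge 1}\frac{(1-x)^l}{l}$, valid since all quantities of the form $\langle\vec\psi_{\vec a}(\SIGMA(\partial\vec a))\rangle$, $\langle\vec\psi_{\vec b}(\SIGMA(\partial\vec b))\rangle$ and $\sum_\tau\vec\psi(\tau)\prod_j\vec\mu_j(\tau_j)$ take values in $(0,2)$ because the weight functions in $\Psi$ take values in $(0,2)$. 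The key algebraic device, exactly as in the computation underlying \textbf{POS}, is that differences of consecutive partial sums of the logarithmic series combine telescopically: using $\frac1l - \frac1{l+1}$-type identities one rearranges $-\sum_{l\ge1}\frac{a^l}{l}$ into $-\,a\cdot(\text{something})$, and the linear term $a = 1 - (\cdots)$ contributes the constant $-\frac{1-\xi}{\xi}$ after taking expectations, since by \textbf{SYM} and the Nishimori property (Lemma \ref{lem:nishimori2}, applied to $\G^*_{T,s,t}$ and $\G^{*,1}_{T,s,t}$) the first moment of the Gibbs average of $\vec\psi$ over a freshly planted check node equals $1$, while the normalisation in the planted weight distribution contributes a factor $\xi^{-1}$.

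Concretely, for \eqref{eq:ADelta} I would start from \eqref{eq:DeltaExpression1}, write $\Delta_{T,s,t}=\Erw[\ln\langle\vec\psi_{\vec a}(\SIGMA(\partial\vec a))\rangle_{\G^*_{T,s,t}}]$, expand the logarithm, and move the expectation inside (justified by absolute convergence, since $|1-\vec\psi_{\vec a}|<1$ uniformly). The $l$-th term involves $\Erw[\langle 1 - \vec\psi_{\vec a}(\SIGMA(\partial\vec a))\rangle_{\G^*_{T,s,t}}^l]$, which by introducing $l$ independent Gibbs replicas $\SIGMA_1,\ldots,\SIGMA_l$ becomes $\Erw[\langle\prod_{h=1}^l(1-\vec\psi_{\vec a}(\SIGMA_h(\partial\vec a)))\rangle]$; then one unfolds the planted distribution of $(\partial\vec a,\PSI_{\vec a})$ — neighborhood from $\vec\nu_s^{\otimes k}$, weight reweighted by $\psi(\SIGMA(\partial\vec a))/\xi$ — which produces the $\prod_i\vec\nu_s(\vec y_i)$ factor and, after using that the reweighting by the planted coloring is exactly absorbed by Nishimori, the coefficient $1/\xi$ together with $\vec y_1,\ldots,\vec y_k$ i.i.d.\ uniform. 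Collecting the series index as $\sum_{l\ge2}\frac{1}{l(l-1)}$ (the $l=1$ term yielding the constant $-\frac{1-\xi}{\xi}$ via $\sum\frac1l - \sum\frac1{l(l-1)}$ bookkeeping) gives \eqref{eq:ADelta}. The computation for \eqref{eq:ADeltaP} is identical except that the relevant planted object is the unary check $\vec b$ of Lemma \ref{lem:DDprime}, whose weight is built from $\PSI'$, $\vec i$ and $\vec\mu'_1,\ldots,\vec\mu'_k$ via \eqref{eq:distrPreB}; here the conditioning $\tau_{\vec i}=\check\SIGMA(\partial\vec b)$ survives into the final formula as the indicator $\vecone\{\tau_{\vec i}=\check\SIGMA(\vec y)\}$, and the $o_T(1)$ error is inherited from \eqref{eq:DeltaExpression2}. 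For \eqref{eq:ADeltaPP}, one simply differentiates $\Gamma_{s,t}$ in $t$ — the only $t$-dependence is the prefactor $(s+t-1)$ — so $\frac{\partial}{\partial t}\Gamma_{s,t}=\frac{\beta(k-1)}{\xi}\Erw[\Lambda(\sum_\tau\vec\psi(\tau)\prod_j\vec\mu_j(\tau_j))]$, and expanding $\Lambda(x)=x\ln x$ via $\ln x = -\sum_l\frac{(1-x)^l}{l}$ together with $\Erw[\sum_\tau\vec\psi(\tau)\prod_j\vec\mu_j(\tau_j)]=\xi$ (by \textbf{SYM} and the mean-uniform property of $\pi\in\PP$) produces \eqref{eq:ADeltaPP} after the same $l(l-1)$ regrouping.

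The main obstacle I anticipate is not conceptual but bookkeeping: justifying the interchange of $\Erw$, the Gibbs bracket, and the infinite sum, and keeping track of the exact combinatorial coefficients so that the $l=1$ contribution condenses precisely to $-\frac{1-\xi}{\xi}$ and the tail to $\sum_{l\ge2}\frac{1}{l(l-1)}\Erw[\cdots]$. Convergence is guaranteed because $\Psi\subset(0,2)$ forces every bracketed quantity into $(-1,1)$, but one must argue this uniformly over the (random) graph and the pinning so that Fubini applies; this is where invoking that weights are bounded away from $0$ and $2$ on the finite set $\Psi$, and that all Gibbs averages of products of such bounded functions stay in a fixed compact subinterval of $(-1,1)$, does the work. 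A secondary point requiring care is the correct use of the Nishimori identity in the interpolation model (Lemma \ref{lem:nishimori2}) to replace the planted-reweighted neighborhood distribution by the plain $\vec\nu_s^{\otimes k}$ (resp.\ $\vec\nu_s$) distribution at the cost of the single global factor $\xi^{-1}$ — exactly the step that turns \eqref{eq:DeltaExpression1}, \eqref{eq:DeltaExpression2} into the symmetric replica expressions appearing in \textbf{POS}.
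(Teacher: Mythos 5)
Your proposal is correct and follows essentially the same route as the paper: expand the logarithm of the Gibbs average via $\ln x=-\sum_{l\ge1}(1-x)^l/l$ (licit since the finite set $\Psi$ keeps all brackets in a compact subinterval of $(-1,1)$), introduce replicas, unfold the planted neighborhood/weight distribution to produce the $\prod_i\vec\nu_s(\vec y_i)$ and $\xi^{-1}$ factors, use the Nishimori identity of Lemma \ref{lem:nishimori2} to convert the extra planted factor $1-\vec\psi(\check\SIGMA(\cdot))$ into an $(l+1)$-st replica, and telescope so that the $l=1$ term condenses to $-\frac{1-\xi}{\xi}$ by \textbf{SYM} while the tail regroups as $\sum_{l\ge2}\frac{1}{l(l-1)}$; the computation of $\Delta''_{s,t}$ by differentiating the linear-in-$t$ prefactor of $\Gamma_{s,t}$ and expanding $\Lambda$ is likewise identical to the paper's.
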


\begin{proof}
We begin by showing \eqref{eq:ADelta}. In the interpolation model $(\check\SIGMA,\G^*_{T,s,t})$, each $k$-ary check node $\vec a$ chooses its neighborhood from $\pr[\partial \vec a = (x_1,\ldots,x_k)]=\vec\nu_s^{\otimes k}(x_1,\ldots,x_k)$ and then obtains a weight function from $\pr[\PSI_{\vec a}|\check\SIGMA=\sigma]=\xi^{-1}p(\psi)\psi\sigma(\partial \vec a)$. Because all weight functions take values in $(0,2)$ writing $\SIGMA_1,\SIGMA_2,\ldots$ for independent samples from $\mu_{\G_{T,s,t}^{*,1}}$ and expanding the logarithm
\begin{align*}
\ln\langle \vec\psi(\SIGMA(y_1),\ldots,\SIGMA(y_k))\rangle_{\G_{T,s,t}^{*,1}}&=-\sum_{l\ge 1}\frac{1}{l}\langle 1-\vec\psi(\SIGMA(y_1),\ldots,\SIGMA(y_k))\rangle^l_{\G_{T,s,t}^{*,1}}\\
&=-\sum_{l\ge 1}\frac{1}{l}\left\langle \prod_{h=1}^l1-\vec\psi(\SIGMA_h(y_1),\ldots,\SIGMA_h(y_k))\right\rangle_{\G_{T,s,t}^{*,1}},
\end{align*}
we obtain
\begin{align}
&\Erw\left[\ln \left\langle \vec\psi_{\vec a}(\SIGMA(\partial \vec a))\right\rangle_{\G_{T,s,t}^{*,1}}\right]\nonumber\\
=&-\sum_{l\ge 1}\frac{1}{l\xi n^k}\sum_{y_1,\ldots,y_k}\Erw\left[\left(\prod_{i=1}^k\vec\nu_s(y_i)\right)\vec\psi(\check\SIGMA(y_1),\ldots,\check\SIGMA(y_k)))\left\langle \prod_{h=1}^l1-\vec\psi(\SIGMA_h(y_1),\ldots,\SIGMA_h(y_k))\right\rangle_{\G_{T,s,t}^{*,1}}\right]\nonumber\\
=&\sum_{l\ge 1}\frac{1}{l\xi n^k}\sum_{y_1,\ldots,y_k}\Erw\left[\left(\prod_{i=1}^k\vec\nu_s(y_i)\right)(1-\vec\psi(\check\SIGMA(y_1),\ldots,\check\SIGMA(y_k)))\right.\\
&\qquad\qquad\qquad\qquad\qquad\qquad\qquad\qquad\cdot\left.\left\langle \prod_{h=1}^l1-\vec\psi(\SIGMA_h(y_1),\ldots,\SIGMA_h(y_k))\right\rangle_{\G_{T,s,t}^{*,1}}\right]\nonumber\\
&\qquad\qquad\quad-\sum_{l\ge 1}\frac{1}{l\xi n^k}\sum_{y_1,\ldots,y_k}\Erw\left[\left(\prod_{i=1}^k\vec\nu_s(y_i)\right)\langle 1-\vec\psi(\SIGMA(y_1),\ldots,\SIGMA(y_k))\rangle^l_{\G_{T,s,t}^{*,1}}\right].\label{eq:Dst1}
\end{align}
By Lemma \ref{lem:nishimori2}, the pairs $(\check\SIGMA,\G_{T,s,t}^{*,1})$ and $(\SIGMA_h,\G_{T,s,t}^{*,1})$, $h\ge 1$ are identically distributed. Thus, we can write
\begin{align*}
&\Erw\left[\left(\prod_{i=1}^k\vec\nu_s(y_i)\right)(1-\vec\psi(\check\SIGMA(y_1),\ldots,\check\SIGMA(y_k)))\left\langle \prod_{h=1}^l1-\vec\psi(\SIGMA_h(y_1),\ldots,\SIGMA_h(y_k))\right\rangle_{\G_{T,s,t}^{*,1}}\right]\nonumber\\
=&\Erw\left[\left(\prod_{i=1}^k\vec\nu_s(y_i)\right)\left\langle \prod_{h=1}^{l+1}1-\vec\psi(\SIGMA_h(y_1),\ldots,\SIGMA_h(y_k))\right\rangle_{\G_{T,s,t}^{*,1}}\right].
\end{align*}
Moreover, \textbf{SYM} gives
\begin{align*}
\frac{1}{\xi n^k}\sum_{y_1,\ldots, y_k}\Erw\left[\left(\prod_{i=1}^k\vec\nu_s(y_i)\right)\langle1-\vec\psi(\SIGMA(y_1),\ldots,\SIGMA(y_k))\rangle_{\G_{T,s,t}^{*,1}}\right]=\frac{1-\xi}{\xi}
\end{align*}
and thus \eqref{eq:Dst1} simplifies to
\begin{align*}
-\frac{1-\xi}{\xi}+\sum_{l\ge 2}\sum_{y_1,\ldots, y_k}\frac{1}{l(l-1)\xi n^k}\Erw\left[\left(\prod_{i=1}^k\vec\nu_s(y_i)\right)\left\langle \prod_{h=1}^{l+1}1-\vec\psi(\SIGMA_h(y_1),\ldots,\SIGMA_h(y_k))\right\rangle_{\G_{T,s,t}^{*,1}}\right].
\end{align*}
Lemma \ref{lem:DDprime} then implies \eqref{eq:ADeltaPP}.

To calculate \eqref{eq:ADeltaP}, recall that in layer $s$ a unary check node $\vec b$ is added by chosing a neighbor $\vec y$ from $\nu_s$ and equipping $\vec b$ with a weight function chosen from \eqref{eq:distrPreB}. For $y\in[n]$ write $\vec b^y$ for a check node chosen in the same way but conditioned on the event that $\partial \vec b = y$. Then
\begin{align}
\Erw\left[\ln \langle \vec\psi_{\vec b}(\SIGMA(\partial \vec b))\rangle_{\G_{T,s,t}^{*,1}}\right]=\sum_{y\in [n]}\Erw\left[\vec\nu_s(y)\ln \langle \vec\psi_{\vec b^y}(\SIGMA(y))\rangle_{\G_{T,s,t}^{*,1}}\right].\label{eq:Dst2}
\end{align}
By \textbf{SYM} the normalization in \eqref{eq:distrPreB} is $
\sum_{\tau\in\Omega^k}\sum_{i=1}^k\vecone\{\tau_i=\check\SIGMA(\partial\vec b)\}\Erw[\PSI(\tau)\prod_{j\neq i}\vec\mu'_j(\tau_j)]=k\xi$. By writing out \eqref{eq:distrPreB} in equation \eqref{eq:Dst2} we get
\begin{align}
\frac{1}{nk\xi }\sum_{y\in[n],i\in[k]}&\Erw\left[\vec\nu_s(y)\sum_{\tau\in\Omega^k}\vecone\{\tau_i=\check\SIGMA(x)\}\PSI(\tau)\right.\\
&\qquad\qquad\qquad\cdot\left.\prod_{j\neq i}\vec\mu'_j(\tau_j)\ln \left\langle \sum_{\sigma\in\Omega^k}\vecone\{\sigma_i = \SIGMA(y)\}\PSI(\sigma)\prod_{j\neq i}\vec\mu'_j(\sigma_j)\right\rangle_{\G_{T,s,t}^{*,1}}\right].\label{eq:Dst3}
\end{align}
Again, we utilize the fact that our weight functions take values within $(0,2)$ and expand the logarithm to write \eqref{eq:Dst3} as
\begin{align*}
&-\sum_{y\in[n]}\sum_{i=1}^k\sum_{l\ge 1}\frac{1}{nkl\xi}\Erw\left[\vec\nu_s(y)\sum_{\tau\in\Omega^k}\vecone\{\tau_i=\check\SIGMA(x)\}\PSI(\tau)\prod_{j\neq i}\vec\mu'_j(\tau_j)\right.\\
&\qquad\qquad\qquad\qquad\qquad\qquad\qquad\qquad\left.\cdot\left\langle \prod_{h=1}^l 1-\sum_{\sigma\in\Omega^k}\vecone\{\sigma_i = \SIGMA_h(y)\}\PSI(\sigma)\prod_{j\neq i}\vec\mu'_j(\sigma_j)\right\rangle_{\G_{T,s,t}^{*,1}}\right]\\
=&-\sum_{y\in[n]}\sum_{i=1}^k\sum_{l\ge 1}\frac{1}{nkl\xi}\Erw\left[\vec\nu_s(y)\left(1-\sum_{\tau\in\Omega^k}\vecone\{\tau_i=\check\SIGMA(x)\}\PSI(\tau)\prod_{j\neq i}\vec\mu'_j(\tau_j)\right)\right.\\
&\qquad\qquad\qquad\qquad\qquad\qquad\qquad\qquad\cdot\left\langle\prod_{h=1}^l1-\sum_{\sigma\in\Omega^k}\vecone\{\sigma_i = \SIGMA_h(y)\}\PSI(\sigma)\prod_{j\neq i}\vec\mu'_j(\sigma_j)\right\rangle_{\G_{T,s,t}^{*,1}}\\
&\qquad\qquad\qquad\qquad\qquad\qquad\qquad-\left.\vec\nu_s(y)\left\langle\prod_{h=1}^l1-\sum_{\sigma\in\Omega^k}\vecone\{\sigma_i = \SIGMA_h(y)\}\PSI(\sigma)\prod_{j\neq i}\vec\mu'_j(\sigma_j)\right\rangle_{\G_{T,s,t}^{*,1}}\right]
\end{align*}
Once again we simplify the expression by using that the distributions $\check\SIGMA$ and $\SIGMA_h$, $h\ge 1$ coincide 
\begin{align*}
&\Erw\left[\vec\nu_s(y)\left(1-\sum_{\tau\in\Omega^k}\vecone\{\tau_i=\check\SIGMA(x)\}\PSI(\tau)\prod_{j\neq i}\vec\mu'_j(\tau_j)\right)\right.\\
&\qquad\qquad\qquad\qquad\left.\cdot\left\langle\prod_{h=1}^l1-\sum_{\sigma\in\Omega^k}\vecone\{\sigma_i = \SIGMA_h(y)\}\PSI(\sigma)\prod_{j\neq i}\vec\mu'_j(\sigma_j)\right\rangle_{\G_{T,s,t}^{*,1}}\right]\\
=&\Erw\left[\vec\nu_s(y)\left\langle\prod_{h=1}^{l+1}1-\sum_{\sigma\in\Omega^k}\vecone\{\sigma_i = \SIGMA_h(y)\}\PSI(\sigma)\prod_{j\neq i}\vec\mu'_j(\sigma_j)\right\rangle_{\G_{T,s,t}^{*,1}}\right]
\end{align*}
and obtain
\begin{align*}
&\Erw\left[\ln \langle \vec\psi_{\vec b}(\SIGMA(\partial \vec b))\rangle_{\G_{T,s,t}^{*,1}}\right]\\
=&-\frac{1}{nk\xi}\sum_{y\in[n]}\sum_{i\in[k]}\Erw\left[\vec\nu_s(y)\left\langle1-\sum_{\sigma\in\Omega^k}\vecone\{\sigma_i = \SIGMA_1(y)\}\PSI(\sigma)\prod_{j\neq i}\vec\mu'_j(\sigma_j)\right\rangle_{\G_{T,s,t}^{*,1}}\right]\\
&+\sum_{l\ge 2}\frac{1}{nk\xi l(l-1)}\sum_{y\in[n]}\sum_{i\in[k]}\Erw\left[\vec\nu_s(y)\left\langle\prod_{h=1}^l1-\sum_{\sigma\in\Omega^k}\vecone\{\sigma_i = \SIGMA_1(y)\}\PSI(\sigma)\prod_{j\neq i}\vec\mu'_j(\sigma_j)\right\rangle_{\G_{T,s,t}^{*,1}}\right]\\
=&-\frac{1-\xi}{\xi}\\
&+\sum_{l\ge 2}\frac{1}{nk\xi l(l-1)}\sum_{y\in[n]}\sum_{i\in[k]}\Erw\left[\vec\nu_s(y)\left\langle\prod_{h=1}^l1-\sum_{\sigma\in\Omega^k}\vecone\{\sigma_i = \SIGMA_1(y)\}\PSI(\sigma)\prod_{j\neq i}\vec\mu'_j(\sigma_j)\right\rangle_{\G_{T,s,t}^{*,1}}\right]
\end{align*}
by employing \textbf{SYM} in the final equation.

Finally let us derive \eqref{eq:ADeltaPP}. By definition $\Gamma_{s,t}=\xi^{-1}(s+t-1)\beta(k-1)\Erw[\Lambda(\sum_{\tau\in\Omega^k}\PSI(\tau)\prod_{j=1}^k\vec\mu^{(\pi)}_j(\tau_j))]$, where we write $\vec\mu^{(\pi)}_1,\vec\mu^{(\pi)}_2,\ldots$ for independent samples from $\pi$. As in the previous cases, we perform the same procedure of expanding the logarithm and simplifying the telescopic sum to obtain
\begin{align*}
&\frac{1}{\beta(k-1)}\frac{\partial}{\partial t}\Gamma_{s,t}=\xi^{-1}\Erw\left[\Lambda\left(\sum_{\tau\in\Omega^k}\PSI(\tau)\prod_{j=1}^k\vec\mu^{(\pi)}_j(\tau_j)\right)\right]\\
=&\xi^{-1}\sum_{l\ge1}\frac{1}{l}\Erw\left[\left(1-\sum_{\tau\in\Omega^k}\PSI(\tau)\prod_{j=1}^k\vec\mu^{(\pi)}_j(\tau_j)\right)\left(1-\sum_{\tau\in\Omega^k}\PSI(\tau)\prod_{j=1}^k\vec\mu^{(\pi)}_j(\tau_j)\right)^l\right.\\
&\left.\qquad\qquad\qquad\qquad\qquad\qquad\qquad\qquad\qquad-\left(1-\sum_{\tau\in\Omega^k}\PSI(\tau)\prod_{j=1}^k\vec\mu^{(\pi)}_j(\tau_j)\right)^l\right]\\
=&-\xi^{-1}\Erw\left[1-\sum_{\tau\in\Omega^k}\PSI(\tau)\prod_{j=1}^k\vec\mu^{(\pi)}_j(\tau_j)\right]+\xi^{-1}\sum_{l\ge 2}\frac{1}{l(l-1)}\Erw\left[\left(1-\sum_{\tau\in\Omega^k}\PSI(\tau)\prod_{j=1}^k\vec\mu^{(\pi)}_j(\tau_j)\right)^l\right]\\
=&-\frac{1-\xi}{\xi}+\frac{1}{\xi}\sum_{l\ge 2}\frac{1}{l(l-1)}\Erw\left[\left(1-\sum_{\tau\in\Omega^k}\PSI(\tau)\prod_{j=1}^k\vec\mu^{(\pi)}_j(\tau_j)\right)^l\right].
\end{align*}
\end{proof}

\begin{proof}[Proof of Proposition \ref{prop:derivIsXi}]
The assertion is now immediate as \eqref{eq:derivAsDelta} and Claim \ref{claim:calculations} yield
\begin{align*}
\frac{\partial}{\partial t}\Phi_{T,s}(t)=n^{-1}\left[o_T(1)+\beta\Delta_{s,t}-k\beta\Delta'_{s,t}+\beta(k-1)\Delta''_{s,t}\right]=o_T(1)n^{-1}+\frac{\beta}{n}\sum_{l\ge 2}\frac{\Xi_{s,t,l}}{l(l-1)}.
\end{align*}
\end{proof}

We complete the proof of Proposition \ref{prop:derivBound} by comparing the expressions $\Xi_{s,t,l}$ to a non-negative value given by \textbf{POS}. To this end let $\vec\rho_1,\vec\rho_2,\ldots$ be a sequence of independently samples from $\pi_{\G^*_{T,s,t}}$, let $\vec\mu_1,\vec\mu_2,\ldots$ be independently chosen from $\pi$. Define
\begin{align}
\Xi'_{s,t,l}=&\Erw\left[\left(1-\sum_{\sigma\in\Omega^k}\PSI(\sigma)\prod_{i=1}^k\vec\rho_i(\tau_i)\right)^l+(k-1)\left(1-\sum_{\tau\in\Omega^k}\vec\psi(\tau)\prod_{i=1}^k\vec\mu_i(\tau_i)\right)^l\right.\nonumber\\
&\left.\qquad-\sum_{i=1}^k\left(1-\sum_{\tau\in\Omega^k}\vec\psi(\tau)\vec\rho_1(\tau_1)\prod_{j\neq i}\vec\mu_j(\tau_j)\right)^l\right].\label{eq:XiPrime}
\end{align}
Let $\SIGMA$ be a chosen from $\mu_{\G^*_{T,s,t}}$. By Lemma \ref{lem:nishimori2} the pairs $(\SIGMA^*,\G^*_{T,s,t})$ and $(\SIGMA,\G^*_{T,s,t})$ are identically distributed. Thus, for any choice of $s,t$ the mean of the empirical marginal distribution $\pi_{\G^*_{T,s,t}}$ is given by the uniform distribution. As \textbf{POS} holds, expanding $\Lambda$ shows that the expression \eqref{eq:XiPrime} is non-negative for any $l\ge0$. 
\begin{proof}[Proof of Proposition \ref{prop:derivBound}]
With this consideration the assertion follows from Proposition \ref{prop:derivIsXi} by verifying
\begin{align}
\label{eq:XiAreClose}
|\Xi_{s,t,l}-\Xi'_{s,t,l}|=o_T(1)
\end{align}
for any choice of $s,t$ and $l$.

For the first and third summand in \eqref{eq:XiPrime} we will use the Pinning Lemma to sufficiently regularize the underlying graph models, such that exchangeability with the corresponding term from $\Xi_{s,t,l}$ is possible. For the time being, consider the first suammnd. Observe that for independently uniform choices $\vec y_1,\vec y_2,\ldots$ among the variable nodes $[n]$, we can write
\begin{align*}
&\Erw\left[\left(1-\sum_{\sigma\in\Omega^k}\PSI(\sigma)\prod_{i=1}^k\vec\rho_i(\sigma_i)\right)^l\bigg|\G^*_{T,s,t}\right]\\
=&\Erw\left[\prod_{i=1}^k\vec\nu_s(\vec y_i)\left(1-\sum_{\sigma\in\Omega^k}\PSI(\sigma)\prod_{i=1}^k\mu_{\G^*_{T,s,t}}(\sigma_i)\right)^l\bigg|\G^*_{T,s,t}\right].
\end{align*}
Hence, for any $\psi\in\Psi$ the triangle inequality and the Cauchy-Schwarz inequality yield
\begin{align}
&\left|\frac{1}{n^k}\sum_{y_1,\ldots,y_k}\prod_{i=1}^k\vec\nu_s(y_i)\langle 1-\psi(\SIGMA(y_1),\ldots,\SIGMA(y_k)\rangle^l_{\G^*_{T,s,t}}-\Erw\left[\left(1-\sum_{\sigma\in\Omega^k}\psi(\sigma)\prod_{i=1}^k\vec\rho_i(\sigma_i)\right)^l\bigg|\G^*_{T,s,t}\right]\right|\nonumber\\
\le&\frac{1}{n^k}\sum_{y_1,\ldots,y_k}\prod_{i=1}^k\vec\nu_s(y_i)\left|\vphantom{\Erw\left[\left(1-\sum_{\sigma\in\Omega^k}\psi(\sigma)\prod_{i=1}^k\mu_{\G^*_{T,s,t}}(\sigma_i)\right)^l|\G^*_{T,s,t}\right]}\langle 1-\psi(\SIGMA(y_1),\ldots,\SIGMA(y_k)\rangle^l_{\G^*_{T,s,t}}\right.\nonumber\\
&\left.\qquad\qquad\qquad\qquad\qquad\qquad-\Erw\left[\left(1-\sum_{\sigma\in\Omega^k}\PSI(\sigma)\prod_{i=1}^k\mu_{\G^*_{T,s,t}}(\sigma_i)\right)^l\bigg|\G^*_{T,s,t}\right]\right|\nonumber\\
\le&n^{-k}\left(\sum_{y_1,\ldots,y_k}\prod_{i=1}^k\vec\nu_s(y_i)^2\right)^{1/2}\label{eq:degreenorm}\\
&\cdot\left(\sum_{y_1,\ldots,y_k}\left|\langle 1-\psi(\SIGMA(y_1),\ldots,\SIGMA(y_k)\rangle^l_{\G^*_{T,s,t}}-\Erw\left[\left(1-\sum_{\sigma\in\Omega^k}\psi(\sigma)\prod_{i=1}^k\mu_{\G^*_{T,s,t}}(\sigma_i)\right)^l\bigg|\G^*_{T,s,t}\right]\right|^2\right)^{1/2}.\nonumber
\end{align}
We can rewrite \eqref{eq:degreenorm} as the ratio between the 2- and 1-norm of the vector $(\vec \delta_s(v))_{v\in n}$. The expression is maximized by sparse vectors, where each vector contains at most $\sum_v\vec\delta_1(v)/(\alpha \max_v \vec\delta_1(v))$ non-zero entries. Hence, we can bound the factor \eqref{eq:degreenorm} by
\begin{align*}
n^{-k}\left(\sum_{y_1,\ldots,y_k}\cdot\prod_{i=1}^k\vec \nu_s(y_i)^2\right)^{1/2}\le n^{-k}\left(\frac{\sqrt{\sum_{v}\vec\delta_s(v)^2}}{\sum_v\vec\delta_s(v)}\right)^k\le\frac{O(1)}{n^{k/2}}.
\end{align*}
Moreover, as each $\psi$ evaluates in $(0,2)$, Lemma \ref{lem:nishimori2} implies that for any $C>0, \eps>0, l\ge1$ there is $\delta>0$ such that if $\G^*_{T,s,t}$ is $\delta$-symmetric, we have
\begin{align*}
C n^{-k}\sum_{y_1,\ldots,y_k}\left|\langle 1-\psi(\SIGMA(y_1),\ldots,\SIGMA(y_k)\rangle^l_{\G^*_{T,s,t}}-\Erw\left[\left(1-\sum_{\sigma\in\Omega^k}\psi(\sigma)\prod_{i=1}^k\mu_{\G^*_{T,s,t}}(\sigma_i)\right)^l\bigg|\G^*_{T,s,t}\right]\right|<\eps.
\end{align*}
In the same way, for any $\psi\in\Psi$ and $i\in [k]$ we can bound
\begin{align*}
&\left|\frac{1}{n}\sum_{y}\prod_{i=1}^k\vec\nu_s(y_i)\langle 1-\sum_{\tau\in\Omega^k}\psi(\tau)\vecone\{\tau_i=\SIGMA_i(y)\}\prod_{j\neq i}\vec\mu_j(\tau_j)\rangle^l_{\G^*_{T,s,t}}\right.\\
&\left.\qquad\qquad\qquad\qquad\qquad\qquad-\Erw\left[\left(1-\sum_{\tau\in\Omega^k}\psi(\tau)\vec\rho_1(\tau_i)\prod_{j\neq i}\vec\mu_j(\tau_j)\right)^l\bigg|\G^*_{T,s,t}\right]\right|<\eps.
\end{align*}
By Lemma \ref{lem:PinningLemma} we can guarantee that $\G^*_{T,s,t}$ is $o_T(1)$-symmetric with probability $1-o_T(1)$, which implies \eqref{eq:XiAreClose} with probability $1-o_T(1)$.
\end{proof}

\subsection*{Acknowledgement} We thank Amin Coja-Oghlan for helpful discussions.

\newpage


\begin{thebibliography}{29}

\bibitem{Abbe}
E. Abbe: 
Community detection and stochastic block models: recent developments. 
arXiv preprint, arXiv:1703.10146 (2017).

\bibitem{CRF}
E. Abbe, A. Montanari:
Conditional random fields, planted constraint satisfaction and entropy concentration.
Theory of Computing \textbf{11} (2015), 413--443.
   
\bibitem{AJM}
D. Achlioptas, H. Jia, C. Moore:
Hiding satisfying assignments: Two are Better than One.
Journal of Artificial Intelligence Research \textbf{24}, (2005), 623-639.

\bibitem{AHMU}
D. Achlioptas, H. Hassani, N. Macris, R. Urbanke: 
Bounds for random constraint satisfaction problems via spatial coupling. 
Proceedings of the 27th Annual ACM-SIAM Symposium on Discrete Algorithms, SODA (2016), 469–479.

\bibitem{ASS}
M. Aizenman, R. Sims, S. Starr: 
An extended variational principle for the SK spin-glass model. 
Physical Review B \textbf{68}(21), (2003).
  
\bibitem{BanksMooreNeemanNetra}
J. Banks, C. Moore, J. Neeman, P. Netrapalli: 
Information-theoretic thresholds for community detection in sparse networks. 
Proceedings of the 29th COLT, (2016), 383–416.  

\bibitem{BDM}
J. Barbier, M. Dia, N. Macris:
Threshold saturation of spatially coupled sparse superposition codes for all memoryless channels.
Information Theory Workshop (ITW), IEEE, (2016). 

\bibitem{BDMK}
J. Barbier, M. Dia, N. Macris, F. Krzakala, T. Lesieur, L. Zdeborová:
The mutual information in random linear estimation. 
Proccedings of the 54th Annual Allerton Conference, (2016), 625--632.

\bibitem{BDMKLZ}
J. Barbier, M. Dia, N. Macris, F. Krzakala, T. Lesieur, L. Zdeborová:
Mutual information for symmetric rank-one matrix estimation: A proof of the replica formula. 
In Advances in Neural Information Processing Systems (2016), 424--432.

\bibitem{BM}
J. Barbier, N. Macris:
I-MMSE relations in random linear estimation and a sub-extensive interpolation method.
arXiv preprint, arXiv:1704.04158, (2017).

\bibitem{BMVT}
E. Berlekamp, R.J. McEliece, H. C A Van Tilborg:
On the inherent intractability of certain coding problems.
IEEE Transactions on Information Theory \textbf{24}(3), (1978), 384–-386.
  
\bibitem{BGT}
C. Berrou, A. Glavieux, P. Thitimajshima:
Near Shannon limit error-correcting coding and decoding: Turbo-codes. 1.
Proceedings of the IEEE International Conference on Commununication Geneva (1993), 1064-–1070.

\bibitem{BLM}
C. Bordenave, M. Lelarge, L. Massoulié: 
Non-backtracking spectrum of random graphs: community detection and non-regular Ramanujan graphs. 
Procedings of the 56th Annual Symposium on Foundations of Computer Science, FOCS (2015), 1347–1357.
  

\bibitem{CKPZ}
A. Coja-Oghlan, F. Krzakala, W. Perkins, L. Zdeborova:
Information-Theoretic Thresholds from the Cavity Method.
Proceedings of the 49th Annual ACM SIGACT Symposium on Theory of Computing STOC (2017), 146--157.

\bibitem{FPV}
V. Feldman, W. Perkins, S. Vempala: 
On the complexity of random satisfiability problems with planted solutions. 
Proceedings of the 47th Annual ACM on Symposium on Theory of Computing STOC, (2015), 77–86.
  
\bibitem{FranzLeoneMontanariRicci}
S. Franz, M. Leone, A. Montanari, F. Ricci-Tersenghi:
Dynamic phase transition for decoding algorithms.
Physical Review E \textbf{66}(4), (2002).

\bibitem{GZCMP}
A. Clauset, A. Ghasemian, C. Moore, L. Peel, P. Zhang:
Detectability thresholds and optimal algorithms for community structure in dynamic networks.
Physical Review X, \textbf{6}(3), (2016).

\bibitem{GMU}
A. Giurgiu,  N. Macris, R. Urbanke:
Spatial coupling as a proof technique and three applications.
IEEE Transactions on Information Theory \textbf{62}(10), (2016), 5281--5295.
  
\bibitem{GT} F. Guerra, F. L. Toninelli:
The thermodynamic limit in mean field spin glasses.
Communications in Mathematical Physics \textbf{230} (2002), 71–-79.

\bibitem{HMU}
S.H. Hassani, N. Macris, R. Urbanke:
Threshold saturation  in  spatially  coupled  constraint  satisfaction problems.
Journal of Statistical Physics \textbf{150}(5), (2013), 807--850.

\bibitem{JMS}
H. Jia, C. Moore, D. Strain:
Generating hard satisfiable formulas by hiding solutions deceptively.
Journal of Artifical Intelligence Research \textbf{28}, (2007), 107 -- 118.

\bibitem{KKMSZ}
Y. Kabashima, F. Krzakala, M. Mézard, A. Sakata, L. Zdeborová:
Phase transitions and sample complexity in bayes-optimal matrix factorization. 
IEEE Transactions on Information Theory, \textbf{62}(7), (2016), 4228--4265.

\bibitem{KM1}
S. Kudekar, N. Macris:
Proof of replica formulas in the high noise regime for communication using LDGM codes.
Information Theory Workshop, (2008), 416--420.

\bibitem{KM2}
S. Kudekar, N. Macris:
Sharp bounds for optimal decoding of low-density parity-check codes.
IEEE Transactions on Information Theory \textbf{55}(10) (2009), 4635--4650.

\bibitem{KYMP}
S. Kumar, A. J. Young, N. Macris and H. D. Pfister:
Threshold Saturation for Spatially Coupled LDPC and LDGM Codes on BMS Channels.
IEEE Transactions on Information Theory \textbf{60}, (2014), 7389--7415
  
\bibitem{LelargeMiolane}
M. Lelarge, L. Miolane:
Fundamental limits of symmetric low-rank matrix estimation.
Proceedings of the 2017 Conference on Learning Theory, PMLR \textbf{65}, (2017), 1297--1301.

\bibitem{LMX}
M. Lelarge, L. Massoulié, J. Xu:
Reconstruction in the Labeled Stochastic Block Model.
IEEE Transactions on Network Science and Engineering. \textbf{2}(4), (2015), 152-163.

\bibitem{LMSS1}
M. Luby, M. Mitzenmacher, A. Shokrollahi, D.A. Spielman:
Analysis of low density codes and improved designs using irregular graphs. 
Proceedings of the 30th Annual ACM Symposium on Theory of Computing, STOC (1998), 249-–258.  

\bibitem{LMSS2}
M. Luby, M. Mitzenmacher, A. Shokrollahi, D.A. Spielman:
Efficient erasure correcting codes.
IEEE Transactions on Information Theory, \textbf{47} (2001), 569-–584.

\bibitem{LMSS3}
M. Luby, M. Mitzenmacher, A. Shokrollahi, D.A. Spielman:
Improved low-density parity-check codes using irregular graphs.
IEEE Transactions on Information Theory, \textbf{47} (2001), 585–-598.

\bibitem{LMSSS}
M. Luby, M. Mitzenmacher, A. Shokrollahi, D.A. Spielman, V. Stemann:
Practical loss-resilient codes.
Proceedings of the 29th annual ACM Symposium on Theory of Computing STOC (1997), 150–-159.
  
\bibitem{MKMZ}
A. Manoel, F. Krzakala, M. Mézard, L. Zdeborová:
Multi-Layer Generalized Linear Estimation.
arXiv preprint, arXiv:1701.06981 (2017).
  
\bibitem{Massoulie}
L. Massoulié:
Community detection thresholds and the weak Ramanujan property. 
Proceedings of the 46th Annual ACM Symposium on Theory of Computing STOC, (2014), 694-–703.


\bibitem{montanari}
A. Montanari:
Tight bounds for LDPC and LDGM codes under MAP decoding.
IEEE Transactions on Information Theory \textbf{51}, (2005), 3221--3246.

\bibitem{Moore}
C. Moore: 
The computer science and physics of community detection: landscapes, phase transitions, and hardness.
arXiv preprint, arXiv:1702.00467 (2017).

\bibitem{RSU}
T. Richardson, A. Shokrollahi, R. Urbanke:
Design of capacity-approaching irregular low-density parity-check codes.
IEEE Transactions on Information Theory \textbf{47}, (2001), 619-–637.

\bibitem{RU}
T. Richardson, R. Urbanke:
The capacity of low-density parity check codes undermessage-passing decoding.
IEEE Transactions on Information Theory \textbf{47}, (2001), 599–-618.

\bibitem{thebook}
T. Richardson, R. Urbanke:
Modern Coding Theory.
Cambridge University Press 2008.

\bibitem{Shannon}
C.E. Shannon: 
A Mathematical Theory of Communication. 
Bell System Technical Journal \textbf{27}, (1948), 379-423, 623-656.

\bibitem{Sourlas1}
N. Sourlas: 
Spin-glass models as error-correcting codes. 
Nature \textbf{339}, (1989), 693-694.

\bibitem{Sourlas2}
N. Sourlas: 
Statistical Mechanics and Error-Correcting Codes. 
Statistical Mechanics of Neural Networks, Lecture Notes in Physics \textbf{368}, edited by L. Garrido, (1990), Springer New York.

\bibitem{Sourlas3}
N. Sourlas: 
Statistical Mechanics and Error-Correcting Codes. 
From Statistical Physics to to Statistical Inference and Back, edited by P. Grassberger and J.-P. Nadal, (1994), Kluwer Academic, Dordrecht.

\end{thebibliography}
\end{document}